\newtheorem{theorem}{Theorem}[section]
\newtheorem{lemma}[theorem]{Lemma}
\newtheorem{definition}[theorem]{Definition}
\newtheorem{corollary}[theorem]{Corollary}
\newtheorem{example}[theorem]{Example}
\numberwithin{equation}{section}
\newcommand*{\ROT}[1]{\raisebox{\depth}{\scalebox{1}[-1]{\hbox{$#1\mathstrut$}}}}
\newcommand*{\SHORTWORDS}{L_1}
\newcommand*{\FORESTS}{\mathcal{F}}
\newcommand*{\RULETREES}{\mathcal{R}}
\newcommand*{\RULESET}{\mathcal{R}}
\newcommand*{\DEVTREES}{\mathcal{D}}
\newcommand*{\TREELANG}{T}
\newcommand*{\STRINGLANG}{L}
\newcommand*{\LEXICON}{\mathcal{L}}
\newcommand*{\SPINES}{{\mathcal S}}
\newcommand*{\ARGUMENTS}[2][]{\mathcal{A}(#2\ifx\\#1\\\else, #1\fi)}
\newcommand*{\CATEGORIES}[2][]{\mathcal{C}(#2\ifx\\#1\\\else, #1\fi)}
\newcommand*{\FOCATEGORIES}[2][]{\mathcal{C}_{f}(#2\ifx\\#1\\\else, #1\fi)}
\newcommand*{\HCATEGORIES}[3][]{\mathcal{C}^{#2}(#3\ifx\\#1\\\else, #1\fi)}
\newcommand*{\CATEGORYSEQ}{\mathcal{C}_{\mathcal G}}
\newcommand*{\CCG}{CCG}
\newcommand*{\CONTEXTS}[1]{C_{#1}}
\newcommand*{\DERIVES}{\Rightarrow}
\newcommand*{\POS}{\mathrm{pos}}
\newcommand*{\LEAVES}{\mathrm{leaves}}
\newcommand*{\TREES}[1]{T_{#1}}
\newcommand*{\NONTERMINALS}{N}
\newcommand*{\TERMINALS}{\Sigma}
\DeclareMathOperator*{\TARGET}{tar}
\newcommand*{\AUTOMATON}{\mathcal{A}}
\DeclareMathOperator*{\SECONDARY}{sec}
\DeclareMathOperator*{\Pref}{Pref}
\newcommand*{\GENERATOR}{\mathrm{gen}}
\DeclareMathOperator*{\ARITY}{ar}
\DeclareMathOperator*{\ARGUMENT}{arg}
\newcommand*{\STATEOUTPUT}{\tau}
\newcommand*{\STATEOUTPUTEXTEND}{\tau'}
\newcommand*{\SLASHDIR}{\mathrm{slash}}
\newcommand*{\COMB}{\mathrm{comb}}
\newcommand*{\TOPCATL}{\top_{\!\!\mathcal A}}
\newcommand*{\TOPCATS}{\top_{\!\!L_1}}
\newcommand*{\RESTRCAT}{C_{\mathrm{wf}}}
\newcommand*{\ANNOA}{\frac{\alpha_2}{\overline{a} \enspace {s}}}
\newcommand*{\ANNOB}{\frac{\beta_2}{{s}  \enspace \overline{b}}}
\newcommand*{\ANNOC}{\frac{\gamma_2}{{s} \enspace \overline{c}}}
\newcommand*{\ANNOE}{\frac{\eta_2}{\overline{e}   \enspace \overline{b}}}
\providecommand*{\nat}[0]{\ensuremath{\mathbb{N}}}
\providecommand*{\natp}[0]{\ensuremath{\mathbb{N}_{\scriptscriptstyle+}}}
\providecommand*{\seq}[3]{\ensuremath{#1_{#2}, \dotsc, #1_{#3}}}
\providecommand*{\word}[3]{\ensuremath{#1_{#2} \dotsm #1_{#3}}}
\providecommand*{\abs}[1]{\ensuremath{\vert #1 \rvert}}
\providecommand*{\SBox}[0]{\ensuremath{\mathchoice%
    {{\scriptstyle \Box}}%
    {{\scriptstyle \Box}}%
    {{\scriptscriptstyle \Box}}%
    {{\scriptscriptstyle \Box}}%
}}
\providecommand*{\Pplus}[0]{\ensuremath{\mathcal P_{\!\!\scriptscriptstyle+}}}
\DeclareMathOperator{\pos}{pos}
\DeclareMathOperator{\type}{cat}
\DeclareMathOperator{\attach}{attach}
\DeclareMathOperator{\Next}{Next}
\DeclareMathOperator{\RETURN}{pop}
\DeclareMathOperator{\yield}{yield}
\renewcommand*\env@matrix[1][c]{\hskip -\arraycolsep
  \let\@ifnextchar\new@ifnextchar
  \array{*\c@MaxMatrixCols #1}}
\def\redpos{2.5}
\def\redcol{black}
\def\redposside{(3,-1.25)}
\tikzset{terminalrule/.code n args={2}{
\node at (0,-0.5) {#1}[level distance=2em];
\draw[->] (0.5,-0.5) -- (1,-0.5);
\node at (1.5,-0.5) {#2};
}}
\tikzset{siderulel/.code n args={4}{
\node at (0,-0.5) {#1}[level distance=2em]
    child {node {$\SBox$}};
\draw[->] (0.5,-0.75) -- (1,-0.75);
\node at (1.75,-0.5) {#2} [grow'=down,level distance=2em,sibling distance=2em]
    child {node {$\SBox$}}
    child {node {#3}}
    ;
\node [\redcol] at \redposside {#4} [grow'=down,\redcol,level
distance=2em];
}}
\tikzset{sideruler/.code n args={4}{
\node at (0,-0.5) {#1}[level distance=2em]
    child {node {$\SBox$}};
\draw[->] (0.5,-0.75) -- (1,-0.75);
\node at (1.75,-0.5) {#2} [grow'=down,level distance=2em,sibling distance=2em]
    child {node {#3}}
    child {node {$\SBox$}}
    ;
\node [\redcol] at \redposside {#4} [grow'=down,\redcol,level
distance=2em];
}}
\tikzset{spinestart/.code n args={5}{
\node at (0,-0.5) {#1}[level distance=2em];
\draw[->] (0.5,-0.5) -- (1,-0.5);
\node at (1.5,0) {#2} [grow'=down,level distance=2em]
    child {node {#3}
    };
\node [\redcol] at (\redpos,0) {#4} [grow'=down,\redcol,level distance=2em]
    child {node {#5} edge from parent [draw=none]
    };
}}
\tikzset{spinestart2/.code n args={6}{
\node at (0,-0.5) {#1}[level distance=2em];
\draw[->] (0.5,-0.5) -- (1,-0.5);
\node at (1.5,0) {#2} [grow'=down,level distance=2em]
    child {node {#3}
    };
\node [\redcol] at (\redpos,0) {#4} [grow'=down,\redcol,level distance=2em]
    child {node {#5} edge from parent [draw=none]
    };
\node [\redcol] at (0.3,0) {#6};
}}
\tikzset{chain2/.code n args={5}{
\node at (0,-0.25) {#1}[level distance=2em]
    child {node {$\SBox$}};
\draw[->] (0.5,-0.5) -- (1,-0.5);
\node at (1.5,0) {#2} [grow'=down,level distance=2em]
    child {node {#3}
            child {node {$\SBox$}}
    };
\node [\redcol] at (\redpos,0) {#4} [grow'=down,\redcol,level distance=2em]
    child {node {#5} edge from parent [draw=none]
    };
}}
\tikzset{chain3/.code n args={7}{
\node at (0,-0.5) {#1}[level distance=2em]
    child {node {$\SBox$}};
\draw[->] (0.5,-0.75) -- (1,-0.75);
\node at (1.5,0) {#2} [grow'=down,level distance=2em]
    child {node {#3}
        child {node {#4}
            child {node {$\SBox$}}
        }
    };
\node [\redcol] at (\redpos,0) {#5} [grow'=down,\redcol,level distance=2em]
    child {node {#6} edge from parent [draw=none]
        child {node {#7} edge from parent [draw=none]}
    };
}}
\tikzset{chain4/.code n args={9}{
\node at (0,-1) {#1}[level distance=2em]
    child {node {$\SBox$}};
\draw[->] (0.5,-1.25) -- (1,-1.25);
\node at (1.5,0) {#2} [grow'=down,level distance=2em]
    child {node {#3}
        child {node {#4}
            child {node {#5}
                child {node {$\SBox$}}
            }
        }
    };
\node [\redcol] at (\redpos,0) {#6} [grow'=down,\redcol,level distance=2em]
    child {node {#7} edge from parent [draw=none]
        child {node {#8} edge from parent [draw=none]
            child {node {#9} edge from parent [draw=none]}
        }
    };
}}
\begin{document}

\title{Strong Equivalence of TAG and CCG}
\thanks{
This is a revised and extended version of \emph{Strong
Equivalence of TAG and CCG}, appearing in
\emph{Transactions of the Association for Computational Linguistics 9 (2021):
pp.\ 707-720}.}

\author[A.~Maletti]{Andreas Maletti}
\address{Faculty of Mathematics and Computer Science,
Universität Leipzig, P.O. box 100\,920, D-04009 Leipzig, Germany}
\email{\{maletti,schiffer\}@informatik.uni-leipzig.de}

\author[L.~K.~Schiffer]{Lena Katharina Schiffer}
\thanks{The work of Lena Katharina Schiffer was funded by the German Research
Foundation (DFG) Research Training Group GRK 1763 'Quantitative Logics and
Automata'.}

\subjclass[2020]{68Q45, 68Q42, 68T50, 03B65, 91F20}

\keywords{combinatory categorial grammar, tree-adjoining grammar, spine
  grammar, linear context-free tree grammar, tree languages,
  generative power, strong equivalence}

\begin{abstract}
  Tree-adjoining grammar (TAG) and combinatory categorial grammar
  (CCG) are two well-established mildly context-sensitive grammar
  formalisms that are known to have the same expressive power on
  strings (i.e., generate the same class of string languages).  It is
  demonstrated that their expressive power on trees also essentially
  coincides.  In fact, CCG without lexicon entries for the empty
  string and only first-order rules of degree at most 2 are sufficient
  for its full expressive power.
\end{abstract}

\maketitle

\section{Introduction}

Combinatory categorial grammar
(CCG)~\cite{steedman2000syntactic,steedman2011combinatory} is one of
several grammar formalisms that were introduced as an extension of
context-free grammars.  In particular, CCG extends the classical
categorial grammar~\cite{bargaisha60}, which has the same expressivity
as context-free grammar, by rules that are inspired by combinatory
logic~\cite{curry1958combinatory}.
CCG is a mildly context-sensitive grammar formalism~\cite{jos85},
which are formalisms that are efficiently parsable (i.e., in
polynomial time) and have expressivity beyond the context-free
languages.
Additionally, they are able to express a limited amount of cross-serial
dependencies and have the
constant growth property.
Due to these features and its notion of syntactic categories,
which is quite intuitive for natural languages, CCG has become widely
applied in compuational linguistics.
Further, it can be enhanced by semantics through the lambda
calculus~\cite{steedman2000syntactic}.

CCG is based on a lexicon and a rule system.  The lexicon assigns
syntactic categories to the symbols of an input string and the rule
system describes how neighboring categories can be combined to new 
categories.  Each category has a target, which is similar to the
return type of a function, and optionally, a number of
arguments.  Different from functions, each argument has a
directionality that indicates if it is expected on the left or the
right side.  If repeated combination of categories leads to
a (binary) derivation tree that comprises all input symbols and is
rooted in an initial category, then the input string is accepted.

When defining CCG, there are many degrees of freedom yielding a number
of different variants~\cite{steedman2000syntactic,bal02,steedman2011combinatory,kuhlmann2015lexicalization}.
This is a consequence of the linguistically motivated need to easily
express specific structures that have been identified in a particular
theory of syntax for a given natural language.  However, we and
others~\cite{kuhlmann2015lexicalization} are interested in the
expressive power of CCG as generators of formal languages, since this
allows us to disentangle the confusion of subtly different formalisms
and identify the principal structures expressible by a common core of
the formalisms.  As linguistic structure calls for a representation
that goes beyond strings, we aim for a characterization of expressive
power in terms of the generated trees.

The most famous result on the expressive power of CCG is
by~Vijay"-Shanker and Weir~\cite{vijayshanker1994equivalence} and
shows that tree-adjoining grammar~(TAG), linear indexed grammar~(LIG),
head grammar~(HG), and CCG generate the same string languages.  An 
equivalent automaton model is the embedded push-down
automaton~\cite{vijayashanker1988study}.  In the definition of CCG
used by Vijay"-Shanker and Weir~\cite{vijayshanker1994equivalence},
the lexicon allows $\varepsilon$-entries, which assign syntactic
categories to the empty string~$\varepsilon$.  Their rule system
restricts rules to specific categories and limits the rule degree.
CCG with unbounded rule degree are
Turing-complete~\cite{kuhsatjon18}.
Prefix-closed CCG without
target restrictions, in which the rules obey special closure
properties, are less powerful. This even holds for multimodal
CCG~\cite{kuhlmann2010importance,kuhlmann2015lexicalization}, which
allow many types of directionality indicators (i.e.~slashes beyond
those for left and right).

When going beyond the level of string languages, there exist different
notions of strong generative power.  We regard two formalisms as
strongly equivalent if their generated derivation tree languages
coincide.
Oftentimes, we will consider strong equivalence modulo relabeling.
For example, the well-known local and regular tree
grammars~\cite{gecste97} are strongly equivalent modulo relabeling.
On the other hand, Hockenmaier and
Young~\cite{hockenmaier2008nonlocal} regard two formalisms as strongly
equivalent if they capture the same sets of dependencies.  Then there
exist specific scrambling cases whose dependencies can be expressed by
their CCG, but not by Lexicalized TAG (LTAG).   Their CCG are
syntactically more expressive than ours and allow type-raising,
whereas the strong generative capacity (in our sense) of LTAG is
strictly smaller than that of TAG~\cite{kuhsat12}.  The dependencies
expressed by CCG without rule restrictions and TAG are shown to be
incomparable by Koller and Kuhlmann~\cite{koller2009dependency}.
It has been shown that CCG is able to generate exactly the
\emph{separable permutations} (i.e.~permutations that can label the
leaves of a binary tree such that the leaf labels of each subtree are
a set of consecutive elements of the original order) of a specific
``natural order of dominance'', while TAG can also express
non"-separable permutations~\cite{stanojevic2021formal}.

Returning to our notion of strong generative capacity,
Kuhlmann, Maletti, and Schiffer~\cite{kuhlmann2019treegenerative,kuhmalsch2021}
investigated the tree-generative capacity of CCG without
$\varepsilon$"~entries.  The generated trees are always binary.  CCG
with application and first-degree composition rules generate exactly
the regular tree languages~\cite{gecste97}.  Without the composition
rules, only a proper subset can be generated.  The languages of CCG
rule trees (i.e.\@ trees labeled by applied rules instead of
categories) with bounded rule degree can also be generated by simple
monadic context-free tree grammar~(sCFTG).

For the converse direction we show that the tree languages generated
by sCFTG can also be generated by CCG, which shows strong equivalence.
This answers several open questions.  Since sCFTG and TAG are strongly
equivalent \cite{kepser2011equivalence}, our result also shows strong  
equivalence of CCG~and~TAG.  In contrast to the construction of
Vijay"-Shanker and Weir~\cite{vijayshanker1994equivalence}, which
relies heavily on $\varepsilon$"~entries, our construction avoids them
and shows that they do not increase the expressive power of CCG.
Additionally, we only use rules up to degree~$2$ and first-order
categories (i.e., arguments are atomic), which shows that larger rule
degree or higher-order categories do not increase the expressive
power.

Our construction proceeds roughly as follows.  We begin with a spine
grammar, which is a variant of sCFTG that is also strongly equivalent
to TAG (up to relabeling).  We encode its spines using a context-free
grammar, which in turn can be represented by a special variant of
push-down automata.  Finally, the runs of the push-down automaton are
simulated by a CCG such that the stack operations of the automaton are
realized by adding and removing arguments of the categories.

\section{Preliminaries}
The nonnegative integers are~$\nat$ and the positive integers
are~$\natp$.  For every $k \in \nat$, we let $[k] = \{i \in \nat \mid
1 \leq i \leq k\}$.  Given a set $A$, let $\mathcal P(A) = \{ A' \mid
A' \subseteq A \}$ be the power"-set of $A$. As usual, $\pi_i \colon
X_1 \times \cdots \times X_n \to X_i$ projects a tuple to its $i$"-th
component and is given by $\pi_i(\langle x_1,\dots,x_n \rangle) =
x_i$, where each $X_j$ with $j \in [n]$ is a set.  An \emph{alphabet}
is a finite set of symbols.  The set~$\Sigma^*$ contains all strings
over the alphabet~$\Sigma$ including the empty string~$\varepsilon$.
We let $\Sigma^{\scriptscriptstyle +} = \Sigma^* \setminus
\{\varepsilon\}$.  The length of $w \in \Sigma^*$ is~$\abs w$, and
concatenation is written as juxtaposition.  The
\emph{prefixes}~$\Pref(w)$ of a string $w \in \Sigma^*$ are $\{ u \in
\Sigma^* \mid \exists v \in \Sigma^* \colon w = uv\}$.  A \emph{string
  language} is a subset $L \subseteq \Sigma^*$.  Given a
relation~$\mathord{\Rightarrow} \subseteq S^2$, we let $\Rightarrow^*$
be the reflexive, transitive closure of~$\Rightarrow$.

\subsection{Representations of String Languages}

We briefly recall three standard formalisms for representing string
languages. We start with nondeterministic finite
automata~\cite{hopull79}.

\begin{definition}
  A \emph{nondeterministic finite automaton} (NFA) $\AUTOMATON = (Q,
  \Sigma, \delta, I, F)$ is a tuple consisting of
  \begin{enumerate}[label=(\roman*)]
  \item finite sets~$Q$ and $\Sigma$  of \emph{states} and
  \emph{input symbols}, respectively,
  \item a \emph{transition relation} $\delta \subseteq Q \times \Sigma
    \times Q$, and
  \item sets~$I, F \subseteq Q$ of \emph{initial} and \emph{final states},
  respectively.
  \end{enumerate}
\end{definition}
The transition relation can be extended to a function~$\hat{\delta}
\colon Q \times \Sigma^{*} \to \mathcal P(Q)$ given by
\begin{align*}
  \hat{\delta}(q, \varepsilon) = \{ q \} \qquad  \qquad
  \hat{\delta}(q, aw) = \bigcup_{(q,a,p) \in \delta} \hat{\delta}(p,w)
\end{align*}
for all $q \in Q$, $a \in \Sigma$, and $w \in \Sigma^{*}$.  The
language \emph{accepted} by a given NFA~$\AUTOMATON$ is defined as 
\begin{align*}
  L(\AUTOMATON) = \bigcup_{q \in I} \bigl\{ w \in
  \Sigma^{*} \mid \hat\delta (q, w) \cap F \neq \emptyset \bigr\} \enspace.
\end{align*}
Given a string language~$L$, if there exists a NFA~$\AUTOMATON$ with
$L(\AUTOMATON) = L$, then we call $L$ \emph{regular}.

Next, let us recall context-free grammars~\cite{ber79}.
\begin{definition}
A \emph{context-free grammar} (CFG) $\mathcal G = (N, \Sigma, S, P)$ 
consists of 
\begin{enumerate}[label=(\roman*)]
\item disjoint finite sets $N$ and $\Sigma$ of \emph{nonterminal} and
\emph{terminal symbols}, respectively,
\item a \emph{start nonterminal} $S \in N$, and
\item a finite set $P \subseteq N \times (N \cup \Sigma)^*$ of
  \emph{productions}.
\end{enumerate}
\end{definition}
In the following let $\mathcal G = (N, \Sigma, S, P)$ be a CFG. We
write productions $(n,r)$ as $n \to r$.  Given $n \to r \in P$ and
$u,v \in (N \cup \Sigma)^*$, we write $unv \DERIVES_{\mathcal G} urv$
and say that $unv$ \emph{derives} $urv$.  The language
\emph{generated} by $\mathcal G$ is $L(\mathcal G) =  \{ w \in
\Sigma^* \mid S \DERIVES_{\mathcal G}^* w \bigr\}$.  Given a string
language $L$, if there exists a CFG~$\mathcal G$ with $L(\mathcal G) =
L$, then we call $L$ \emph{context-free}.

Finally, let us recall push-down automata~\cite{autberboa97},
which accept the ($\varepsilon$-free) context-free languages.
For any alphabet~$\Sigma$ and special symbol~$\bot
\notin \Sigma$, we let $\Sigma_\bot = \Sigma \cup \{\bot\}$.
Moreover, we let~$\Sigma^{\leq 1}$ be the set of strings of length at
most~$1$ over the alphabet~$\Sigma$; i.e.,~$\Sigma^{\leq 1} =
\{\varepsilon\} \cup \Sigma$.

\begin{definition}
\label{def:pda}
  A \emph{push-down automaton}~(PDA) $\AUTOMATON = (Q, \Sigma, \Gamma,
  \delta, I, \bot, F)$ is a tuple consisting of
  \begin{enumerate}[label=(\roman*)]
  \item finite sets~$Q$, $\Sigma$, and $\Gamma$ of \emph{states},
  \emph{input symbols}, and \emph{stack symbols}, respectively,
  \item a set~$\delta \subseteq \bigl(Q \times \Sigma \times
    \Gamma_\bot^{\leq 1} \times \Gamma^{\leq 1} \times Q \bigr)
    \setminus \bigl(Q \times \Sigma \times \Gamma_\bot \times \Gamma
    \times Q \bigr)$ of \emph{transitions},
  \item sets~$I, F \subseteq Q$ of \emph{initial} and \emph{final
      states}, respectively, and
  \item an initial stack symbol~$\bot \notin \Gamma$.
  \end{enumerate}
\end{definition}
Given a PDA~$\AUTOMATON = (Q, \Sigma, \Gamma, \delta, I, \bot, F)$,
let~$\textrm{Conf}_\AUTOMATON = Q \times \Sigma^* \times \Gamma_\bot^*$
be the set of \emph{configurations}.
Intuitively speaking, in configuration~$\langle q, w, \alpha \rangle
\in \textrm{Conf}_\AUTOMATON$ the PDA~$\AUTOMATON$ is in state~$q$
with stack contents~$\alpha$ and still has to read the input
string~$w$.  The \emph{move relation}~$\mathord{\vdash_{\mathcal A}}
\subseteq \textrm{Conf}_\AUTOMATON^2$ is defined as follows:
\[
  \mathord{\vdash_{\AUTOMATON}} =
  \bigcup_{\substack{(q, a, \gamma, \gamma', q') \in \delta,\\
  w \in \Sigma^*,\, \alpha \in \Gamma^*_\bot}}
  \Bigl\{ \bigl(\langle q, aw, \gamma
  \alpha\rangle,\, \langle q', w, \gamma' \alpha\rangle \bigr)
  \in \textrm{Conf}_\AUTOMATON^2 \mid \gamma\alpha \neq \varepsilon
  \Bigr\} \enspace. \]
The configuration~$\langle q, w, \alpha\rangle$ is \emph{initial}
(respectively, \emph{final}) if
$q \in I$, $w \in \Sigma^{\scriptscriptstyle +}$, and $\alpha = \bot$
(respectively, $q \in F$, $w = \varepsilon$, and $\alpha =
\varepsilon$).
An \emph{accepting run} is a sequence~$\seq \xi0n \in
\textrm{Conf}_\AUTOMATON$ of configurations that are successively
related by moves (i.e., $\xi_{i-1} \vdash_\AUTOMATON \xi_i$ for all~$i
\in [n]$), that starts with an initial configuration~$\xi_0$, and
finishes in a final configuration~$\xi_n$.
In other words, we start in an initial state with $\bot$ on the stack
and finish in a final state with the empty stack, and for each
intermediate step there exists a transition.
An input string~$w \in \Sigma^{\scriptscriptstyle +}$ is
\emph{accepted} by $\AUTOMATON$ if there exists an accepting run
starting in $\langle q, w, \bot\rangle$ with $q \in I$.
The language $L(\AUTOMATON)$ \emph{accepted} by the PDA~$\AUTOMATON$
is the set of accepted input strings and thus given by
\begin{align*}
  L(\AUTOMATON) = \bigcup_{(q, q') \in I \times F} \bigl\{ w \in
  \Sigma^{\scriptscriptstyle +} \mid \langle q, w, \bot\rangle
  \vdash^*_\AUTOMATON \langle q', \varepsilon,
  \varepsilon\rangle \bigr\} \enspace.
\end{align*}
%
%
Note that our PDA are $\varepsilon$"~free (in the sense that each
transition induces moves that process an input symbol) and have
limited stack access: In each move we can pop a symbol, push a symbol,
or ignore the stack.  Note that we explicitly exclude the case, in
which a symbol is popped and another symbol is pushed at the same
time.
However, this restriction has no influence on the expressive power
(see~\cite[Corollary~12]{drodzikui19} for the weighted scenario; an
instantiation of the result with the Boolean semiring yields the
unweighted case).  We also note that no moves are possible anymore
once the stack is empty.

\subsection{Tree Languages}
In this paper, we only deal with binary trees since the derivation
trees of CCGs are binary.  We therefore build trees over \emph{ranked
  sets} $\Sigma = \Sigma_0 \cup \Sigma_1 \cup \Sigma_2$.  If $\Sigma$
is an alphabet, then it is a \emph{ranked alphabet}.  For every $k \in
\{0,\, 1,\, 2\}$, we say that symbol $a \in \Sigma_k$ has
\emph{rank}~$k$.  We write $\TREES{\Sigma_2,\, \Sigma_1}(\Sigma_0)$
for the set of all trees over~$\Sigma$, which is the smallest set~$T$
such that $c(\seq t1k) \in T$ for all $k \in \{0, 1, 2\}$, $c \in
\Sigma_k$, and $\seq t1k \in T$.  As usual, we write just~$a$ for
leaves~$a()$ with $a \in \Sigma_0$.  A \emph{tree language} is a
subset $T \subseteq \TREES{\Sigma_2, \emptyset}(\Sigma_0)$.  Let $T =
\TREES{\Sigma_2, \Sigma_1}(\Sigma_0)$.  The map $\POS \colon T \to
\Pplus \bigl([2]^* \bigr)$ assigns 
Gorn tree addresses~\cite{gorn1965explicit} to a tree, where
$\Pplus(S)$~is the set of all nonempty subsets of~$S$.  Let
\begin{align*}
  \POS \bigl(c(\seq t1k) \bigr)
  &= \{ \varepsilon \}\, \cup \bigcup_{i \in [k]} \{ iw \mid
  w \in \POS(t_i) \}
\end{align*}
for all $k \in \{0, 1, 2\}$, $c \in \Sigma_k$, and $\seq t1k \in T$.
The set of all leaf positions of~$t$ is defined as $\LEAVES(t) =
\bigl\{w \in \POS(t) \mid w1 \notin \POS(t) \bigr\}$.  Given a tree~$t
\in T$ and a position~$w \in \pos(t)$, we write $t \vert_w$~and~$t(w)$
to denote the subtree rooted in~$w$ and the symbol at~$w$,
respectively.  Additionally, we let~$t [ t' ]_w$ be the tree obtained
when replacing the subtree appearing in $t$ at position $w$ by the tree~$t' \in
T$.  Finally, let $\mathord{\yield} \colon T \to
\Sigma_0^{\scriptscriptstyle +}$ be inductively defined by~$\yield(a) = a$ for
all~$a \in \Sigma_0$ and $\yield \bigl(c(\seq t1k) \bigr) =
\yield(t_1) \dotsm \yield(t_k)$ for all $k \in [2]$, $c \in
\Sigma_k$, and $\seq t1k \in T$.
The special leaf symbol~$\SBox$ is reserved and is used to represent a hole in a
tree.  The set~$\CONTEXTS{\Sigma_2,\Sigma_1}(\Sigma_0)$ of contexts
contains all trees of~$\TREES{\Sigma_2,\Sigma_1} \bigl(\Sigma_0 \cup
\{ \SBox \} \bigr)$, in which~$\SBox$ occurs exactly once.  We
write~$\POS_{\SBox}(C)$ to denote the unique position of~$\SBox$ in
the context~$C \in \CONTEXTS{\Sigma_2,\Sigma_1}(\Sigma_0)$.  Moreover,
given $t \in T$ we simply write~$C[t]$ instead of $C[t]_{\POS_{\SBox}(C)}$.
A tuple $(\rho_0, \rho_1, \rho_2)$ is called a \emph{relabeling} if
$\rho_k \colon \Sigma_k \to \Delta_k$ for all $k \in \{0, 1, 2\}$ and
ranked set $\Delta$.
It induces the map $\rho \colon T \to \TREES{\Delta_2,
  \Delta_1}(\Delta_0)$ given by $\rho 
\bigl(c(\seq t1k) \bigr) =  \bigl(\rho_k(c) \bigr) \bigl(\rho(t_1),
\dotsc, \rho(t_k) \bigr)$ for all $k \in \{0, 1, 2\}$, $c \in
\Sigma_k$ and $\seq t1k \in T$.

\subsection{Combinatory Categorial Grammar}
In the following, we give a short introduction to CCG.  Given an
alphabet $A$ of \emph{atoms} or \emph{atomic categories} and a set of
\emph{slashes} $D = \{ \SLASHF, \SLASHB \}$ indicating directionality,
the set of \emph{categories} is defined as $\CATEGORIES{A} = \TREES{D,
  \emptyset}(A)$.  We usually write the categories in infix notation
and the slashes are left-associative by convention, so each category
takes the form $c = a \,\vert_1 \,c_1 \,\cdots \,\vert_k \,c_k$ where
$a \in A$, $\vert_i \in D$, $c_i \in \CATEGORIES{A}$ for all $i \in \{
1, \dots, k \}$.  The atom~$a$ is called the \emph{target} of~$c$ and
written as~$\TARGET(c)$.  The slash-category pairs $\vert_i \,c_i$
are called \emph{arguments} and their number~$k$ is called the
\emph{arity} of~$c$ and denoted by~$\ARITY(c)$.  In addition, we write
$\ARGUMENT(c, i)$ to get the $i$"~th argument~$\vert_i\,c_i$ of~$c$.
In the sense of trees, the sequence of arguments is a context $\SBox
\,\vert_1 \,c_1 \,\cdots \,\vert_k \,c_k$.  The set of \emph{argument
  contexts} is denoted by $\ARGUMENTS A \subseteq
\CONTEXTS{D,\emptyset}(A)$.  We distinguish between two types of
categories.  In \emph{first-order categories}, all arguments are
atomic, whereas in \emph{higher-order categories}, the arguments can
have arguments themselves.

Next, we describe how two neighboring categories can be
combined.  Intuitively, the direction of the slash determines on which
side a category matching the argument is expected.  Hence there are two
types of rules.  Despite the conventions for inference systems,
we put the inputs (premises) below and the output (conclusion) above
to make the shape of the proof tree apparent.  A
\emph{rule of degree $k$} with $k \in \mathbb{N}$ has one of the
following forms:
\begin{align*}
\frac{ax \,|_1 \,c_1 \cdots \,|_k \,c_k}{
ax \SLASHF c \qquad c \,|_1 \,c_1 \,\cdots \,|_k \,c_k} 
&& \text{(forward rule)} \\*[1ex]
\frac{ax \,|_1 \,c_1 \cdots \,|_k \,c_k}{
c \,|_1 \,c_1 \,\cdots \,|_k \,c_k \qquad ax \SLASHB c}
&& \text{(backward rule)}
\end{align*}
where $a \in A$, $c \in \CATEGORIES{A} \cup \{ y \}$, $\vert_i \in D$,
and $c_i \in \CATEGORIES{A} \cup \{ y_i \}$ for all $i \in [k]$.
Here, $y, y_1, \dots, y_k$ are category variables that can match any
category in $\CATEGORIES{A}$ and $x$ is an argument context variable
that can match any argument context in $\ARGUMENTS{A}$.
The category taking the argument ($ax \,\vert \,c$ with $\vert \in D$)
is called \emph{primary category}, the one providing it ($c \,|_1
\,c_1 \cdots \,|_k \,c_k$) is called \emph{secondary category}, and
they are combined to an \emph{output category} ($ax \,|_1 \,c_1
\,\cdots \,|_k \,c_k$).  Given rule $r$, we write $\SECONDARY(r)$ to
refer to the secondary category.  Rules of degree~$0$ will be referred
to as \emph{application rules}, while rules of higher degree are
\emph{composition rules}.  We write $\RULESET(A)$ for the set of all
rules over $A$.  A \emph{rule system} is a pair $\Pi = (A, R)$, where
$A$ is an alphabet and $R \subseteq \RULESET(A)$ is a finite set of
rules over $A$.  Given a rule $r \in R$, we obtain a \emph{ground
  instance} of it by replacing the variables $\{ y, y_1, \dots \}$ by
concrete categories and the variable $x$ by a concrete argument
context.  The ground instances of~$\Pi$ induce a
relation~$\mathord{\to_{\Pi}} \subseteq \CATEGORIES{A}^2 \times
\CATEGORIES{A}$ and we write $\tfrac{c''}{c \enspace
  c'}{\scriptscriptstyle \Pi}$ instead of~$(c, c') \to_\Pi c''$.  The
relation~$\to_\Pi$ extends to a relation $\mathord{\DERIVES_{\Pi}}
\subset (\CATEGORIES{A}^*)^2$ on sequences of categories. It is given
by 
\begin{align*}
  \mathord{\DERIVES_{\Pi}} = \bigcup_{\varphi, \psi \in
  \CATEGORIES{A}^*} \bigg\{ (\varphi c c' \psi,\, \varphi c'' \psi)
  \,\bigg|\,  \frac{c''}{c \enspace c'} {\scriptstyle{\Pi}} \bigg\}
  \enspace .
\end{align*}

A \emph{combinatory categorial grammar} (CCG) is a tuple $\mathcal G
= (\Sigma, A, R, I, \LEXICON)$ that consists of an alphabet $\Sigma$
of \emph{input symbols}, a rule system $(A, R)$, a set $I \subseteq
A$ of \emph{initial categories}, and a finite relation $\LEXICON
\subseteq \Sigma \times \CATEGORIES{A}$ called \emph{lexicon}.  It
is called \emph{$k$"~CCG} if each rule $r \in R$ has degree at
most~$k$, where $k \in \nat$. 

The CCG $\mathcal G$ \emph{generates} the category sequences
$\CATEGORYSEQ \subseteq \CATEGORIES{A}^*$ and the string language
$\STRINGLANG(\mathcal G) \subseteq \Sigma^*$ given by
\begin{align*}
  \CATEGORYSEQ = \bigcup_{a_0 \in I} \big\{\varphi \in
  \CATEGORIES{A}^* \,\big|\, \varphi \DERIVES_{(A,R)}^* a_0 \big\} 
\end{align*}
and $\STRINGLANG(\mathcal G) = \LEXICON^{-1}(\CATEGORYSEQ)$,
where the string language $L(\mathcal G)$ contains all strings
that can be relabeled via the lexicon to a category sequence
in~$\CATEGORYSEQ$.  A tree $t \in
\TREES{\CATEGORIES{A},\emptyset}({\LEXICON}(\Sigma))$ is called
\emph{derivation tree of $\mathcal G$} if $\tfrac{t(w)}{t(w1)
  \quad t(w2)} {\scriptstyle(A,R)}$ for every $w \in \POS(t)
\setminus \LEAVES(t)$.
We denote the set of all derivation trees of~$\mathcal G$ by
$\DEVTREES(\mathcal G)$.

A \emph{category relabeling} $\rho \colon \CATEGORIES{A} \to \Delta$
is a relabeling such that $\rho(c) = \rho(c')$ for all categories $c,
c' \in \CATEGORIES{A}$ with $\TARGET(c) = \TARGET(c')$ and $\ARGUMENT 
\bigl(c,\,\ARITY(c) \bigr) = \ARGUMENT \bigl(c',\,\ARITY(c') \bigr)$.
The relabeled derivation trees $\TREELANG_\rho(\mathcal G) \subseteq
\TREES{\Delta_2,\, \emptyset}(\Delta_0)$ are given by
\begin{align*}
  \TREELANG_{\rho}(\mathcal G) = \big\{ \rho(t) \,\big|\, t \in
  \DEVTREES(\mathcal G),\, t(\varepsilon) \in I \, \big\} \enspace. 
\end{align*}
A tree language $\TREELANG \subseteq \TREES{\Delta_2,\,
  \emptyset}(\Delta_0)$ is \emph{generatable} by $\mathcal G$ if there
is a category relabeling $\rho' \colon \CATEGORIES{A} \to \Delta$
such that $\TREELANG = \TREELANG_{\rho'}(\mathcal G)$.

\begin{example}
  \label{ex:ccg}
  Let $\mathcal G = (\Sigma, A, \RULESET(A, 2), \{ \bot \},
  \LEXICON)$ with $\Sigma = \{\alpha, \beta, \gamma, \delta \}$
  and atoms $A = \{ \bot, a, b, c, d, e \} $ be a CCG with the lexicon
  $\LEXICON$ given below, where $\RULESET(A,2)$~is the set of all
  rules over~$A$ up to degree~$2$.  Thus, it is a $2$"~CCG.
  \begin{align*}
    \LEXICON(\alpha)
     &= \{ a,\,b \} 
     &\LEXICON(\gamma)
     &= \{ d \SLASHF c,\, c \SLASHB a \SLASHF c \} \\*
     \LEXICON(\beta)
     &= \{ c \SLASHB b, \, c \SLASHB b \SLASHB e, \, e,\, e \SLASHF e
       \} 
     &\LEXICON(\delta)
     &= \{ \bot \SLASHF d \}
  \end{align*}
  $\tfrac{\bot x \SLASHBS a \SLASHF c}{\bot x \SLASHF c \quad c
    \SLASHBS a \SLASHF c}$ is a forward rule of degree $2$ in
  $\RULESET(A, 2)$, where $x$~is an argument context and can thus be
  replaced by an arbitrary sequence of arguments.  Utilizing $x =
  \SBox \SLASHB a$ yields the ground instance $\tfrac{\bot \SLASHBS a
    \SLASHBS a \SLASHF c}{\bot \SLASHBS a \SLASHF c \quad c \SLASHBS a
    \SLASHF c}$, which has primary category $c_1 = \bot \SLASHB a
  \SLASHF c$ and secondary category $c_2 = c \SLASHB a \SLASHF c$.
  The latter has target $\TARGET(c_2) = c$ and the two arguments
  $\SLASHB a$ and $\SLASHF c$, so its arity is $\ARITY(c_2) = 2$.

  A derivation tree of~$\mathcal G$ is depicted in
  Figure~\ref{fig:exDer}.  We start at the bottom with categories
  taken from the lexicon in accordance with the input symbols.  Then
  neighboring categories are combined until we arrive at the root with
  initial category~$\bot$, so the input word is accepted. 
\end{example}

\begin{figure}
  \centering
  \raisebox{\depth}{\scalebox{1}[-1]{
      $\INFER{\ROT{\bot}}{%
        \PROJECT[5]{\ROT{a}}{\ROT{\alpha}}
        &
        \INFER{\ROT{\bot \SLASHB a}}{%
          \PROJECT[4]{\ROT{b}}{\ROT{\alpha}}
          &
          \INFER{\ROT{\bot \SLASHB a \SLASHB b}}{%
            \INFER{\ROT{\bot \SLASHB a \SLASHF c}}{
              \INFER{\ROT{\bot \SLASHF c}}{
                \PROJECT[1]{\ROT{\bot \SLASHF d}}{\ROT{\delta}}
                &
                \PROJECT[1]{\ROT{d \SLASHF c}}{\ROT{\gamma}}
              }
              &
              \PROJECT[2]{\ROT{c \SLASHB a \SLASHF c}}{\ROT{\gamma}}
            }
            &
            \INFER{\ROT{c \SLASHB b}}{
              \PROJECT[2]{\ROT{e}}{\ROT{\beta}}
              &
              \PROJECT[2]{\ROT{c \SLASHB b \SLASHB e}}{\ROT{\beta}}
            }
          }
        }
      }$
    }}
  \caption{CCG derivation tree (see Example~\protect{\ref{ex:ccg}})}
  \label{fig:exDer}
\end{figure}

\section{Moore Push-down Automata}
We start by introducing a Moore variant of push-down automata
\cite{autberboa97} that is geared towards our needs and still accepts
the context-free languages (of strings of length~$\geq 2$).  It will
be similar to the push-down Moore machines of Decker, Leucker, and
Thoma~\cite{decleutho13}.  Instead of processing input symbols as part
of transitions (as in Mealy machines), Moore machines output a unique
input symbol in each state~\cite{fleischner1977equivalence}. Recall
that for every set~$\Gamma$ we have~$\Gamma^{\leq 1} = \{\varepsilon\}
\cup \Gamma$ and additionally let $\Gamma^{\geq 2} = \{w \in \Gamma^*
\mid 2 \leq \abs w\}$ be the strings of length  at least~$2$. 

\begin{definition}
  A \emph{Moore push-down automaton} (MPDA) is defined as a tuple 
  $\AUTOMATON = (Q, \Sigma, \Gamma, \delta, \tau, I, F)$ that consists
  of
  \begin{enumerate}[label=(\roman*)]
  \item finite sets~$Q$, $\Sigma$, and~$\Gamma$ of \emph{states},
  \emph{input symbols}, and \emph{stack symbols}, respectively,
  \item a set $\delta \subseteq \bigl(Q \times \Gamma^{\leq 1} \times
  \Gamma^{\leq 1} \times Q \bigr) \setminus \bigl(Q \times \Gamma
  \times \Gamma \times Q \bigr)$ of \emph{transitions},
  \item an output function~$\tau \colon Q \to \Sigma$, and
  \item sets~$I, F \subseteq Q$ of \emph{initial} and \emph{final
      states}, respectively.
  \end{enumerate}
\end{definition}

Due to the definition of~$\delta$, as for the PDA of
Definition~\ref{def:pda}, in a single step we can either push or pop a
single stack symbol or ignore the stack.
In the following, let $\AUTOMATON = (Q, \Sigma, \Gamma, \delta, \tau,
I, F)$ be an MPDA.  On the set~$\textrm{Conf}_\AUTOMATON = Q \times
\Gamma^*$ of configurations of~$\AUTOMATON$ the \emph{move relation}
$\mathord{\vdash_\AUTOMATON} \subseteq \textrm{Conf}_\AUTOMATON^2$ is
\begin{align*}
  \mathord{\vdash_\AUTOMATON} = \hspace{-.5em} \bigcup_{\substack{(q,
  \gamma, \gamma', q') \in \delta \\ \alpha \in \Gamma^*}}
  \hspace{-.5em} \Bigl\{ \bigl( \langle q, \gamma\alpha\rangle,\,
  \langle q',  \gamma'\alpha\rangle \bigr) \in
  \textrm{Conf}_\AUTOMATON^2 \;\Big|\; \gamma\alpha \neq \varepsilon
  \Bigr\}
\end{align*}
and a configuration $\langle q, \alpha\rangle \in
\textrm{Conf}_\AUTOMATON$ is \emph{initial} (respectively,
\emph{final}) if $q \in I$ and $\alpha \in \Gamma$ (respectively, $q
\in F$ and $\alpha = \varepsilon$).
An \emph{accepting run} is defined in the same manner as for PDA.
However, note that contrary to PDA we can start with an arbitrary
symbol on the stack.
The language $L(\AUTOMATON)$ \emph{accepted} by~$\AUTOMATON$
contains exactly those strings~$w \in \Sigma^*$, for which there
exists an accepting run $\langle q_0, \alpha_0\rangle, \dotsc, \langle
q_n, \alpha_n\rangle$ such that $w = \tau(q_0) \dotsm \tau(q_n)$.
Thus, we accept the strings that are output symbol-by-symbol by the
states attained during an accepting run.  As usual, two MPDA are
\emph{equivalent} if they accept the same language.  Since no initial
configuration is final, each accepting run has length at least~$2$, so
we can only accept strings of length at least~$2$.  While we could
adjust the model to remove this restriction, the presented version
serves our later purposes best.

\begin{theorem}
  \label{lm:soPDA}
  MPDA accept the context-free languages of strings of length at
  least~$2$.
\end{theorem}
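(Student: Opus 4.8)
The statement asserts two inclusions, and the plan is to prove each by translating between MPDA and the push-down automata of Definition~\ref{def:pda}, and then to invoke the classical fact (recalled above) that the latter accept exactly the $\varepsilon$-free context-free languages. That every accepted string has length at least~$2$ has already been observed, since an accepting run visits at least two states and hence emits at least two symbols; so only the coincidence with the context-free languages remains to be shown.

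For the inclusion stating that $L(\mathcal A)$ is context-free I would build, from an MPDA $\mathcal A = (Q, \Sigma, \Gamma, \delta, \tau, I, F)$, an equivalent PDA $\mathcal B$. The obstacle is that a PDA reads one symbol per move whereas $\mathcal A$ emits one symbol per visited state, so a run of $\mathcal A$ with $m$ moves (hence $m+1$ states and output length $m+1$) must be matched by a PDA run with $m+1$ moves; moreover $\mathcal A$ may start from an arbitrary single stack symbol $\gamma_0 \in \Gamma$ while $\mathcal B$ must start from the fixed symbol~$\bot$. I resolve both with one extra leading move: $\mathcal B$ first reads $\tau(q_0)$ for a guessed initial state $q_0 \in I$, leaves~$\bot$ untouched, and moves to a state $\langle q_0, \gamma_0\rangle$ that records a guessed initial stack symbol~$\gamma_0$, using~$\bot$ to represent that bottom symbol. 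Thereafter $\mathcal B$ mirrors each transition $(q, \gamma, \gamma', q') \in \delta$, reading the output $\tau(q')$ of the entered state and performing the same stack operation, the single exception being the transition that pops the bottom symbol~$\gamma_0$ and thereby empties $\mathcal A$'s stack: this is simulated by reading $\tau(q')$, popping~$\bot$, and entering a fresh accepting state. A run of $\mathcal B$ then reads exactly $\tau(q_0) \cdots \tau(q_m)$ and is accepting if and only if the simulated run of $\mathcal A$ is, so $L(\mathcal B) = L(\mathcal A)$ and the latter is context-free.

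For the converse inclusion I start from a PDA $\mathcal P$ with $L(\mathcal P) = L$, which exists since $L$ is an $\varepsilon$-free context-free language. Now the off-by-one runs the other way: for a string of length~$\ell$ a PDA uses $\ell$ moves, but the MPDA must emit $\ell$ symbols using only $\ell-1$ moves, so one PDA move has to disappear. It cannot be a stack-changing move, and the final, stack-emptying move is forced to pop the bottom symbol, so the move to drop must be a stack-neutral first move. To guarantee one, I first normalize~$\mathcal P$: for each $a \in \Sigma$ the left quotient $L_a = \{ w \mid aw \in L \}$ is again an $\varepsilon$-free context-free language and hence has a PDA~$\mathcal P_a$; the normalized PDA reads the first symbol~$a$ in a stack-neutral move and then simulates~$\mathcal P_a$, so that every accepting run begins with an ignore move that reads~$a$ and leaves~$\bot$ in place. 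From this normalized PDA I build the MPDA~$\mathcal M$ by dropping that first move: its initial states are the states reached after the first move, carrying the read symbol~$a$ as their Moore output, its initial stack symbol is~$\bot$ kept as an ordinary stack symbol, and it mirrors every subsequent move, each entered state emitting the symbol read by the mirrored move; the final states of~$\mathcal M$ are those recording a final state of~$\mathcal P_a$. An accepting run of~$\mathcal M$ on $a_1 \cdots a_\ell$ then emits $a_1$ from its initial state and $a_2, \dots, a_\ell$ from the states entered by its $\ell-1$ moves, so $L(\mathcal M) = L$.

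The delicate point throughout is this one-symbol discrepancy between per-state emission and per-move consumption, together with matching the single-symbol initial stack against the fixed bottom symbol~$\bot$. In the first direction this is harmless, since one may prepend a free stack-neutral move; the real work is in the second direction, where the missing move must be extracted without disturbing the stack, and this is exactly what the left-quotient normalization provides. The remaining verifications, namely that the two simulations are faithful step by step and that the length-$\geq 2$ boundary is captured precisely, are then routine inductions on the length of a run.
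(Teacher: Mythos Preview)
Your argument is correct in outline and takes a genuinely different route from the paper in the harder direction. The paper treats the one-symbol mismatch head-on: it keeps the original PDA transitions, shifts each read symbol into the \emph{target} state in Mealy-to-Moore fashion, and absorbs the very first PDA move by letting the new initial states record both the first symbol read and, if the first move was a push, the pushed stack symbol; a $\{0,1\}$ bit marking the bottom of the stack then tells the MPDA when to consume that stored symbol rather than pop. Your construction instead preprocesses the problem away: by passing to the left quotients~$L_a$ you force the first PDA move to be stack-neutral, after which dropping it is trivial. Your approach is conceptually cleaner and avoids the four-component state space and the five-way case split on transitions that the paper carries out; the paper's approach is more self-contained and explicit, and does not appeal to closure under left quotient.

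One point you should tighten. In your MPDA~$\mathcal M$ the initial configuration may carry \emph{any} single stack symbol~$\gamma \in \Gamma$, not just~$\bot$, so you must argue that no accepting run of~$\mathcal M$ starts with $\gamma \neq \bot$. This is not automatic: if some~$\mathcal P_a$ has a pop-$\gamma$ transition into a final state (which can happen mid-run in an ordinary PDA), your~$\mathcal M$ would accept a spurious string starting from that~$\gamma$. The fix is routine---normalize each~$\mathcal P_a$ so that final states are entered only via pop-$\bot$ transitions and~$\bot$ is never pushed---but it should be stated. A symmetric remark applies in your first direction: the ``fresh accepting state'' after popping~$\bot$ must only be entered when the simulated MPDA target~$q'$ lies in~$F$.
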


\begin{proof}
  The straightforward part of the proof is to show that each language
  accepted by an MPDA is context-free.  For the converse, let $L
  \subseteq \Sigma^{\geq 2}$ be context-free and
  $\AUTOMATON = (Q, \Sigma, \Gamma, \delta, I, \bot, F)$ be a PDA such
  that $L(\AUTOMATON) = L$.
  We assume without loss of generality that the initial states of
  $\AUTOMATON$ have no incoming transitions and that the final states
  of~$\AUTOMATON$ have no outgoing transitions and all their incoming
  transitions pop $\bot$.
  We will construct an  MPDA~$\AUTOMATON'$ with
  $L(\AUTOMATON') = L$ in the spirit of the classical conversion from
  Mealy to Moore machines.  The main idea is to shift the input
  symbol~$a$ from the transition~$(q, a, \gamma, \gamma', q') \in
  \delta$ to the target state~$q'$.  Additionally, since there is
  always one more configuration compared to the number of moves (and
  thus involved transitions) in an accepting run, the first move
  needs to be avoided in~$\AUTOMATON'$.
  If the corresponding transition pushes a symbol to the stack,
  we have to store it in the target state of the transition.
  This state becomes an initial state of $\AUTOMATON'$.
  To be able to discern if the stored symbol can be deleted,
  $\AUTOMATON'$ needs to be aware whether the stack currently contains
  only one symbol, since in $\AUTOMATON$ the symbol pushed in the
  first transition lies above the bottom symbol.
  Since we clearly cannot store the size of the
  current stack in the state, we need to mark the symbol at the bottom
  of the stack.

  Formally, we construct the MPDA $\AUTOMATON' = (Q', \Sigma, \Gamma',
  \delta', \pi_2, I', F')$ with
  \begin{itemize}[noitemsep]
  \item $Q' = Q \times \Sigma \times \Gamma^{\leq 1} \times \{0, 1\}$,
  \item $I' = \bigl\{ \langle q', a, \gamma', 1\rangle \in Q' \mid
    \exists q \in I \colon (q, a, \gamma, \gamma', q') \in \delta
    \bigr\}$,
  \item $\Gamma' = \Gamma_\bot \times \{0, 1\}$,
  \item $F' = F \times \Sigma \times \{\varepsilon\} \times \{1\}$,
  \item and the transitions
  \end{itemize}
  \begin{align}
    \delta' = \bigcup_{\substack{a' \in \Sigma,\, b \in \{0, 1\} \\
    (q, a, \gamma, \gamma', q') \in \delta \\
    \gamma'' \in \Gamma_\bot^{\leq 1} }} \biggl(
    &\Bigl\{ \bigl(\langle q, a', \gamma'', b \rangle, \varepsilon,
      \varepsilon, \langle q', a, \gamma'', b \rangle \bigr)
      \,\Big|\, \gamma = \gamma' = \varepsilon \Bigr\}\
      \cup \label{eq:ignore_op}
    \\*[-6ex]
    &{} \Bigl\{ \bigl(\langle q, a', \gamma'', b \rangle, \varepsilon,
      \langle \gamma', b\rangle, \langle q', a, \gamma'', 0 \rangle
      \bigr) \,\Big|\, \gamma = \varepsilon, \gamma' \neq \varepsilon
      \Bigr\}\ \cup \label{eq:push_op} \\*
    &{} \Bigl\{ \bigl(\langle q, a', \gamma'', 0 \rangle, \langle
      \gamma, b \rangle, \varepsilon, \langle q', a, \gamma'', b
      \rangle \bigr)\,\Big|\, \gamma \neq \varepsilon, \gamma' =
      \varepsilon\Bigr\}\ \cup \label{eq:pop_op1} \\*
    &{} \Bigl\{ \bigl(\langle q, a', \gamma, 1 \rangle,
      \varepsilon, \varepsilon, \langle q', a, \varepsilon, 1 \rangle
      \bigr) \,\Big|\, \gamma \neq \varepsilon, \gamma' = \varepsilon
      \Bigr\}\ \cup  \label{eq:pop_op2} 
    \\*[-0.5ex]
    &{} \Bigl\{ \bigl(\langle q, a', \varepsilon, 1 \rangle, \langle
      \gamma, 1 \rangle, \varepsilon, \langle q', a, \varepsilon, 1
      \rangle \bigr) \,\Big|\, \gamma \neq \varepsilon, \gamma' =
      \varepsilon \Bigr\} \label{eq:pop_op3} \biggr) \enspace.
  \end{align} 

  While transitions that ignore the stack~\eqref{eq:ignore_op} or
  push to the stack~\eqref{eq:push_op} can be adopted easily, we have
  three variants \eqref{eq:pop_op1}--\eqref{eq:pop_op3} of transitions
  that pop from the stack.  If we have not reached the bottom of the
  stack yet, then we can pop symbols without
  problems~\eqref{eq:pop_op1}.  However, when only the initial stack
  symbol is left, we first have to remove the stored
  symbol~\eqref{eq:pop_op2} before we can pop the initial stack
  symbol~\eqref{eq:pop_op3}.

  Let $w = \word a1n$ with~$\seq a1n \in \Sigma$ be an input string.
  First we assume that the first move of~$\AUTOMATON$ pushes the
  symbol~$\gamma_1$ to the stack, which then gets popped in the
  $i$"-th move.  Any such sequence of configurations
  \begin{align*}
    &\langle q_1,\, \word a1n,\, \bot \rangle
    &\vdash_\AUTOMATON
    &\,\langle q_2,\, \word a2n,\, \gamma_1 \bot \rangle
    &\vdash_\AUTOMATON
    &\,\langle q_3,\, \word a3n,\, \gamma_2 \gamma_1 \bot \rangle
      \vdash_\AUTOMATON \cdots \\*
    \vdash_\AUTOMATON\ 
    &\langle q_i,\, \word ain,\, \gamma_1 \bot \rangle
    &\vdash_\AUTOMATON
    &\,\langle q_{i+1},\, \word a{i+1}n,\, \bot \rangle
    &\vdash_\AUTOMATON
    &\ \cdots \\*
    \vdash_\AUTOMATON\ 
    &\langle q_n,\, a_n,\, \bot\rangle
    &\vdash_\AUTOMATON
    &\,\langle q_{n+1},\, \varepsilon,\, \varepsilon \rangle
  \end{align*}
  in~$\AUTOMATON$ with~$q_1 \in I$ and~$q_{n+1} \in F$ yields the
  corresponding sequence of configurations
  \begin{align*}
    &\langle \langle q_2, a_1, \gamma_1, 1\rangle ,\, \langle \bot,
      1\rangle  \rangle
    &\vdash_{\AUTOMATON'}
    &\ \ \langle \langle q_3, a_2, \gamma_1, 0\rangle ,\, \langle
      \gamma_2, 1\rangle  \langle \bot, 1\rangle  \rangle 
    &\vdash_{\AUTOMATON'}\ 
      \cdots \\*
    \vdash_{\AUTOMATON'}\ \ 
    &\langle \langle q_i, a_{i-1}, \gamma_1, 1\rangle ,\, \langle
      \bot, 1\rangle  \rangle
    &\vdash_{\AUTOMATON'}
    &\ \ \langle \langle q_{i+1}, a_{i}, \varepsilon, 1\rangle ,\,
      \langle \bot, 1\rangle  \rangle 
    &\vdash_{\AUTOMATON'}\ 
      \cdots \\*
    \vdash_{\AUTOMATON'}\ \ 
    &\langle \langle q_n, a_{n-1}, \varepsilon, 1\rangle ,\, \langle
      \bot, 1\rangle  \rangle 
    &\vdash_{\AUTOMATON'}
    &\ \ \langle \langle q_{n+1}, a_{n}, \varepsilon, 1\rangle ,\,
      \varepsilon \rangle 
    &
  \end{align*}
  which is an accepting run of~$\AUTOMATON'$ and vice versa.

  On the other hand, if the first move of~$\AUTOMATON$ ignores the
  stack, then $\AUTOMATON'$~simulates the moves starting in
  configuration~$\langle \langle q_2, a_1, \varepsilon, 1 \rangle,
  \langle \bot, 1 \rangle \rangle$.  Thus, each string of length at
  least~$2$ accepted by~$\AUTOMATON$ is also accepted by~$\AUTOMATON'$
  and vice versa, which proves $L(\mathcal A') = L = L(\mathcal A)$.
\end{proof}

The MPDA~$\AUTOMATON$ is \emph{pop-normalized} if there exists a
map~$\mathord{\RETURN} \colon \Gamma \to Q$ such that $q' =
\RETURN(\gamma)$ for every transition~$(q, \gamma, \varepsilon, q')
\in \delta$.  In other words, for each stack symbol~$\gamma \in
\Gamma$ there is a unique state~$\RETURN(\gamma)$ that the MPDA enters
whenever $\gamma$~is popped from the stack.
 
Later on, we will simulate the runs of an MPDA in a CCG such that
subsequent configurations are represented by subsequent primary
categories.  Pop transitions are modeled by removing the last argument
of a category. Thus, the target state has to be stored in the previous
argument.  This argument is added when the according push transition
is simulated, so at that point we already have to be aware in which
state the MPDA will end up after popping the symbol again.  This will
be explained in more detail in Section~\ref{sec:constructingccg}.

We can easily establish this property by storing a state in each stack
symbol. Each push transition is replaced by one variant for each state
(i.e.,~we guess a state when pushing), but when a symbol is popped, this
is only allowed if the state stored in it coincides with the
target state.

\begin{lemma}
  \label{lm:returnstate}
  For every MPDA we can construct an equivalent pop-normalized MPDA.
\end{lemma}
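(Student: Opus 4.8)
The plan is to follow the informal description preceding the statement: store in each stack symbol a guess of the state that the automaton will enter when that symbol is eventually popped. Given an MPDA $\AUTOMATON = (Q, \Sigma, \Gamma, \delta, \tau, I, F)$, I would construct $\AUTOMATON' = (Q, \Sigma, \Gamma \times Q, \delta', \tau, I, F)$ with the same states, output function, and initial and final states; only the stack alphabet and the transitions change. The map witnessing pop-normalization is simply $\RETURN = \pi_2$, i.e.\ $\RETURN(\langle \gamma, q\rangle) = q$. The transitions $\delta'$ are obtained from $\delta$ as follows: every ignore transition $(q, \varepsilon, \varepsilon, q') \in \delta$ is copied verbatim; every push transition $(q, \varepsilon, \gamma, q') \in \delta$ with $\gamma \in \Gamma$ is replaced by the family $(q, \varepsilon, \langle \gamma, p\rangle, q') \in \delta'$ ranging over all guesses $p \in Q$; and every pop transition $(q, \gamma, \varepsilon, q') \in \delta$ with $\gamma \in \Gamma$ is replaced by the single transition $(q, \langle \gamma, q'\rangle, \varepsilon, q') \in \delta'$, which is enabled only when the stored guess matches the target state $q'$. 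By inspection every pop transition of $\AUTOMATON'$ has the form $(q, \langle \gamma, q'\rangle, \varepsilon, q')$ with $q' = \RETURN(\langle \gamma, q'\rangle)$, so $\AUTOMATON'$ is pop-normalized by construction.

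Correctness is established through a correspondence between accepting runs of $\AUTOMATON$ and of $\AUTOMATON'$. For the inclusion $L(\AUTOMATON) \subseteq L(\AUTOMATON')$, I would take an accepting run of $\AUTOMATON$ and annotate each occurrence of a stack symbol by the state that the run enters at the move in which that occurrence is popped. Because the run ends in a final configuration with empty stack and no move is possible once the stack is empty, every symbol that ever lies on the stack---including the single symbol of the initial configuration---is popped exactly once, so this annotation is well-defined. Pushing $\langle \gamma, p\rangle$ with $p$ equal to this annotation is a legal transition of $\AUTOMATON'$ (a variant was provided for every $p$), and popping it later is legal because the annotation is by definition the state then entered, which matches the stored guess. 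Since the sequence of states, and hence the emitted string $\tau(q_0)\dotsm\tau(q_n)$, is unchanged, the annotated run accepts the same word; the guess on the initial symbol causes no difficulty, as any $\langle \gamma, p\rangle$ on a single-symbol stack is already an initial configuration. Conversely, forgetting the $Q$-component of every stack symbol turns any accepting run of $\AUTOMATON'$ into an accepting run of $\AUTOMATON$ with the same state sequence, giving $L(\AUTOMATON') \subseteq L(\AUTOMATON)$.

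I expect the only delicate point to be the bookkeeping that makes the annotation well-defined, in particular the claim that in an accepting run each pushed symbol has a unique later move at which it is popped and that the bottom (initial) symbol is likewise popped. This is the usual last-in-first-out discipline of the stack together with the observation already recorded in the preliminaries that the empty stack blocks all further moves, forcing the stack to be emptied exactly at the final configuration. Once this is in place, the two run transformations are mutually inverse on the level of the underlying unannotated runs, so the accepted languages coincide and the proof is complete.
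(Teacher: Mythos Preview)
Your proposal is correct and follows essentially the same approach as the paper: the construction $\AUTOMATON' = (Q, \Sigma, \Gamma \times Q, \delta', \tau, I, F)$ with $\RETURN = \pi_2$, the three-way handling of ignore/push/pop transitions, and the two directions of the correctness argument are identical. You supply a bit more detail on why the annotation of stack-symbol occurrences is well-defined and on the initial symbol, whereas the paper dispatches both directions in a sentence each, but there is no substantive difference.
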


\begin{proof}
  Given an MPDA $\AUTOMATON = (Q, \Sigma, \Gamma, \delta, \tau, I, F)$, 
  we extend each stack symbol by a state and let $\Gamma' = \Gamma
  \times Q$ as well as $\mathord{\RETURN} = \pi_2$,
  i.e., $\RETURN \bigl(\langle \gamma, q\rangle \bigr) = q$ for all
  $\langle \gamma, q\rangle \in \Gamma'$.  All transitions that push a
  symbol to the stack also guess the state that is entered when
  that symbol is eventually popped.  Hence we construct~$\AUTOMATON' =
  (Q, \Sigma, \Gamma', \delta', \tau, I, F)$ with
  \begin{align*}
    \delta' = \bigcup_{(q, \gamma, \gamma', q') \in \delta} {}
    &\Biggl( \Bigl\{ (q, \varepsilon, \varepsilon, q') \;\Big|\;
      \gamma = \gamma' = \varepsilon \Bigr\} \cup \Bigl\{ \bigl(q,
      \langle \gamma, q' \rangle, \varepsilon, q' \bigr) \;\Big|\;
      \gamma' = \varepsilon \Bigr\} \cup {} \\*
    &\phantom{\Biggl(} \Bigl\{ \bigl(q, \varepsilon, \langle \gamma',
      q''\rangle, q' \bigr) \;\Big|\; \gamma = \varepsilon,\, q'' \in
      Q \Bigr\} \Biggr)
      \enspace.
  \end{align*} 
  It is obvious that for every accepting run of~$\AUTOMATON$ there is
  an accepting run of~$\AUTOMATON'$, in which all the guesses were
  correct.  Note that $\AUTOMATON'$~starts with an arbitrary symbol on
  the stack, so we can find a run where the second component of this
  symbol coincides with the final state that is reached by popping
  this symbol.  Similarly, every accepting run of~$\AUTOMATON'$ can be
  translated into an accepting run of~$\AUTOMATON$ by projecting each
  stack symbol to its first component.  Hence
  $\AUTOMATON$~and~$\AUTOMATON'$ are equivalent and $\AUTOMATON'$ is
  clearly pop-normalized.
\end{proof}

The next statement shows that we can provide a form of look-ahead on
the output.  In each new symbol we store the current as well as the
next output symbol. 
We will briefly sketch why this look-ahead is necessary.
Before constructing the CCG, the MPDA will be used to model a spine
grammar.  The next output symbol of the MPDA corresponds to the label
of the parent node along a ``spine'' of a tree generated by the
spine grammar.  From this parent node we can determine the
label of its other child.  This information will be used in the CCG
to control which secondary categories are allowed as neighboring
combination partners.

\begin{lemma}
  \label{lm:succ}
  Let $L \subseteq \Sigma^*$ with $\mathord{\triangleleft} \notin
  \Sigma$ be a context-free language.  Then the language~$\Next(L)$ is
  context-free as well, where
  \[ \Next(L) = 
  \bigcup_{n \in \nat}
    \bigl\{ \langle\sigma_2, \sigma_1\rangle \cdots \langle \sigma_n,
    \sigma_{n-1}\rangle \langle \triangleleft, \sigma_n \rangle
    \;\big|\; \seq\sigma 1n \in \Sigma,\, \word \sigma1n \in L
    \bigr\} \enspace . \]
\end{lemma}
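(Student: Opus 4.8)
The plan is to express $\Next(L)$ as the image of $L$ under standard language-theoretic closure operations, rather than transforming a grammar for $L$ directly. The key observation is that every symbol occurring in a string of $\Next(L)$ is a pair drawn from the finite alphabet $\Omega = (\Sigma \cup \{\triangleleft\}) \times \Sigma$, whose second component records the ``current'' symbol $\sigma_i$ and whose first component records the ``next'' symbol $\sigma_{i+1}$ (or the marker $\triangleleft$ at the very end). The map sending $\word\sigma1n$ to the corresponding element of $\Next(L)$ is therefore a length-preserving, one-symbol-look-ahead transduction, and transductions of this kind preserve context-freeness. I would make this precise using closure of the context-free languages under inverse homomorphism and under intersection with regular languages.

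First I would introduce the homomorphism $h \colon \Omega^* \to \Sigma^*$ that maps each letter by $h(\langle x, y\rangle) = y$, i.e., projection onto the current symbol. Since $L$ is context-free, $h^{-1}(L)$ is context-free as well; it consists of exactly those strings over $\Omega$ whose sequence of second components lies in $L$. This enforces that the current symbols spell a word of $L$, but places no constraint on the first components.

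Next I would capture the look-ahead bookkeeping by a regular language $R \subseteq \Omega^*$ containing precisely the strings $\langle x_1, y_1\rangle \cdots \langle x_n, y_n\rangle$ with $n \geq 1$ such that $x_i = y_{i+1}$ for every $i \in [n-1]$ and $x_n = \triangleleft$. An NFA recognizing $R$ merely keeps, in its state, the first component of the pair it has just read, verifies upon reading the following pair that this stored symbol equals the new second component, and accepts only after reading a pair whose first component is $\triangleleft$; hence $R$ is regular.

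Finally I would verify the identity $\Next(L) = h^{-1}(L) \cap R$ via the correspondence $\sigma_i \leftrightarrow y_i$: membership in $h^{-1}(L)$ forces $y_1 \cdots y_n \in L$, while membership in $R$ forces $x_i = y_{i+1}$ and $x_n = \triangleleft$, and together these reconstruct exactly the defining shape of the elements of $\Next(L)$. Because the context-free languages are closed under inverse homomorphism and under intersection with regular languages, $\Next(L)$ is context-free. The one delicate point, and the main thing to get right, is aligning the off-by-one shift together with the terminal marker $\triangleleft$ in this identity; the two closure steps themselves are entirely standard.
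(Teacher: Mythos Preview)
Your proposal is correct and essentially identical to the paper's own proof: the paper also projects onto the second component via a homomorphism, takes the inverse image of~$L$, and intersects with the same regular look-ahead language~$R$ (for which it exhibits the very NFA you describe).
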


\begin{proof}
  We define $\Sigma' = \Sigma\, \cup\, \{\mathord{\triangleleft}\}$
  and $\Sigma'' = \Sigma' \times \Sigma$.   Let us consider the
  homomorphism~$\pi_2 \colon (\Sigma'')^* \to \Sigma^*$ given by
  $\pi_2 \bigl(\langle \sigma', \sigma\rangle \bigr) = \sigma$ for
  all~$\langle \sigma', \sigma \rangle \in \Sigma''$.  Since the 
  context-free languages are closed under inverse
  homomorphisms~\cite[Chapter~2, Theorem~2.1]{ber79}, the
  language~$L'' = \pi^{-1}(L)$ is also context-free.  Additionally,
  the language 
  \[ R = \bigcup_{n \in \nat} \bigl\{ \langle\sigma_2, \sigma_1\rangle
    \langle\sigma_3, \sigma_2\rangle \cdots \langle \sigma_n,
    \sigma_{n-1}\rangle \langle \triangleleft, \sigma_n \rangle
    \;\big|\; \seq\sigma 1n \in \Sigma \bigr\} \enspace, \] 
  is regular because it is recognized by the NFA~$\AUTOMATON =
  \bigl(\Sigma', \Sigma'', \delta, \Sigma',
  \{\mathord{\triangleleft}\})$ with
  state set and initial states~$\Sigma'$, input alphabet~$\Sigma''$,
  final states~$\{\triangleleft\}$, and transitions
  \[ \delta = \bigl\{ (\sigma, \langle\sigma', \sigma\rangle, \sigma')
    \;\big|\; \sigma \in \Sigma,\, \sigma' \in \Sigma' \bigr\}
    \enspace. \]
  Finally, $\Next(L) = L'' \cap R$ is context-free because the
  intersection of a context-free language with a regular language is
  again context-free~\cite[Chapter~2, Theorem~2.1]{ber79}.
\end{proof}

\begin{corollary}
  \label{cor:MPDA}
  For every context-free language $L \subseteq \Sigma^{\geq 2}$ there
  exists a pop-normalized MPDA~$\mathcal A$ such that~$L(\mathcal A) =
  \Next(L)$.
\end{corollary}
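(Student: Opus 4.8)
The plan is to obtain the desired MPDA by chaining the three results proved above. Since $L \subseteq \Sigma^{\geq 2}$ is context-free, I would first apply Lemma~\ref{lm:succ} to conclude that $\Next(L)$ is context-free as well; here $\Next(L)$ is a language over the alphabet $\Sigma'' = (\Sigma \cup \{\triangleleft\}) \times \Sigma$ introduced in that lemma.

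Before invoking the automaton construction, I would check that $\Next(L)$ lies in the fragment to which Theorem~\ref{lm:soPDA} applies, namely strings of length at least~$2$. Every word of $\Next(L)$ arises from some $\word\sigma1n \in L$ and has the form $\langle \sigma_2, \sigma_1\rangle \cdots \langle \sigma_n, \sigma_{n-1}\rangle \langle \triangleleft, \sigma_n\rangle$, whose length is exactly~$n$. Because $L \subseteq \Sigma^{\geq 2}$ forces $n \geq 2$, we obtain $\Next(L) \subseteq (\Sigma'')^{\geq 2}$, so Theorem~\ref{lm:soPDA} is applicable.

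With the length condition secured, Theorem~\ref{lm:soPDA} yields an MPDA~$\mathcal A'$ with $L(\mathcal A') = \Next(L)$. Finally, I would apply Lemma~\ref{lm:returnstate} to replace~$\mathcal A'$ by an equivalent pop-normalized MPDA~$\mathcal A$, giving $L(\mathcal A) = L(\mathcal A') = \Next(L)$, which is exactly the claim.

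There is no substantive obstacle here: the corollary is a routine composition of Lemma~\ref{lm:succ}, Theorem~\ref{lm:soPDA}, and Lemma~\ref{lm:returnstate}. The only point that needs attention is the bookkeeping of the length bound --- one must confirm that the $\Next$ operation keeps us inside the ``length $\geq 2$'' class so that Theorem~\ref{lm:soPDA} may be used, which holds because $\Next$ sends each length-$n$ word of $L$ to a word of the same length~$n$.
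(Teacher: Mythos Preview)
Your proposal is correct and matches the paper's intended argument: the corollary is stated without proof precisely because it is the routine composition of Lemma~\ref{lm:succ}, Theorem~\ref{lm:soPDA}, and Lemma~\ref{lm:returnstate} that you describe, and your verification that $\Next$ preserves word length (so that $\Next(L) \subseteq (\Sigma'')^{\geq 2}$) is exactly the bookkeeping the paper leaves to the reader.
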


\section{Spine Grammars}
Now we move on to representations of tree languages.  We first recall
context-free tree grammars (CFTG)~\cite{rou69}, but only the monadic
simple variant~\cite{kepser2011equivalence}, i.e., all nonterminals
are either nullary or unary and productions are linear and
nondeleting.

\begin{definition}
  A \emph{simple monadic context-free tree grammar} (sCFTG) is a tuple
  $\mathcal G = (N, \Sigma, S, P)$ consisting of 
  \begin{enumerate}[label=(\roman*)]
  \item disjoint ranked alphabets $N$ and $\Sigma$ of
    \emph{nonterminal} and \emph{terminal symbols} with $N = N_1 \cup
    N_0$ and $\Sigma_1 = \emptyset$,
  \item a nullary \emph{start nonterminal} $S \in N_0$, and
  \item a finite set $P \subseteq P_0 \cup P_1$ of \emph{productions},
    where $P_0 = N_0 \times T_{\Sigma_2, N_1} (N_0 \cup \Sigma_0)$ and
    $P_1 = N_1 \times C_{\Sigma_2, N_1} (N_0 \cup \Sigma_0)$.
  \end{enumerate}
\end{definition}

In the following let $\mathcal G = (N, \Sigma, S, P)$ be an sCFTG.  We
write~$(n, r) \in P$ simply as~$n \to r$.  Given~$t, u \in
T_{\Sigma_2, N_1} (\Sigma_0 \cup N_0)$ we let~$t \Rightarrow_{\mathcal
  G} u$ if there exist $(n \to r) \in P$ and a position
$w \in \pos(t)$ such that (i)~$t|_w = n$ and $u = t[r]_w$ with~$n \in
N_0$, or (ii)~$t|_w = n(t')$ and $u = t 
\bigl[r[t'] \bigr]_w$ with~$n \in N_1$ and~$t' \in T_{\Sigma_2, N_1}
(\Sigma_0 \cup N_0)$.  The \emph{tree language~$T(\mathcal G)$
  generated by~$\mathcal G$} is
\[ T(\mathcal G) = \{t \in T_{\Sigma_2,\emptyset}(\Sigma_0) \mid S
  \Rightarrow_{\mathcal G}^* t \} \enspace. \]
The sCFTG $\mathcal G'$ is \emph{strongly equivalent}
to~$\mathcal G$ if $T(\mathcal G) = T(\mathcal G')$, and it is
\emph{weakly equivalent} to~$\mathcal G$ if $\yield \bigl(T(\mathcal
G) \bigr) = \yield \bigl(T(\mathcal G') \bigr)$.

Spine grammars~\cite{fujiyoshi2000spinal} were originally defined
as a restriction of CFTG and possess the same expressive power as
sCFTG, which follows from the normal form for spine grammars.
Although sCFTG are more established, we elect to utilize spine
grammars because of their essential notion of spines and use a variant
of their normal form.
Deviating from the original
definition~\cite[Definition~3.2]{fujiyoshi2000spinal},
we treat spine grammars as a restriction on sCFTG and equip the
terminal symbols with a ``spine direction'' (instead of the
nonterminals, which is not useful in sCFTG).\footnote{In the original
  definition, productions are not necessarily linear or nondeleting,
and the nonterminals may have rank greater than~$1$.  Nonterminals are
equipped with a \emph{head} that specifies the direction where the
spine continues.  The spine of the right"-hand side of a production
is the path from the root to the unique appearence of the
variable that is in head direction of the (non"-nullary) nonterminal
on the left"-hand side.  All other variables on the left"-hand side of
productions have to appear as children of spinal nodes on the
right"-hand side if they appear at all.}
By creating copies of binary terminal symbols it can be shown that
both variants are equivalent modulo relabeling.  More specifically,
under our definition, each spine grammar is clearly itself an sCFTG
and for each sCFTG~$\mathcal G$ there exist a spine grammar~$\mathcal
G'$ and a relabeling~$\rho$ such that $T(\mathcal G) = \{\rho(t) \mid
t \in T(\mathcal G')\}$.

\begin{definition}
  Let $\mathcal G  = (N, \Sigma, S, P)$ be an sCFTG. It is called
  \emph{spine grammar} if there exists a map~$d \colon \Sigma_2 \to
  \{1, 2\}$ such that $wi \in \Pref(\pos_{\SBox}(C))$ with $i =
  d(C(w))$ for every production~$(n \to C) \in P$ with~$n \in N_1$ and
  $w \in \Pref(\pos_{\SBox}(C))$ with~$C(w) \in \Sigma_2$.
\end{definition}

Henceforth let $\mathcal G  = (N, \Sigma, S, P)$ be a spine grammar
with map~$d \colon \Sigma_2 \to \{1,2\}$.  Consider a production $(n
\to C) \in P$ with~$n \in N_1$.  The \emph{spine} of~$C$ is simply the
path from the root of~$C$ to the unique occurrence~$\pos_{\SBox}(C)$
of~$\SBox$.  The special feature of a spine grammar is that the
symbols along the spine indicate exactly in which direction the spine
continues.  Since only the binary terminal symbols offer branching,
the special feature of spine grammars is the existence of a map~$d$
that tells us for each binary terminal symbol~$\sigma \in \Sigma_2$
whether the spine continues to the left, in which case~$d(\sigma) =
1$, or to the right, in which case~$d(\sigma) = 2$.  This map~$d$,
called \emph{spine direction}, applies to all instances of~$\sigma$ in
all productions with spines.
We will use the term spine also to refer to the paths that follow
the spine direction in a tree generated by a spine grammar.
In this manner, each such tree can be decomposed into a set of spines.

\begin{definition}
  \label{def:normalform}
  Spine grammar~$\mathcal G$ is in \emph{normal form} if each
  $(n \to r) \in P$ is of one of the forms
  \begin{enumerate}[label=(\roman*)]
  \item \emph{start:} $r = b(\alpha)$ or $r = \alpha$ for some $b \in
    \NONTERMINALS_1$ and $\alpha \in \Sigma_0$,
  \item \emph{chain:} $r = b_1 \bigl(b_2(\SBox)\bigr)$ for some $b_1,
    b_2 \in \NONTERMINALS_1$, or
  \item \emph{terminal:} $r = \sigma(\SBox, a)$~or~$r = \sigma(a,
    \SBox)$ for some $\sigma \in \TERMINALS_2$~and~$a \in
    \NONTERMINALS_0 \setminus \{ S \}$.
  \end{enumerate}
\end{definition}

In spine grammars in normal form the start nonterminal is isolated
and cannot occur on the right-hand sides.  The three production types
of the normal form are illustrated in Figure~\ref{fig:normalform}.

Using a single start production followed by a number of chain and
terminal productions, a nullary nonterminal $n$ can be rewritten to a
tree $t$ that consists of a spine of terminals, where each non-spinal
child is a nullary nonterminal.
Formally, for every nullary nonterminal~$n \in N_0$ let 
\[ I_{\mathcal G}(n) = \{t \in T_{\Sigma_2,\, \emptyset}(\Sigma_0 \cup
  N_0) \mid n \mathrel{(\mathord{\Rightarrow_{\mathcal G}} \mathbin;
    \mathord{\Rightarrow_{\mathcal G'}^*})} t\} \] 
where $\mathcal G'$ is the spine grammar~$\mathcal G$ without start
productions; i.e., $\mathcal G' = (N, \Sigma, S, P')$ with productions
$P' = \{(n \to r) \in P \mid n \in N_1\}$.
So we perform a single derivation step using the productions
of~$\mathcal G$ followed by any number of derivation steps using only
productions of $\mathcal G'$.  The elements of~$I_{\mathcal G}(n)$ are
called \emph{spinal trees} for~$n$ and their \emph{spine generator}
is~$n$.  By a suitable renaming of nonterminals we can always achieve
that the spine generator does not occur in any of its spinal trees.
Accordingly, the spine grammar $\mathcal G$ is \emph{normalized} if it
is in normal form and $I_{\mathcal G}(n) \subseteq T_{\Sigma_2,\,
  \emptyset}(\Sigma_0 \cup (N_0 \setminus \{n\}))$ for every nullary
nonterminal $n \in N_0$.

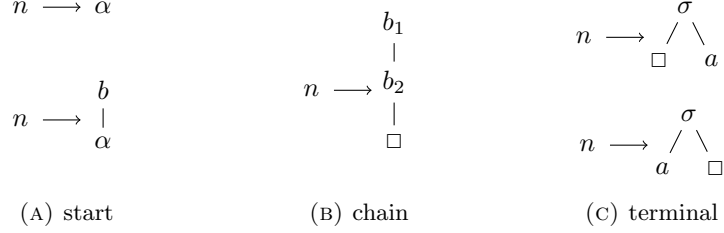
\begin{figure}
  \centering
  \hspace{3em}
  \begin{subfigure}[b]{0.2\textwidth}
    \centering
    \begin{tikzpicture}
      \node at (0.2,0) {$n$}[level distance=2em];
      \draw[->] (0.5,0) -- (1,0);
      \node at (1.3,0) {$\alpha$};
      \node at (0.2,-1.5) {$n$}[level distance=2em];
      \draw[->] (0.5,-1.5) -- (1,-1.5);
      \node at (1.3,-1.1) {$b$} [grow'=down,level distance=2em]
      child {node {$\alpha$} };
    \end{tikzpicture}
    \vspace{1em}
    \caption{start}
  \end{subfigure}
  \hfill
  \begin{subfigure}[b]{0.2\textwidth}
    \centering
    \begin{tikzpicture}
      \node at (0.2,-0.9) {$n$}[level distance=2em];
      \draw[->] (0.5,-0.9) -- (1,-0.9);
      \node at (1.3,0) {$b_1$} [grow'=down,level distance=2.25em]
      child {node {$b_2$}
        child {node {$\SBox$}} };
    \end{tikzpicture}
    \vspace{1em}
    \caption{chain}
  \end{subfigure}
  \hfill
  \begin{subfigure}[b]{0.2\textwidth}
    \centering
    \begin{tikzpicture}
      \node at (0.2,-0.9) {$n$}[level distance=2em];
      \draw[->] (0.5,-0.9) -- (1,-0.9);
      \node at (1.55,-0.5) {$\sigma$}
        [grow'=down,level distance=2em,sibling distance=2em]
      child {node {$a$}}
      child {node {$\SBox$}};
    \end{tikzpicture}
    \bigskip
  
    \begin{tikzpicture}
      \node at (0.2,-0.9) {$n$}[level distance=2em];
      \draw[->] (0.5,-0.9) -- (1,-0.9);
      \node at (1.55,-0.5) {$\sigma$}
        [grow'=down,level distance=2em,sibling distance=2em]
      child {node {$\SBox$}}
      child {node {$a$}};
    \end{tikzpicture}
    \caption{terminal}
  \end{subfigure}
  \hspace{3em}
  \caption{Types of productions of spine grammars in normal form
    (see Definition~\protect{\ref{def:normalform}})}
  \label{fig:normalform}
\end{figure}

\begin{example}
  \label{ex:spinegrammar}
  Let spine grammar $\mathcal G = (N, \TERMINALS, s, P)$
  with $N_1 = \{t, a, b, c, b', e\}$, $N_0 = \{ s, \overline{a},
  \overline{b}, \overline{c},\overline{e} \}$, $\Sigma_2 = \{
  \alpha_2, \beta_2, \gamma_2, \eta_2 \} $, $\Sigma_0 = \{ \alpha,
  \beta, \gamma, \delta \}$, and $P$~as shown below.
  \begin{align*}
    & &
    \overline{a}
    &\to \alpha
    & t
    &\to a \bigl(b'(\SBox) \bigr)
    & a
    &\to \alpha_2(\overline{a}, \SBox) \\*
    s
    &\to t(\delta)&
    \overline{b}
    &\to \beta
    & b'
    &\to b \bigl(c(\SBox) \bigr)
    & b
    &\to \beta_2(\SBox, \overline{b}) \\*
    \overline{b}
    & \to e(\beta)&
    \overline{c}
    &\to \gamma
    & b
    &\to a \bigl(b'(\SBox) \bigr)
    & c
    &\to \gamma_2(\SBox, \overline{c}) \\*
    & &
    \overline{e}
    &\to \beta
    & e
    &\to e \bigl(e(\SBox) \bigr)
    & e
    &\to \eta_2(\overline{e}, \SBox)
  \end{align*}
  The tree in Figure~\ref{fig:genTree}, in which the spines are marked
  by thick edges, is generated by~$\mathcal G$.  The spinal tree
  corresponding to the main spine of the depicted tree is shown in
  Figure~\ref{fig:spinaltree}.  The yield of~$T(\mathcal G)$ is $\{
  \alpha^n\, \delta\, \gamma^n\, \beta^m \mid n, m \in \natp \}$.
\end{example}

The following result is a variant of Theorem~1 of Fujiyoshi and
Kasai~\cite{fujiyoshi2000spinal}.

\begin{theorem}
  \label{thm:normalform}
  For every spine grammar there is a strongly equivalent normalized
  spine grammar.
\end{theorem}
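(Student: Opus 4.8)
The plan is to put $\mathcal G$ into normal form by a chain of tree-language-preserving transformations, each of which introduces only fresh nonterminals---never fresh terminals---so that $T(\mathcal G)$ is preserved exactly and the spine-direction map $d$ stays fixed on the shared terminal alphabet. First I would isolate the start symbol $S$ in the usual way, duplicating it by a fresh nullary nonterminal if necessary so that $S$ occurs on no right-hand side. Then I would bring every production into a ``short'' shape by repeatedly breaking large right-hand sides at fresh nonterminals, always cutting a binary terminal $\sigma$ so that the spine (hole-ward) child sits in position $d(\sigma)$; this is exactly where the spine-grammar hypothesis enters and guarantees that the result is again a spine grammar for the same $d$. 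After this step each $P_1$ production is a chain $c_1(c_2(\SBox))$, a terminal $\sigma(\SBox, a)$ or $\sigma(a, \SBox)$, or a unit $c(\SBox)$ or $\SBox$, and each $P_0$ production has one of the forms $n \to \alpha$, $n \to a$, $n \to c(a)$, or $n \to \sigma(a, a')$ with $\alpha \in \Sigma_0$, $c, c_1, c_2 \in N_1$, and $a, a' \in N_0$.

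Next I would remove the productions that do not fit the target forms. Unit productions ($n \to a$ for a nullary nonterminal, and $n \to c(\SBox)$ or $n \to \SBox$ for a unary one) are eliminated in the standard way, by composing the reflexive, transitive closure of the unit relation with the non-unit productions. The genuinely new work concerns the surviving $P_0$ productions whose right-hand side places a nullary nonterminal on the spine, namely $n \to c(a)$ and $n \to \sigma(a, a')$ whose $d$-directed child is the nullary nonterminal $a$: these are forbidden in normal form, where a nullary nonterminal may occur only as an off-spine child of a terminal production and every main spine must bottom out at a terminal.

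I expect this last elimination to be the main obstacle. The idea is to splice the spine contributed by $n$ onto the spine generated by $a$. Reducing $\sigma(a, a')$ to $c_\sigma(a)$ via a fresh unary nonterminal with terminal production $c_\sigma \to \sigma(\SBox, p)$ (the hole placed in position $d(\sigma)$, with $p \to a'$), it suffices to eliminate productions $n \to c(a)$. Using the start productions of $a$, I would replace $n \to c(a)$ by $n \to c(\beta)$ for each start production $a \to \beta$, and by a fresh start production $n \to B(\beta)$ together with the chain $B \to c(c_a(\SBox))$ for each start production $a \to c_a(\beta)$; a short derivation check shows the generated trees are unchanged. The delicate case is recursion, where the spine cycles back to its own generator---for instance $a \to c(a)$ together with $a \to \beta$. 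Here naive substitution would unroll forever, so the recursion must be lifted to the unary level and carried by chains: one keeps $a \to \beta$ and $a \to c(\beta)$ and adds a fresh recursive unary nonterminal $\bar a$ with $a \to \bar a(\beta)$ and chain productions $\bar a \to c(c(\SBox))$ and $\bar a \to c(\bar a(\SBox))$, which generates the deeper spines without ever introducing a unit production. Checking that this saturation introduces only finitely many composite nonterminals and reproduces the tree language exactly, for arbitrary spine cycles and not merely the single-symbol case, is the crux of the argument.

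Finally, once all $P_0$ productions are start productions and all $P_1$ productions are chains or terminals, the grammar is in normal form, and I would enforce the normalized condition by a renaming: since there are only finitely many nullary nonterminals, I would split each $n$ into copies so that the copy launching a spine stays disjoint from the copies that can reappear as off-spine children along that spine, yielding $I_{\mathcal G}(n) \subseteq T_{\Sigma_2, \emptyset}(\Sigma_0 \cup (N_0 \setminus \{n\}))$ for every $n$. Correctness of each stage is by induction on derivation length: every original derivation step is realized by the corresponding block of new productions and, conversely, every derivation of the transformed grammar regroups into such blocks that collapse to original steps, giving $T(\mathcal G) = T(\mathcal G')$.
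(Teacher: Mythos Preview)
Your decomposition is sound and you correctly isolate the only nontrivial step beyond Fujiyoshi--Kasai and standard CFG cleanup: eliminating nullary productions of the shape $n \to c(a)$ with $c \in N_1$, $a \in N_0$.  But your treatment of that step has a real gap.  The substitution you propose works when the dependency among nullary nonterminals is acyclic; for recursion you handle only the single self-loop $a \to c(a)$ by an ad-hoc lifted nonterminal~$\bar a$, and then explicitly defer the general case, writing that finiteness and correctness ``for arbitrary spine cycles \dots\ is the crux of the argument.''  That crux is exactly what is missing.  With mutual recursion such as $n \to c(a)$ and $a \to c'(n)$, your saturation does not obviously terminate, and a single $\bar a$ with two hard-wired chain productions does not generalize; one needs a finite family of unary nonterminals encoding the entire reachability structure among nullary nonterminals.

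The paper supplies precisely this missing idea, and in a uniform way that avoids any saturation argument.  Rather than unrolling, it \emph{guesses} the terminal symbol~$\alpha \in \Sigma_0$ that will eventually sit at the bottom of the spine, produces it immediately, and carries the guess in a fresh unary nonterminal~$\langle n,\alpha\rangle$.  Each offending $n \to b(a)$ is replaced by the start production $n \to \langle n,\alpha\rangle(\alpha)$ together with the chain $\langle n,\alpha\rangle \to b\bigl(\langle a,\alpha\rangle(\SBox)\bigr)$, and the guess is discharged by a collapsing production $\langle a,\alpha\rangle \to \SBox$ whenever $a \to \alpha$ was an original base production.  All problematic nullary productions---recursive or not---are thus converted in one stroke into chain and collapsing productions, after which standard $\varepsilon$-removal and unit-removal apply.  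The set of fresh unary nonterminals is $N_0 \times \Sigma_0$, hence manifestly finite.  If you want to complete your own route, this is the mechanism you need to replace your $\bar a$ construction with.
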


\begin{proof}
  The normal form of Fujiyoshi and
  Kasai~\cite[Definition~4.2]{fujiyoshi2000spinal} is different from
  that of Definition~\ref{def:normalform} in that the first case of
  productions of type~(i) has right"-hand side $r = b(a)$ with $b \in
  N_1$, $a \in N_0$, that productions of type~(ii) have right"-hand
  side $r = b_1(\cdots(b_m(\SBox))\cdots)$ with $m \in \nat$ and $b_i
  \in N_1$ for $i \in [m]$, and that in productions of type~(iii) the
  start nonterminal is not excluded from the set of nonterminals
  that can be produced.
  When starting from their normal form, standard techniques can be
  used to modify the grammar such that all productions $n \to r $ with
  $r = b_1(\cdots(b_m(\SBox))~\cdots)$ have $m=2$,
  that the start nonterminal is isolated, and that
  no nullary nonterminal can derive a spinal tree containing the same
  nonterminal.
  We therefore assume that these conditions are already met.

  Let $\mathcal G = (N,\Sigma,S,P)$ be a spine grammar that is already
  in the desired form except for the set of productions $P_1 = \{ n
  \to b(a) \mid n,a \in N_0,\, b \in N_1\} \subseteq P$.  Let
  $\mathcal G' = (N_0 \cup N_1', \Sigma,  S, P')$ with unary
  nonterminals $N_1' = N_1 \cup N_0 \times \Sigma_0$ and productions
  \begin{align*}
    P' = \, \big( P  \setminus P_1  \big) \,\cup\,
    & \big\{n \to \langle n,\alpha \rangle(\alpha) \mid n \to b(a) \in
      P_1, \alpha \in \Sigma_0 \big\} \,\cup\, \\
    & \big\{ \langle n,\alpha \rangle \to b(\langle a,\alpha
      \rangle(\SBox)) \mid n \to b(a) \in P_1, \alpha \in
      \Sigma_0\big\}\,\cup\, \\
    & \big\{ \langle a,\alpha \rangle \to \SBox \mid a \to \alpha \in
      P \big\} 
      \enspace .
  \end{align*}
  When a nonterminal is expanded to a non"-trivial spine, the terminal
  symbol at the bottom of that spine is guessed.  That symbol is
  immediately produced and stored in its parent nonterminal.  If the
  original nonterminal corresponding to the parent can be replaced by
  the guessed terminal symbol in the original grammar, the parent can
  be removed instead since the terminal symbol was already produced
  (see also~\cite[Section~5]{maletti2012strong} for a similar
  construction).
  It is easy to verify that~$\mathcal G'$ still generates the same
  tree language as~$\mathcal G$.

  After these modifications, the set $P'$ contains collapsing
  productions $a \to \SBox$ with $a \in N_1$ that are not allowed in
  our normal form. They are subsequently removed using the standard
  techniques for removal of $\varepsilon$-productions and unit
  productions from CFG to obtain a spine grammar of the desired normal
  form.
\end{proof}

\tikzmath{\sibldist1 = 3.5;} 
\tikzmath{\sibldist2 = 5;}
\tikzmath{\sibldist3 = 3.5;}

\tikzmath{\sibldist4 = 5;} 
\tikzmath{\sibldist5 = 6.5;}
\tikzmath{\sibldist6 = 3.5;}

\tikzmath{\lvldist1 = 1.9;} 
\tikzmath{\lvldist2 = 2.3;} 

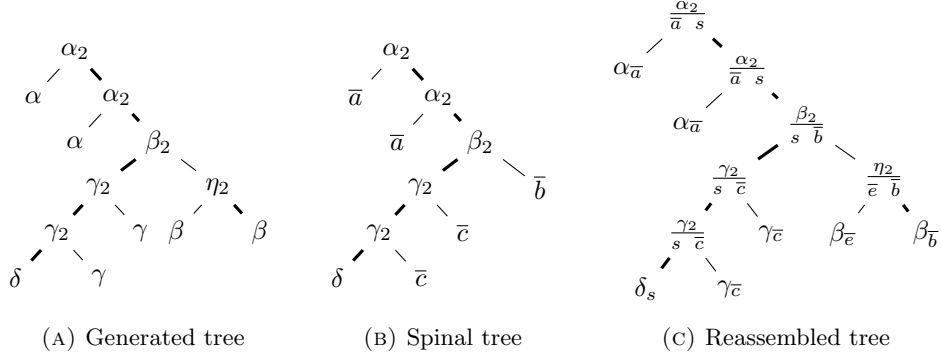
\begin{figure*}
    \begin{subfigure}[b]{0.30\textwidth}
  \centering
  \begin{tikzpicture}[scale=0.9, emph/.style={edge from parent/.style={black,very thick,draw}},norm/.style={edge from parent/.style={black,thin,draw}}]
    \node at (0,0) {$\alpha_2$}
      [grow'=down,level distance=\lvldist1 em,sibling distance=\sibldist1 em]
      child[emph] {node {$\alpha_2$}
      child[emph] {node {$\beta_2$}
        [level distance=\lvldist1 em,sibling distance=\sibldist2 em]
        child[norm] {node {$\eta_2$}
          [level distance=\lvldist1 em,sibling distance=\sibldist3 em]
          child[emph] {node {$\beta$}
          }
          child {node[norm] {$\beta$}}
        }
        child[emph] {node {$\gamma_2$}
          [level distance=\lvldist1 em,sibling distance=\sibldist1 em]
          child[norm] {node {$\gamma$}}
          child[emph] {node {$\gamma_2$}
            child[norm] {node {$\gamma$}}
            child[emph] {node {$\delta$}}
          }
        }
      }
      child[norm] {node {$\alpha$}}
    }
    child[norm] {node {$\alpha$}};
  \end{tikzpicture}
  \vspace{0.5em}
  \caption{Generated tree}
  \label{fig:genTree}
    \end{subfigure}
    \hfill
    \begin{subfigure}[b]{0.30\textwidth}
  \centering
  \begin{tikzpicture}[scale=0.9, emph/.style={edge from parent/.style={black,very thick,draw}},norm/.style={edge from parent/.style={black,thin,draw}}]
    \node at (0,0) {$\alpha_2$}
      [grow'=down,level distance=\lvldist1 em,sibling distance=\sibldist1 em]
      child[emph] {node {$\alpha_2$}
      child[emph] {node {$\beta_2$}
        [level distance=\lvldist1 em,sibling distance=\sibldist2 em]
        child[norm] {node {$\overline{b}$}
        }
        child[emph] {node {$\gamma_2$}
          [level distance=\lvldist1 em,sibling distance=\sibldist3 em]
          child[norm] {node {$\overline{c}$}}
          child[emph] {node {$\gamma_2$}
            child[norm] {node {$\overline{c}$}}
            child[emph] {node {$\delta$}}
          }
        }
      }
      child[norm] {node {$\overline{a}$}}
    }
    child[norm] {node {$\overline{a}$}};
  \end{tikzpicture}
  \vspace{0.5em}
  \caption{Spinal tree} 
  \label{fig:spinaltree}
    \end{subfigure}
    \hfill
    \begin{subfigure}[b]{0.37\textwidth}
        \centering
        \begin{tikzpicture}[scale=0.9, emph/.style={edge from parent/.style={very thick,draw}}
              ,norm/.style={edge from parent/.style={thin,draw}}]
            \node at (0,0) {$\ANNOA$}
            [grow'=down,level distance=\lvldist2 em,sibling distance=\sibldist4 em]
            child[emph] {node {$\ANNOA$}
              child[emph] {node {$\ANNOB$}
            [grow'=down,level distance=\lvldist2 em,sibling distance=\sibldist5 em]
                  child[norm] {node {$\ANNOE$}
            [grow'=down,level distance=\lvldist2 em,sibling distance=\sibldist6 em]
                      child[emph] {node {$\beta_{\overline{b}}$}
                      }
                      child[norm] {node  {$\beta_{\overline{e}}$}
                      }
                  }
                  child[emph] {node {$\ANNOC$}
            [grow'=down,level distance=\lvldist2 em,sibling distance=\sibldist6 em]
                      child[norm] {node  {$\gamma_{\overline{c}}$}
                      }
                      child[emph] {node {$\ANNOC$}
                          child[norm] {node  {$\gamma_{\overline{c}}$}
                          }
                          child[emph] {node {$\delta_s$}
                          }
                      }
                  }
            }
            child[norm] {node  {$\alpha_{\overline{a}}$}
            }
          }
          child[norm] {node  {$\alpha_{\overline{a}}$}
          }
          ;
        \end{tikzpicture}
        \caption{Reassembled tree}
        \label{fig:assembledTree}
    \end{subfigure}
    \caption{Tree generated by spine grammar $\mathcal G$,
    a spinal tree in $I_{\mathcal G}(s)$ (see
    Example~\protect{\ref{ex:spinegrammar}}), and a
    tree in $\mathcal F(\SPINES(\mathcal G))_S$ reassembled from spines (see
    Example~\protect{\ref{ex:reassembly}})}
\end{figure*}

\section{Tree-adjoining Grammars}
Before we proceed, we will briefly introduce TAG and sketch
  how a spine grammar is obtained from it.  TAG is a mildly
  context"-sensitive grammar formalism that operates on a set of
  \emph{elementary trees} of which a subset is \emph{initial}.  To
  generate a tree, we start with an initial tree and successively
  splice elementary trees into nodes using \emph{adjunction}
  operations.  In an adjunction, we select a node, insert a new tree
  there, and reinsert the original subtree below the selected node at
  the distinguished and specially marked \emph{foot node} of the
  inserted tree.
  We use the \emph{non-strict} variant of TAG, in
  which the root and foot labels of the inserted tree
  need not coincide with the label of the
  replaced node to perform an adjunction.
  To control at which nodes adjunction is allowed, each node is equipped
  with two types of constraints.  The \emph{selective adjunction}
  constraint specifies a set of trees that can be adjoined and the
  Boolean \emph{obligatory adjunction} constraint specifies whether
  adjunction is mandatory.  Only trees without obligatory adjunction
  constraints are part of the generated tree language.

Figure~\ref{fig:tagExample} shows the elementary trees of an
  example TAG. Only tree~$1$ is initial and foot nodes are marked by a
  superscript asterisk~$\cdot^*$ on the label.  Whenever adjunction is
  forbidden (i.e., empty set as selective adjunction constraint and
  non"-obligatory adjunction), we omit the constraints
  altogether.  Otherwise, the constraints are put next to the
  label.  For example, $\{2,3\}^{\scriptscriptstyle+}$ indicates
  that tree $2$~or~$3$ must ($+$ = obligatory) be adjoined.

We briefly sketch the transformation from TAG to sCFTG
  that was presented by Kepser and Rogers~\cite{kepser2011equivalence}.
  TAG is a notational variant of
  footed simple CFTG, in which all variables in right-hand sides of
  productions appear in order directly below a designated \emph{foot
    node}.  To obtain an sCFTG, the footed simple CFTG is first
  converted into a spine grammar, where the spine is the path from the
  root to the foot node, and then brought into normal form using the
  construction of Fujiyoshi and Kasai~\cite{fujiyoshi2000spinal}.  The
  spine grammar of Example~\ref{ex:spinegrammar} is strongly
  equivalent to the TAG shown in Figure~\ref{fig:tagExample}.

\begin{figure}
\hspace{3em}
  \begin{tikzpicture}[emph/.style={edge from parent/.style={black,very thick,draw}},norm/.style={edge from parent/.style={black,thin,draw}}]
    \node at (-1,0) {\underline{$1$:}};
    \node at (1.45,-.7) {\tiny $\{2,3\}^{\scriptscriptstyle+}$};
    \node at (0,0) {$\alpha_2$}
      [grow'=down,level distance=2em,sibling distance=4em]
      child[norm] {node {$\gamma_2$}
      [grow'=down,level distance=2em,sibling distance=3em]
      child[norm] {node {$\gamma$}}
      child[norm] {node {$\delta$}}
    }
    child[norm] {node {$\alpha$}};
  \end{tikzpicture}
  \hfill
  \begin{tikzpicture}[norm/.style={edge from parent/.style={black,very thick,draw}},norm/.style={edge from parent/.style={black,thin,draw}}]
    \node at (-1,0) {\underline{$2$:}};
    \node at (1.45,-.7) {\tiny $\{2,3\}^{\scriptscriptstyle+}$};
    \node at (0,0) {$\alpha_2$}
      [grow'=down,level distance=2em,sibling distance=4em]
      child[norm] {node {$\gamma_2$}
      [grow'=down,level distance=2em,sibling distance=3em]
      child[norm] {node {$\gamma$}
      }
      child[norm] {node {$\gamma_2^*$}}
    }
    child[norm] {node {$\alpha$}};
  \end{tikzpicture}
\hspace{3em}

  \bigskip
\hspace{3em}
  \begin{tikzpicture}[norm/.style={edge from parent/.style={black,very thick,draw}},norm/.style={edge from parent/.style={black,thin,draw}}]
    \node at (-1,0) {\underline{$3$:}};
    \node at (.95,-.7) {\tiny $\{4\}$};
    \node at (0,0) {$\beta_2$}
      [grow'=down,level distance=2em,sibling distance=3em]
      child[norm] {node {$\beta$}}
      child[norm] {node {$\gamma_2^*$}};
  \end{tikzpicture}
  \hfill
  \begin{tikzpicture}[norm/.style={edge from parent/.style={black,very thick,draw}},norm/.style={edge from parent/.style={black,thin,draw}}]
    \node at (-1,0) {\underline{$4$:}};
    \node at (.45,0) {\tiny $\{5\}$};
    \node at (0,0) {$\eta_2$}
      [grow'=down,level distance=2em,sibling distance=3em]
      child[norm] {node {$\beta^*$}}
      child[norm] {node {$\beta$}};
  \end{tikzpicture}
  \hfill
  \begin{tikzpicture}[norm/.style={edge from parent/.style={black,very thick,draw}},norm/.style={edge from parent/.style={black,thin,draw}}]
    \node at (-1,0) {\underline{$5$:}};
    \node at (.95,-.7) {\tiny $\{5\}$};
    \node at (0,0) {$\eta_2$}
      [grow'=down,level distance=2em,sibling distance=3em]
      child[norm] {node {$\eta_2^*$}}
      child[norm] {node {$\beta$}};
  \end{tikzpicture}
\hspace{3em}
\caption{Tree-adjoining grammar}
\label{fig:tagExample}
\end{figure}
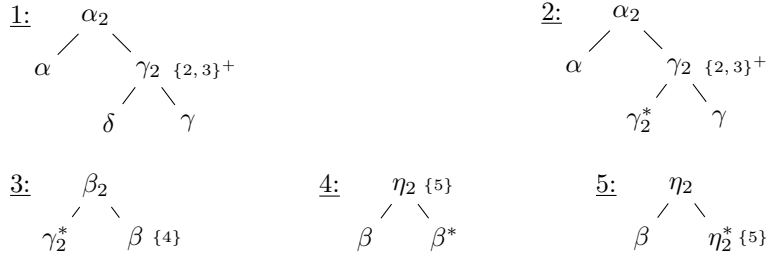

\section{Decomposition into Spines}
We proceed with the construction starting from
the normalized spine grammar~$\mathcal G$.
First, we will construct a CFG that captures all
information of~$\mathcal G$.  It represents the spinal trees (from
bottom to top) as strings and enriches the symbols with the
spine generator (initialized by start productions and
preserved by chain productions) and the non-spinal child (given by
terminal productions).  The order of these annotations depends on the
spine direction of the symbol.  The leftmost symbol of the generated
strings has only a
spine generator annotated since the bottom of the spine
has no children.  To simplify the notation, we write $n_g$ for $(n, g)
\in N^2$, $\alpha_n$ for $(\alpha, n) \in \Sigma_0 \times N$,
and $\tfrac \sigma{n_1 \;\, n_2}$ for $(\sigma, n_1, n_2) \in \Sigma_2
\times N^2$.

\begin{definition}
  \label{df:spines}
  Let spine grammar $\mathcal G$ be normalized and $\top \notin N$.
  The spines $\SPINES(\mathcal G) = L(\mathcal G')$ of~$\mathcal G$
  are the strings generated by the 
  CFG $\mathcal G' = (\{\top\} \cup N^2, \Sigma', \top, P')$ with
  $\Sigma' = (\Sigma_0 \times N) \cup (\Sigma_2 \times N^2)$ and
  productions $P' = P_0 \cup P_1 \cup P_2$ given by
  \begin{align*}
    P_0
    &= \bigl\{ \top \to \alpha_n \;\big|\; (n \to \alpha) \in P
      \bigr\} \cup
    \bigl\{ \top \to \alpha_n \: b_n
      \;\big|\; \bigl(n \to b(\alpha) \bigr) \in P \bigr\} \\[1.5ex]
    P_1
    &= \bigcup_{g \in N} \bigl\{ n_g \to b'_g \: b_g
      \;\big|\; \bigl(n \to b \bigl(b'(\SBox) \bigr) \bigr) \in P
      \bigr\} \\
    P_2
    &= \bigcup_{g \in N} \Bigl( \bigl\{n_g \to \frac \sigma {g
      \;\, n'} \;\big|\; \bigl(n \to \sigma(\SBox, n') \bigr) \in P
      \bigr\}\ \cup \\
    &\phantom{= \bigcup_{g \in N} \Bigl(}  \bigl\{ n_g \to \frac \sigma {n'\;\, g} \;\big|\; \bigl(n \to
      \sigma(n', \SBox) \bigr) \in P \bigr\} \Bigr) .
  \end{align*}
\end{definition}

\begin{example}
  \label{ex:cfg}
  We list some corresponding productions of the spine
  grammar~$\mathcal G$ (left) of Example~\ref{ex:spinegrammar} and the
  CFG~$\mathcal G'$ (right) for its spines~$\mathcal S(\mathcal G)$. 
  \begin{align*}
    \overline{a}
    &\to \alpha
    &:&
    & \top
    &\to \alpha_{\overline a} \\*
    s
    &\to t(\delta)
    &:&
    & \top
    &\to \delta_s t_s \\*
    t
    &\to a \bigl(b'(\SBox) \bigr)
    &:&
    & t_s
    &\to b_s' a_s
    & t_{a}
    &\to b_{a}' \, a_{a}
    & t_{\overline{b}}
    &\to b_{\overline{b}}' \, a_{\overline{b}}
    & \dots \\*
    a
    &\to \alpha_2(\overline{a}, \SBox)
    &:&
    & a_s
    &\to \frac{\alpha_2}{\overline{a} \quad s}
    & a_{a}
    &\to \frac{\alpha_2}{\overline{a} \quad a}
    & a_{\overline b}
    &\to \frac{\alpha_2}{\overline{a} \quad \overline{b}}
    & \dots
  \end{align*}

  Note that for each start production we obtain a single production
  since the nonterminal on the left"-hand side becomes the spine generator.
  On the other hand, for each chain or terminal production we have to
  combine them with all nonterminals, as we do not know the spine
  generator of the nonterminal on the left"-hand side of the original
  production.  When a string is derived, the spine generators are pulled
  through originating from start productions and are consistent
  throughout the string.
  The language generated by $\mathcal G'$ is
  \begin{align*}
  \mathcal \SPINES(\mathcal G) = \Big\{
  \delta_s\  \ANNOC^n \, \ANNOB \  \ANNOA^n \,\Big|\,
  n \in \natp \Big\}\, \cup\,
  \Big\{
  \beta_{\overline{b}} \  \ANNOE^m \,\Big|\,
  m \in \nat \Big\}\, \cup \, \Big\{
  \alpha_{\overline{a}},
  \beta_{\overline{e}},
  \gamma_{\overline{c}}
      \Big\} \, .\\[-1.4em]
  \end{align*}

\end{example}

Note that each string generated by the CFG belongs to~$(\Sigma_0
\times N) (\Sigma_2 \times N^2)^*$.  Next we define how to
reassemble those spines to form trees again, which then relabel to the
original trees generated by~$\mathcal G$.  The operation given in the
following definition describes how a string generated by the CFG can
be transformed into a tree by attaching subtrees in the non-spinal
direction of each symbol, whereby the non-spinal child annotation of the
symbol and the spinal annotation of the root of the attached tree have to match.

\begin{definition}
  \label{df:CombineSpines}
  Let $T \subseteq \TREES{\Sigma_2 \times N^2, \emptyset}(\Sigma_0
  \times N)$ and let $w \in A = (\Sigma_0 \times N) (\Sigma_2 \times N^2)^*$.
  The generator
  $\GENERATOR \colon (\Sigma_0 \times N) \cup (\Sigma_2 \times N^2
  ) \to N$ is the nonterminal in spine direction and is given by
  \[ \GENERATOR (a) =
    \begin{cases}
      n
      & \text{if } a = \alpha_n \in \Sigma_0 \times N \\*
      n_{d(\sigma)}
      & \text{if } a = \frac \sigma {n_1 \;\, n_2} \in \Sigma_2 \times
      N^2 \enspace.
    \end{cases} \]
  For $n \in N$, let $T_n = \bigl\{t \in T \;\big|\;
  \GENERATOR \bigl(t(\varepsilon) \bigr) = n \bigr\}$ 
  be those trees of~$T$ whose root label has~$n$ annotated in spine
  direction.  We recursively define the tree language $\attach_T(w) \subseteq
  \TREES{\Sigma_2 \times N^2, \emptyset}(\Sigma_0 \times N)$
  by $\attach_T(\alpha_n) = \{\alpha_n\}$ for all $\alpha_n \in
  \Sigma_0 \times N$, and
  \begin{align*}
    \attach_T \bigl(w \, \frac \sigma {n_1\;\, n_2}
      \bigr)  
    = \Biggl\{\frac \sigma {n_1\;\, n_2}(t_1, t_2) \;\Bigg|\;
      \begin{matrix}
         \,t_{d(\sigma)}   & \in & \attach_T(w) \\
         \,t_{3-d(\sigma)} & \in & T_{n_{3-d(\sigma)}}
      \end{matrix}
    \Biggr\}
  \end{align*}
  for all $w \in A$ and $\tfrac \sigma {n_1\;\, n_2} \in \Sigma_2
  \times N^2$.
\end{definition}

To obtain the tree language defined by~$\mathcal G$, it is necessary
to apply this operation recursively on the set of spines.

\begin{definition}
  \label{df:CombineSpines2}
  Let $L \subseteq (\Sigma_0 \times N) (\Sigma_2 \times N^2
  )^*$.  We inductively define the tree 
  language~$\FORESTS(L)$ generated by~$L$ as the smallest tree
  language~$\FORESTS$ with~$\attach_{\FORESTS}(w) \subseteq
  {\FORESTS}$ for every~$w \in L$.
\end{definition}

\begin{example}
\label{ex:reassembly}
The CFG $\mathcal G'$ of Example~\ref{ex:cfg} generates the set of spines
$\SPINES(\mathcal G)$ and
$\mathcal F(\SPINES(\mathcal G))_S$ contains the correctly
assembled trees formed from these spines.
Figure~\ref{fig:assembledTree} shows a tree of $\mathcal
F(\SPINES(\mathcal G))_S$ since the generator of the main spine is $S
= s$, which is stored in spinal direction in the root label $\ANNOA$.
We can observe the correspondence of annotations in non-spinal direction
and the spine generator of the respective child in the
same direction.
\end{example}

Next we prove that ${\FORESTS} \bigl(\SPINES(\mathcal G)
\bigr)_S$~and~$T(\mathcal G)$ coincide modulo relabeling.  This shows
that the context-free language~$\SPINES(\mathcal G)$ of spines
completely describes the tree language~$T(\mathcal G)$ generated
by~$\mathcal G$.

\begin{theorem}
  \label{thm:spines}
  Let spine grammar~$\mathcal G$ be normalized.  Then $\pi \bigl({\FORESTS}
  (\SPINES(\mathcal G))_S \bigr) = T(\mathcal G)$, where the
  relabeling $\pi \colon (\Sigma_0 \times N) \cup (\Sigma_2 \times N^2
  ) \to \Sigma_0 \cup \Sigma_2$ is given by~$\pi(\alpha_n) =
  \alpha$ and $\pi\bigl(\tfrac \sigma {n_1 \;\, n_2}\bigr) = \sigma$ for
  all~$\alpha \in \Sigma_0$, $\sigma \in \Sigma_2$, and $n, n_1, n_2
  \in N$.
\end{theorem}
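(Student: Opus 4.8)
The plan is to prove the two inclusions of $\pi\bigl(\FORESTS(\SPINES(\mathcal G))_S\bigr) = T(\mathcal G)$ by relating the recursive structure of $\FORESTS$ to the way derivations of the normalized grammar $\mathcal G$ decompose into spines. The conceptual backbone is a decomposition statement: since $\mathcal G$ is in normal form, every complete derivation $n \DERIVES_{\mathcal G}^* t$ with $n \in N_0$ and $t \in \TREES{\Sigma_2,\emptyset}(\Sigma_0)$ factors, after reordering, into a \emph{spine phase} producing a spinal tree $\hat t \in I_{\mathcal G}(n)$ (one start step followed by chain and terminal steps), followed by independent subderivations $n'_j \DERIVES_{\mathcal G}^* t_j$ that expand the nullary nonterminals $n'_1, \dots, n'_m$ sitting at the off-spine leaves of $\hat t$. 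Writing $U_n = \{t \in \TREES{\Sigma_2,\emptyset}(\Sigma_0) \mid n \DERIVES_{\mathcal G}^* t\}$, this yields the recursive characterization that $U_n$ consists exactly of the trees obtained from some $\hat t \in I_{\mathcal G}(n)$ by replacing each off-spine leaf $n'_j$ by some $t_j \in U_{n'_j}$; the theorem will follow from the case $n = S$, where $U_S = T(\mathcal G)$.

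First I would establish the correspondence between the CFG $\mathcal G'$ of Definition~\ref{df:spines} and spinal trees: for every $n \in N_0$, a string $w \in \SPINES(\mathcal G)$ whose last symbol has $\GENERATOR$ equal to $n$ is in bijection with a spinal tree $\hat t \in I_{\mathcal G}(n)$, where $w$ lists the spine of $\hat t$ from bottom to top, the threaded generator annotation records $n$, and the off-spine annotations record the nullary nonterminals $n'_j$. This is immediate from the construction, since the production sets $P_0$, $P_1$, $P_2$ of $\mathcal G'$ mirror the start, chain, and terminal productions of $\mathcal G$ one for one, with the slot $d(\sigma)$ carrying the generator and the slot $3-d(\sigma)$ carrying the off-spine child. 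Under this bijection, $\pi$ applied to $\attach_T(w)$ rebuilds exactly the spine of $\hat t$ while inserting the trees supplied for the off-spine slots, and the matching condition $t_{3-d(\sigma)} \in T_{n_{3-d(\sigma)}}$ corresponds precisely to requiring the root generator of the inserted subtree to equal the off-spine nonterminal $n'_j$.

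With this in hand I would prove, for every $n \in N_0$, the sharper statement $\pi\bigl(\FORESTS(\SPINES(\mathcal G))_n\bigr) = U_n$. For $\subseteq$ I induct on the least fixpoint stage at which a tree enters $\FORESTS(\SPINES(\mathcal G))$: a stage-one tree is a single leaf $\alpha_n$ coming from a start production $n \to \alpha$, while an inductive tree $t \in \attach_T(w)$ unfolds along $w$ into a spine (a spinal tree $\hat t \in I_{\mathcal G}(n)$ by the correspondence) carrying off-spine subtrees of strictly smaller stage; applying the induction hypothesis to obtain $n'_j \DERIVES_{\mathcal G}^* \pi(t_j)$ and then prepending the spine phase gives $n \DERIVES_{\mathcal G}^* \pi(t)$. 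For $\supseteq$ I induct on the length of $n \DERIVES_{\mathcal G}^* t$ (equivalently on the nesting depth of spines) and invoke the decomposition: the spine phase yields $\hat t \in I_{\mathcal G}(n)$, hence a string $w \in \SPINES(\mathcal G)$ of generator $n$; the strictly shorter subderivations $n'_j \DERIVES_{\mathcal G}^* t_j$ yield by induction trees $\hat t_j \in \FORESTS(\SPINES(\mathcal G))_{n'_j}$ with $\pi(\hat t_j) = t_j$; and $\attach$ reassembles these into a tree of $\FORESTS(\SPINES(\mathcal G))_n$ whose $\pi$-image is $t$.

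The main obstacle is the decomposition itself, specifically the reordering step: I must argue that in the normalized spine grammar the steps building the main spine can be pulled to the front of an arbitrary complete derivation, and that the remaining steps split into independent subderivations rooted at the off-spine leaves. This rests on the fact that spine-phase steps rewrite only the nested unary nonterminals lying along the current spine, whereas off-spine expansions act at disjoint positions, so the two kinds of steps commute in the tree-rewriting relation $\DERIVES_{\mathcal G}$. Normalization is exactly what makes the spine phase well defined: the start nonterminal is isolated and no spine generator recurs within its own spinal trees, so that $I_{\mathcal G}(n)$ captures precisely the completed spines and the recursion on $\FORESTS$ terminates in step with the fixpoint stages. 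The remaining bookkeeping — that the generator annotations stay consistent throughout and that $\pi$ erases exactly the annotations added in $\mathcal G'$ — is routine given the one-for-one correspondence of the productions.
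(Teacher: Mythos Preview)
Your proposal is correct and follows essentially the same approach as the paper: both prove the stronger statement $\pi\bigl(\FORESTS(\SPINES(\mathcal G))_n\bigr) = \{t \mid n \DERIVES_{\mathcal G}^* t\}$ for all $n \in N_0$, with the $\subseteq$ direction by induction on the tree (the paper uses tree size where you use fixpoint stage, which amounts to the same thing) and the $\supseteq$ direction by induction on derivation length after reordering so that a spinal tree in $I_{\mathcal G}(n)$ is derived first. Your explicit framing of the CFG--spinal-tree bijection and of the commutation argument behind the reordering is a bit more detailed than the paper's, but the structure is identical; one small aside---the normalization condition (no generator recurring in its own spinal trees) is not actually needed for this theorem, only normal form is, so your remark that normalization makes the recursion terminate overstates its role here.
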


\begin{proof}
    We will prove a more general statement.  Let $\mathcal G'$ be the
    CFG constructed for~$\mathcal G$ in
    Definition~\ref{df:spines}.  Given~$n \in N_0$, we
    will show that the tree language~$\pi({\FORESTS}(\SPINES(\mathcal
    G))_n)$ coincides with~$\{t \in \TREES{\Sigma_2,
      \emptyset}(\Sigma_0) \mid n \DERIVES_{\mathcal{G}}^* t 
    \}$, which contains the trees that can be derived in~$\mathcal G$
    starting from the nullary nonterminal~$n$.  To this end, we
    show inclusion in both directions.

    The inclusion~$(\subseteq)$ is proved by
    induction on the size of $t \in {\FORESTS}(\SPINES(\mathcal
    G))_n$.  Clearly, $t$~was constructed from a string $w =
    \alpha_n\,
    \tfrac {\sigma_1} {n_{1,1} \;\, n_{1,2}}\,
    \cdots\,
    \tfrac {\sigma_m} {n_{m,1} \;\, n_{m,2}}
    \in \SPINES(\mathcal G)$ with spine generator~$n$
    and $n_{i, d(\sigma_i)} = n$ for all~$i \in \{1, \dotsc, m \}$. 
    Hence, there is a derivation~$\top \DERIVES_{\mathcal G'}^* w$,
    where $\top$ is the start nonterminal of~$\mathcal G'$.
    Each production applied during this derivation corresponds
    uniquely to a production of the spine grammar~$\mathcal G$.
    This yields a derivation~$n \DERIVES_{\mathcal G}^* t_w$ of a
    spinal tree~$t_w \in I_{\mathcal G}(n)$, where the spine of~$t_w$
    is labeled (from bottom to top) by~$\pi(w)$.  Besides the spine,
    $t_w$~contains only nullary nonterminals that 
    for~$i \in \{1, \dotsc, m\}$
    are attached below~$\sigma_i$  in the non-spinal
    direction~$3-d(\sigma_i)$ and are labeled
    by~$\pi_1(n_{i,{3-d(\sigma_i)}})$, respectively.
    For better readability, let
    $n_i = n_{i,{3-d(\sigma_i)}}$ in the following.
    Each nonterminal annotation~$n_i$ in~$w$ implies the attachment
    of a tree~$t_i \in {\FORESTS}(\SPINES(\mathcal G))_{n_i}$.  These
    attached trees are smaller than~$t$, so we can use the induction
    hypothesis and conclude that there is a derivation~$n_i
    \DERIVES_{\mathcal G}^* \pi(t_i)$.  Combining those derivations, 
    we obtain a derivation~$n \DERIVES_{\mathcal G}^* \pi(t)$.
    
    To prove the other direction~$(\supseteq)$, we use induction on the
    length of the derivation~$n \DERIVES_{\mathcal G}^* t$ and show
    that there exists a tree~$t' \in {\FORESTS}(\SPINES(\mathcal
    G))_n$ with $\pi(t') = t$.  To this end, we reorder the
    derivation such that a spinal tree~$s \in I_{\mathcal G}(n)$ is
    derived first (i.e., $n \Rightarrow_{\mathcal G}^* s
    \Rightarrow_{\mathcal G}^* t$).
    Suppose that this spinal tree~$s$ has the nullary terminal symbol~$\alpha$ at the
    bottom and contains $m$~binary terminal symbols~$\seq \sigma1m$ (from bottom
    to top).
    Let $n_i$~be the non-spinal child of~$\sigma_i$.  It is 
    attached in direction~$3 - d(\sigma_i)$.  Due to the construction
    of~$\mathcal G'$, there is a corresponding derivation $\top
    \DERIVES_{\mathcal G'}^* w$, for which the derived string~$w \in
    \SPINES(\mathcal G)$ has the form $w =
    \alpha_n\,
    \tfrac {\sigma_1} {n_{1,1} \;\, n_{1,2}}\,
    \cdots\,
    \tfrac {\sigma_m} {n_{m,1} \;\, n_{m,2}}$
    with $n_{i,d(\sigma_i)} = n$
    and $n_{i,3-d(\sigma_i)} = n_i$ for all~$i \in \{ 1, \dotsc, m
    \}$.  The remaining nonterminals in~$s$ are replaced by
    subderivations $n_i \DERIVES_{\mathcal G}^* t_i$ for all~$i \in
    \{1, \dotsc, m\}$.  These subderivations are shorter than the overall
    derivation~$n \DERIVES_{\mathcal G}^* t$, so by the induction
    hypothesis, there exist trees $t'_i \in {\FORESTS}(\SPINES(\mathcal
    G))_{n_i}$ such that $\pi(t'_i) = t_i$ for all~$i \in \{1,
    \dotsc, m\}$.  Attaching those trees~$t'_i$ to~$w$, we 
    obtain the tree~$t' \in {\FORESTS}(\SPINES(\mathcal G))_n$.  As
    required, we have $\pi(t') = t$.

    We thus have proved that $\pi({\FORESTS}(\SPINES(\mathcal
    G))_S)$ coincides with $\{t \in \TREES{\Sigma_2,
      \emptyset}(\Sigma_0) \mid S \DERIVES_{\mathcal{G}}^* t\}$.  
    Hence, $\pi({\FORESTS}(\SPINES(\mathcal G))_S) = T(\mathcal
    G)$.
\end{proof}

\begin{corollary}
  \label{cor:spines}
  Let $\SHORTWORDS = \bigl\{w \in \Next
  \bigl(\SPINES(\mathcal G) \bigr) \mid \abs{w} = 1 \bigr\}$.
  Then there exists a pop"-normalized MPDA~$\mathcal A$ with
  $L(\mathcal A) \cup \SHORTWORDS  = \Next \bigl(\SPINES(\mathcal G)
  \bigr)$.
  Moreover, the tree languages ${\FORESTS} \bigl( L(\mathcal A) \cup  L_1
  \bigr)_S$~and~$T(\mathcal G)$ coincide modulo relabeling.
\end{corollary}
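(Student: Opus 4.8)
The plan is to build the automaton from the context-free language of spines and then to transfer the tree-language statement through Theorem~\ref{thm:spines}. By Definition~\ref{df:spines} the language $\SPINES(\mathcal G)$ is context-free, being generated by the CFG $\mathcal G'$. To apply Corollary~\ref{cor:MPDA} I first discard the single-symbol spines: since $(\Sigma')^{\geq 2}$ is regular and the context-free languages are closed under intersection with regular languages, the restriction $L = \SPINES(\mathcal G) \cap (\Sigma')^{\geq 2}$ is context-free and contained in $(\Sigma')^{\geq 2}$. Corollary~\ref{cor:MPDA} then yields a pop-normalized MPDA $\mathcal A$ with $L(\mathcal A) = \Next(L)$.

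Next I would argue that $L(\mathcal A) \cup \SHORTWORDS = \Next(\SPINES(\mathcal G))$. The crucial point is that $\Next$ is length-preserving: a spine $\sigma_1 \cdots \sigma_n$ is mapped to the single string $\langle \sigma_2, \sigma_1\rangle \cdots \langle \triangleleft, \sigma_n\rangle$ of the same length~$n$. Hence $\Next(L)$ is exactly the set of strings of length at least~$2$ in $\Next(\SPINES(\mathcal G))$, whereas $\SHORTWORDS$ collects those of length~$1$; as every spine has length at least~$1$, these two sets partition $\Next(\SPINES(\mathcal G))$, and the claimed equality follows. Recall also that an MPDA accepts only strings of length at least~$2$ by Theorem~\ref{lm:soPDA}, so $L(\mathcal A)$ contributes no length-$1$ strings.

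For the tree languages I would observe that the operations $\GENERATOR$, the spine direction $d$, and therefore $\attach$ and $\FORESTS$ of Definitions~\ref{df:CombineSpines} and~\ref{df:CombineSpines2} inspect a symbol only through its rank and its original annotation, i.e.\ through the projection $\pi_2$ that discards the look-ahead component. Extending these operations to the $\Next$-alphabet $(\Sigma' \cup \{\triangleleft\}) \times \Sigma'$ by letting each symbol inherit the data of its second component, I would check that $\pi_2$ commutes with tree assembly, so that $\pi_2\bigl(\FORESTS(\Next(\SPINES(\mathcal G)))_S\bigr) = \FORESTS(\SPINES(\mathcal G))_S$; the look-ahead carried by each spine node is uniquely determined by its spinal parent (or is $\triangleleft$ at a spine root), so lifting back is unambiguous. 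Composing with the relabeling $\pi$ of Theorem~\ref{thm:spines}, which maps $\FORESTS(\SPINES(\mathcal G))_S$ onto $T(\mathcal G)$, the relabeling $\pi \circ \pi_2$ sends $\FORESTS(L(\mathcal A) \cup \SHORTWORDS)_S$ onto $T(\mathcal G)$, giving coincidence modulo relabeling. The main obstacle is precisely this last bookkeeping: one must verify that enriching each symbol with a look-ahead neither creates nor removes assembled trees, which rests on the fact that $\attach$ consults only generators and spine directions and never the look-ahead annotation.
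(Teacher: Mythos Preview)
Your proposal is correct and follows essentially the same route as the paper's proof. The paper also obtains the MPDA by applying Corollary~\ref{cor:MPDA} to the length-$\geq 2$ part of $\SPINES(\mathcal G)$ (implicitly using the same intersection-with-regular and length-preservation observations you spell out), and then argues that the projection dropping the look-ahead component carries $\FORESTS\bigl(\Next(\SPINES(\mathcal G))\bigr)$ to $\FORESTS(\SPINES(\mathcal G))$, after which Theorem~\ref{thm:spines} finishes the job; your remark that the look-ahead at each node is uniquely determined by its spinal parent is exactly what makes the relabeling a bijection and hence the $\FORESTS$-images coincide rather than merely one containing the other.
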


\begin{proof}
  Without loss of generality (see Theorem~\ref{thm:normalform}), let
  $\mathcal G = (N, \Sigma, S, P)$ be a normalized spine
  grammar.  Clearly, $\SPINES(\mathcal G)$ is a context"-free subset
  of~$A_0A_2^*$ with $A_0 = \Sigma_0 \times N$ and $A_2 = \Sigma_2
  \times N^2$ by Definition~\ref{df:spines}.
  Corollary~\ref{cor:MPDA} yields a pop"-normalized MPDA~$\mathcal A$
  such that~$L(\mathcal A) = \bigl\{ w \in \Next
  \bigl(\SPINES(\mathcal G) \bigr) \mid  \abs{w} \geq 2 \bigr\}$.
  Moreover, we observe that $L(\mathcal A) \cup L_1 \subseteq (A'_2
  \times A_0)(A'_2 \times A_2)^*$ with $A'_2 = A_2 \cup
  \{\triangleleft\}$.  Clearly, $L(\mathcal A) \cup L_1$ relabels
  to~$\SPINES(\mathcal G)$ via the projection to the components of
  $A_0$~and~$A_2$.  Consider the ranked alphabet~$\Sigma'$ given 
  by~$\Sigma'_0 = A'_2 \times \Sigma_0$, $\Sigma'_1 = \emptyset$, and
  $\Sigma'_2 = A'_2 \times \Sigma_2$.  Then ${\FORESTS}
  \bigl( L(\mathcal A) \cup L_1 \bigr) \subseteq \TREES{\Sigma'_2
    \times N^2, \emptyset}(\Sigma'_0 \times N)$
  relabels to~${\FORESTS} \bigl(\SPINES(\mathcal G) \bigr) \subseteq
  \TREES{\Sigma_2 \times N^2, \emptyset}(\Sigma_0 \times N)$ via the
  projection to the components of $\Sigma_0 \times N$~and~$\Sigma_2
  \times N^2$.  By Theorem~\ref{thm:spines}, the tree language~${\FORESTS}
  \bigl(\SPINES(\mathcal G) \bigr)_S$ relabels to~$T(\mathcal G)$,
  which proves that ${\FORESTS} \bigl(L(\mathcal A) \cup L_1
  \bigr)_S$~and~$T(\mathcal G)$ coincide modulo relabeling.
\end{proof}

\begin{figure}
  \centering
  \begin{tikzpicture}[shorten >=1pt,node distance=1.7cm,on grid,auto,
    scale=0.8, every node/.style={scale=0.8}] 
    \node[state,initial] (q_0)   {$q_0$}; 
    \node[state] (q_1) [right=of q_0] {$q_1$}; 
    \node[state] (q_1') [below=of q_1] {$q_1'$}; 
    \node[state] (q_2) [right=of q_1] {$q_2$};
    \node[state] (q_3) [right=of q_2] {$q_3$}; 
    \node[state,accepting](q_3') [below=of q_3] {$q_3'$};
    \path[->] 
      (q_0) edge [] node {} (q_1)
      (q_0) edge [] node {} (q_1')
      (q_1) edge [loop above] node {$\upsilon \downarrow$} ()
      (q_1) edge [] node [] {$\upsilon \downarrow$} (q_1')
      (q_1') edge [] node {} (q_2)
      (q_2) edge [] node {$\upsilon \uparrow$} (q_3)
      (q_3) edge [loop above] node {$\upsilon \uparrow$} ()
      (q_3) edge [] node {$\omega \uparrow$} (q_3')
      (q_2) edge [] node [below left] {$\omega \uparrow$} (q_3');

    \node (anchor) [right=of q_3] {};

    \node[state,initial] (p_0) [right=of anchor]   {$p_0$}; 
    \node[state] (p_1) [right=of p_0] {$p_1$}; 
    \node[state,accepting] (p_1') [below=of p_1] {$p_1'$}; 
    \path[->] 
      (p_0) edge [] node {} (p_1)
      (p_0) edge [below left] node {$\chi \uparrow$} (p_1')
      (p_1) edge [loop right] node {} ()
      (p_1) edge [] node [] {$\chi \uparrow$} (p_1');
  \end{tikzpicture}
  \caption{Sample MPDA (see Example~\protect{\ref{ex:mpda}})}
  \label{fig:mpda}
\end{figure}
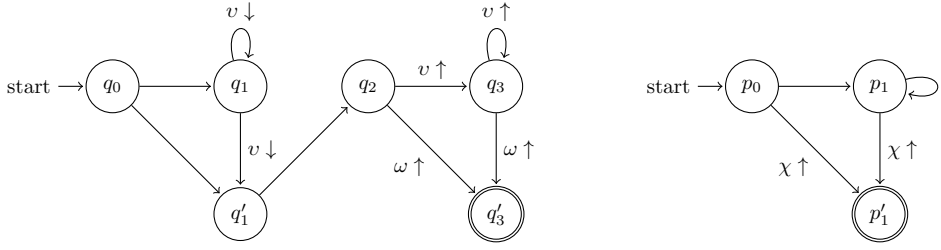

\begin{example}
  \label{ex:mpda}
  The MPDA constructed in Corollary~\ref{cor:spines} for the spine
  grammar~$\mathcal G$ of Example~\ref{ex:spinegrammar} is depicted in
  Figure~\ref{fig:mpda}.
  Initial states are indicated using a start marker
  and final states are marked by a double circle.  Push and pop stack
  operations are written with downwards and upwards arrows, respectively.
  The MPDA consists of two components.  The larger
  one describes the main spine, and the smaller one describes the side
  spine.  The distinction between the three stack symbols is necessary 
  due to pop-normalization, and the distinction between $q_1$~and~$q'_1$
  (and similar states) is necessary because
  of the lookahead provided by $\Next\bigl(\SPINES(\mathcal G) \bigr)$.
  For example, $\tau(q_1) =
  \bigl(\tfrac{\gamma_2}{s \quad \overline{c}}, \tfrac{\gamma_2}{s
    \quad \overline{c}} \bigr)$ and $\tau(q'_1) =
  \bigl(\tfrac{\beta_2}{s \quad \overline{b}}, \tfrac{\gamma_2}{s
    \quad \overline{c}} \bigr)$.  Similarly, $\tau(p_1) = (z, z)$ and
  $\tau(p'_1) = (\triangleleft, z)$ where $z =
  \tfrac{\eta_2}{\overline{e} \quad \overline{b}}$.  To completely
  capture the behavior of~$\mathcal G$, we additionally require the
  set $L_1 = \{(\triangleleft, \alpha_{\overline{a}}),
  (\triangleleft, \beta_{\overline{b}}), (\triangleleft,
  \beta_{\overline{e}}), (\triangleleft, \gamma_{\overline{c}})\}$,
  which contains the spines of length~$1$.
\end{example}

\section{CCG Construction}
\label{sec:constructingccg}
In this section, let $\mathcal G = (N, \Sigma, S, P)$ be a normalized
spine grammar with spine direction $d \colon \Sigma \to \{1, 2\}$ and
$\mathcal A = (Q, \Delta, \Gamma, \delta, \tau, I, F)$ the
pop-normalized MPDA constructed in Corollary~\ref{cor:spines} with
$\mathord{\RETURN} \colon \Gamma \to Q$.  We note that $\Delta =
\Sigma' \times \Sigma''$ with $\Sigma' =  \{\triangleleft\} \cup
(\Sigma_2 \times N^2)$ as well as $\Sigma'' = (\Sigma_0 \times N) \cup
(\Sigma_2 \times N^2)$.  Moreover, let $\bot \notin Q$ be a special
symbol.  To provide better access to the components of the
MPDA~$\mathcal A$, we define some additional maps.

The spine generator $\GENERATOR \colon Q \to N$ is given
by $\GENERATOR(q) = \GENERATOR(s_2)$ for every state $q \in Q$,
where $\tau(q) = (s_1, s_2) \in \Delta$.
Since $\AUTOMATON$~cannot accept
strings of length~$1$, we have to treat them separately.  Let
$\SHORTWORDS = \bigl\{w \in \Next \bigl(\SPINES(\mathcal G) \bigr)
\mid \abs{w} = 1 \bigr\}$ and $\GENERATOR \colon L_1 \to N$ be given
by $\GENERATOR(w) = n$ for all $w = (\triangleleft, \alpha_n) \in
L_1$.
We extend $\tau \colon Q \to
\Delta$ to $\tau' \colon (Q\, \cup \,\SHORTWORDS) \to \Delta$ by $\tau'(q)
= \tau(q)$ for all $q \in Q$ and $\tau'(a) = a$ for short strings~$a
\in L_1$.

Recall that $D = \{\mathord{\SLASHF}, \mathord{\SLASHB}\}$.
Given state $q \in Q \setminus F$, the slash
type $\SLASHDIR \colon (Q \setminus F) \to D$
states whether the symbol $\tau(q)$ produced by state~$q$
occurs as the first or second child of its parent symbol.
The combining nonterminal $\COMB \colon (Q \setminus F) \cup \{\bot\} \to N$
denotes with which spine generator the symbol $\tau(q)$ is combined.
Let $\tau(q) = \bigl(\tfrac \sigma {n_1 \;\, n_2}, s_2
\bigr)$ with $\tfrac \sigma {n_1 \;\, n_2} \in \Sigma_2 \times N^2$
and $s_2 \in \Sigma''$.  The slash type and the combining nonterminal
can be determined from the next symbol~$\tfrac \sigma {n_1 \;\,
  n_2}$.  Formally, $\SLASHDIR(q) = \mathord{\SLASHF}$ if $d(\sigma) =
1$ and $\SLASHDIR(q) = \mathord{\SLASHB}$ otherwise.  Further,
$\COMB(q) = n_{3-d(\sigma)}$ and $\COMB(\bot) = S$.

We simulate the accepting runs of~$\AUTOMATON$ in the spines
consisting of primary categories of the CCG.
These \emph{primary spines} are paths in a CCG derivation tree that
start with a lexical category at a leaf of the derivation tree and
consist of a (possibly empty) sequence of primary categories followed
by a secondary or initial category.
The shortest possible primary spine is a single lexical category that serves
as a secondary or initial category.
The main idea is
that the primary categories on the primary spine store the current
configuration of~$\AUTOMATON$.  This is achieved by adding an
additional argument for transitions that push a symbol, whereas for
each pop transition, an argument is removed.  The last
argument stores the current state in the first component and the top
of the stack in the second component.  The previous arguments store
the preceding stack symbols in their second components and the state 
the automaton returns to when the stack symbol stored in the next
argument is popped in the first components.  To implement the required
transformations of consecutive primary categories, the secondary
categories need to have a specific structure.  This mandates
that the categories at the top of a spine (which act as secondary
categories unless they belong to the main spine) cannot store their corresponding automaton state in
the first component of the last argument as usual, but instead utilize
the third component of their target.  Thus each argument stores the
final state corresponding to its secondary combination partner
in the third component.
This third component also allows us to
decide whether a category is primary:
A category is a primary category if and only if the
spine generator of the state stored in the first component of the last
argument and the spine generator of the state stored in the last
component of the target coincide.
This is possible since $\mathcal G$ is normalized, which yields 
that attaching spines have a spine generator that is different from the
spine generator of the spine that they attach to.

\begin{definition}
  \label{df:simccg}
  We define the \CCG\ $\mathcal G_{\mathcal A, L_1} = (\Delta_0, A,
  R, I', {\LEXICON})$ as follows:

  Let $A = \{(q, \gamma, f) \in A' \mid \GENERATOR(f) = \COMB(q)\}$
  with $A' = (Q \cup \{\bot\}) \times \Gamma^{\leq 1} \times (F \cup
  \SHORTWORDS)$.  We use~$a_i$ to refer to the $i$"~th component
  of an atom~$a \in A$.  Additionally, let $I' = \{ (\bot,
  \varepsilon, f) \in A \mid \GENERATOR(f) = S\}$.
    
  In the rules~$R = \bigcup_{\SLASH \in D} (R_1^{\SLASH} \cup
  R_2^{\SLASH} \cup R_3^{\SLASH})$ we underline the primary category
  $ax \SLASHF b$, which always needs to fulfill $\GENERATOR(a_3) =
  \GENERATOR(b_1)$.
  \begin{align*}
    R_1^{\SLASHF}
    &= \hspace{-.5em} \bigcup_{\substack{a, b, c \in A,\,
      \mathord{\SLASH} \in D \\ (b_1, \varepsilon, \varepsilon, c_1)
    \in \delta \\ b_2 = c_2}} \Biggl\{ \frac{ax \,\SLASH\,
    c}{\underline{ax \SLASHF b} \qquad b \,\SLASH\, c} \Biggr\}
    \qquad \quad
    R_2^{\SLASHF}
    = \hspace{-1em} \bigcup_{\substack{a, b, c, e \in A,\,
      \mathord{\SLASH}, \mathord{\SLASH'} \in D \\ (b_1,
    \varepsilon, e_2, e_1) \in \delta \\ b_2 = c_2 \\ c_1 =
    \RETURN(e_2)}} \Biggl\{ \frac{ax \,\SLASH\, c \,\SLASH'\,
    e}{\underline{ax \SLASHF b} \qquad b \,\SLASH\, c \,\SLASH'\, e}
    \Biggr\} \\*
    R_3^{\SLASHF}
    &= \hspace{-.5em} \bigcup_{\substack{a, b \in A \\ (b_1, b_2,
    \varepsilon, q) \in \delta}} \Biggl\{ \frac{ax}{\underline{ax
    \SLASHF b} \qquad b} \Biggr\}
  \end{align*}
  We listed all the forward rules, but for each forward rule there
  also exists a symmetric backward rule yielding the rule
  sets~$R_1^{\SLASHBS}$, $R_2^{\SLASHBS}$, and~$R_3^{\SLASHBS}$.

  We require some notions for the lexicon.
  A category $c \in \CATEGORIES{A}$ is \emph{well-formed} if we have
  $\mathord{\SLASH} = \SLASHDIR(b_1)$, $b_1 \in Q$, and $b_2 \in \Gamma$
  for every $i \in [\ARITY(c)]$ with $\SLASH\, b = \ARGUMENT(c, i)$.  Let $\RESTRCAT =
  \{ c \in \CATEGORIES{A} \mid c \text{ well-formed}\, \}$ be the set of
  well-formed categories.  Clearly $I' \subseteq \RESTRCAT$.  In
  addition, we introduce sets $\TOPCATS$ and $\TOPCATL$ of
  top-of-spine categories derived from the short strings of~$L_1$ and
  the strings accepted by~$\mathcal A$, respectively: 
  \begin{align*}
    \TOPCATS
    &= \{ a \in I' \mid a_3 \in \SHORTWORDS \}
    \, \cup \bigcup_{\substack{r \in R \\ ax =
    \SECONDARY(r)}} \{ ax \in \RESTRCAT \mid a_3 \in \SHORTWORDS \} \\*
    \TOPCATL
    &= \{ a \in I' \mid a_3 \in F\} \phantom{_1}
    \, \cup \bigcup_{\substack{r \in R \\ ax =
    \SECONDARY(r)}} \{ ax \in \RESTRCAT \mid a_3 \in F\}
  \end{align*}
  Note that $\TOPCATS \cup \TOPCATL \subseteq \RESTRCAT$.
  For all $\alpha \in \Delta_0 = \Sigma' \times ( \Sigma_0 \times N )$
  we define the lexicon as follows :
  \begin{align*}
    {\LEXICON}(\alpha)
    &=
    \left\{ ax \;\middle\vert\;
      \begin{matrix}
    ax \in \TOPCATS\\
    \tau'(a_3) = \alpha
    \end{matrix}
    \right\} 
    \, \cup\,
    \left\{ax \,\SLASH\, b \in \RESTRCAT \;\middle\vert\;
      \begin{matrix}
        ax \in \TOPCATL,&
        \GENERATOR(a_3) = \GENERATOR(b_1)\\
        \tau'(b_1) = \alpha,&
        \RETURN(b_2) = a_3 \\
        b_1 \in I
      \end{matrix}
    \right\} \\[-1em]
  \end{align*}
\end{definition}

Each atom of~$A$ consists of three components.  The first component
stores the current state of~$\AUTOMATON$ (or the special
symbol~$\bot$), the second component stores the current symbol at the
top of the stack, and the third component stores the final state
or the symbol
corresponding to the combining category of the attaching side 
spine.  With this intuition, the rule system directly implements the
transitions of~$\AUTOMATON$.

The lexicon assigns categories to symbols that can label leaves,
so these symbols are taken from the nullary terminal symbols.
The assigned categories consist of a category
that appears at the top of
a spine and an additional argument for the initial state of
an accepting run.  The spines of length~$1$ are translated directly to
secondary categories or initial categories.

Let us make two general observations that hold for all
categories that appear in derivation trees
of~$\mathcal{G}_{\AUTOMATON,L_1}$:
\begin{enumerate}
\item
All categories are well"-formed.
This follows from the fact that only well"-formed categories occur in the
lexicon and all categories in the derivation trees consist of atoms
and arguments that were already present in the
lexicon~\cite[Lemma~3.1]{vijayshanker1994equivalence}.
\item
All primary categories $ax \,\SLASH\, b$ obey $\GENERATOR(a_3) =
\GENERATOR(b_1)$.  This is directly required by the rule system. 
\end{enumerate}

Finally, we will now describe how to relabel the derivation 
trees~$\mathcal{D}(\mathcal{G}_{\AUTOMATON,L_1})$ of the
CCG~$\mathcal{G}_{\AUTOMATON,L_1}$ that uses categories built using the
input symbols of the MPDA~$\mathcal{A}$.  Note that only well"-formed
categories will occur in derivation trees.  Primary and non"-primary
categories are relabeled differently.    The relabeling $\rho \colon \RESTRCAT \to \Delta$ is
defined for every $c \in \RESTRCAT$ by $\rho(ax \,\SLASH\, b) =
\STATEOUTPUTEXTEND(b_1)$ for all primary categories $ax \,\SLASH\, b
\in \RESTRCAT$; i.e., $\GENERATOR(a_3) = \GENERATOR(b_1)$.  Otherwise
$\rho(ax) = \STATEOUTPUTEXTEND(a_3)$ for all initial and secondary categories~$ax
\in \RESTRCAT$.

The following property requires that the spine
grammar~$\mathcal G$ is normalized, so a spine never has the same
spine generator as its attached spines.

\begin{lemma}
   For all secondary categories
  $ax \,\SLASH\, b$ we have $\GENERATOR(a_3) \neq \GENERATOR(b_1)$.
\end{lemma}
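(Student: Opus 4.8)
The plan is to exploit the rigid shape of secondary categories imposed by the rule system and then to play the constraint built into the atom set~$A$ against normalization of~$\mathcal G$. First I would pin down the shape of a secondary category that has a last argument. Because the secondary category of every rule in Definition~\ref{df:simccg} contains no argument context variable, a secondary category of the form $ax \,\SLASH\, b$ must literally equal $\SECONDARY(r)$ for some $r \in R_1^{\SLASH} \cup R_2^{\SLASH}$; the secondary category of a rule in $R_3^{\SLASH}$ is a single atom and has no last argument. In both remaining cases the rule is witnessed by a transition of~$\mathcal A$ leading from the first component~$a_1$ of the target to the first component~$b_1$ of the last argument, namely $(a_1, \varepsilon, \varepsilon, b_1) \in \delta$ for $R_1^{\SLASH}$ and $(a_1, \varepsilon, b_2, b_1) \in \delta$ for $R_2^{\SLASH}$. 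Hence $a_1, b_1 \in Q$, and since membership $a \in A$ presupposes that $\COMB(a_1)$ is defined, we even have $a_1 \in Q \setminus F$; moreover $a \in A$ gives $\GENERATOR(a_3) = \COMB(a_1)$.

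The central observation is that every symbol occurring in a single spine string of~$\SPINES(\mathcal G)$ carries one and the same spine generator~$g$, which for each binary symbol $\tfrac{\sigma}{n_1\,n_2}$ is precisely the annotation $n_{d(\sigma)}$ in spine direction: this generator is fixed by the $\top$-production that opens the derivation of the string and is left unchanged by every later step. Two consequences follow. First, since~$\mathcal A$ accepts $\Next(\SPINES(\mathcal G))$, consecutive states of an accepting run output consecutive symbols of one spine string and hence have equal generator, so after discarding transitions that occur in no accepting run every transition of~$\mathcal A$ preserves the generator; in particular $\GENERATOR(a_1) = \GENERATOR(b_1)$ for the defining transition. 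Second, for $q \in Q \setminus F$ with next symbol $\tfrac{\sigma}{n_1\,n_2}$ we have $\GENERATOR(q) = n_{d(\sigma)}$, whereas $\COMB(q) = n_{3-d(\sigma)}$ is the generator in the non-spinal direction.

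Now normalization of~$\mathcal G$ enters: a spinal tree for~$n_{d(\sigma)}$ never contains $n_{d(\sigma)}$ among its non-spinal children, so $n_{3-d(\sigma)} \neq n_{d(\sigma)}$, i.e.\ $\COMB(q) \neq \GENERATOR(q)$ for every $q \in Q \setminus F$. Combining the three relations obtained for our secondary category yields $\GENERATOR(a_3) = \COMB(a_1) \neq \GENERATOR(a_1) = \GENERATOR(b_1)$, which is the assertion; here the first equality is $a \in A$, the strict step is normalization applied to $a_1 \in Q \setminus F$, and the last equality is generator-preservation of the defining transition.

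I expect the generator-preservation statement to be the one real obstacle, since it is the single step that reaches back through the chain of constructions---Moore conversion, pop-normalization, and the $\Next$-closure---that produced~$\mathcal A$ from~$\SPINES(\mathcal G)$; making it rigorous requires either taking~$\mathcal A$ to be trim or propagating the generator component explicitly through those constructions. By contrast, the normalization step, although it is exactly where the hypothesis on~$\mathcal G$ is indispensable, falls out immediately from the definitions of $\GENERATOR$, $\COMB$, and the spine-direction map~$d$.
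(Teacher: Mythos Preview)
Your proposal is correct and follows essentially the same route as the paper: both argue the chain $\GENERATOR(a_3) = \COMB(a_1) \neq \GENERATOR(a_1) = \GENERATOR(b_1)$, using the atom constraint for the first equality, normalization of~$\mathcal G$ for the inequality, and the fact that $a_1, b_1$ are linked by a single transition of~$\mathcal A$ for the last equality. Your treatment is in fact more scrupulous than the paper's, which simply asserts generator-preservation along transitions without commenting on trimness; you correctly flag this as the one place where extra care (trimming~$\mathcal A$ or tracking the generator through the MPDA constructions) is needed.
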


\begin{proof}
  We have~$\GENERATOR(a_3) = \COMB(a_1)$ by the definition of atoms~$A$.
  Additionally, we have $\GENERATOR(a_1)
  = \GENERATOR(b_1)$ by the construction of the rule system,
  since $a_1, b_1 \in Q$ occur in single transition of~$\mathcal A$.
  However, the spine generator~$\GENERATOR(a_1)$
  never coincides with the spine generator~$\COMB(a_1)$
  of an attaching spine due to normalization of $\mathcal G$.
  So $\COMB(a_1) \neq \GENERATOR(a_1)$ and thus $\GENERATOR(a_3) =
  \COMB(a_1) \neq \GENERATOR(a_1) = \GENERATOR(b_1)$.
\end{proof}

We are now ready to describe the general form of primary spines
of~$\mathcal{G}_{\AUTOMATON, L_1}$.  Given a primary spine~$c_0
\cdots c_n$ with~$n \geq 1$ read from lexicon entry towards the root,
we know that it starts with a lexicon entry $c_0 = ax \,\SLASH\, b \in
\LEXICON(\Delta_0)$ and ends with the non"-primary category~$ax$,
which as such cannot be further modified.
Hence each
of the categories~$c \in \{\seq c0{n-1}\}$ has the form~$ax
\,\SLASH_1\, b_1 \cdots \,\SLASH_m\, b_m$ with $m \in \natp$.  Let $b_i =
(q_i, \gamma_i, f_i)$ for every $i \in [m]$.  The category~$c_n$ is
relabeled to~$\STATEOUTPUTEXTEND(a_3)$ and $c$~is relabeled
to~$\STATEOUTPUTEXTEND(q_m)$.  Additionally, unless $a_1 = \bot$,
the first components of all
atoms in~$ax$ have the same spine generator $\GENERATOR(a_1)$ and
$\GENERATOR(q_1) = \cdots = \GENERATOR(q_m)$, but $\GENERATOR(a_1)
\neq \GENERATOR(q_1)$.  Finally, neighboring arguments~$\SLASH_{i-1}\,
b_{i-1} \,\SLASH_i\, b_i$ in the suffix are coupled such
that~$\RETURN(\gamma_i) = q_{i-1}$ for all~$i \in [m] \setminus
\{1\}$.  This coupling is introduced in the rules of second degree and
preserved by the other rules. 

Using these observations, it can be proved that the primary spines
of~$\mathcal G_{\AUTOMATON, L_1}$ are relabeled to strings
of~$\Next \bigl(\SPINES(\mathcal G) \bigr)$ and vice versa.
Additionally, spines attach in essentially the same manner in the CCG
and using~$\FORESTS$.
This yields the result that, given a spine grammar, it is possible to
construct a CCG that generates the same tree language.
We will prove the correctness of our construction in the following two subsections.
But first, we will illustrate it by means of an example.

\newcommand{\icol}[1]{
  \left(\begin{smallmatrix}#1\end{smallmatrix}\right)}

\newcommand{\icolt}[3]{
  \left(\begin{smallmatrix}#1\vphantom{q_3'}\\#2\vphantom{q_3'}\\#3\vphantom{q_3'}\end{smallmatrix}\right)}

\newcommand{\irow}[1]{
  \begin{smallmatrix}(#1)\end{smallmatrix}}

\begin{figure*}
  \resizebox{\textwidth}{!}{
  \raisebox{\depth}{\scalebox{1}[-1]{
      $\INFER{\ROT{\icolt{\bot}{\varepsilon}{q_3'} \SLASHB
          \icolt{q_3}{\omega}{\alpha} \SLASHB
          \icolt{q_2}{\upsilon}{\alpha}}}{%
        \INFER{\ROT{\icolt{\bot}{\varepsilon}{q_3'} \SLASHB
            \icolt{q_3}{\omega}{\alpha} \SLASHF
            \icolt{q_1'}{\upsilon}{p_1'}}}{
          \INFER{\ROT{\icolt{\bot}{\varepsilon}{q_3'} \SLASHF
              \icolt{q_1}{\omega}{\gamma}}}{
            \PROJECT[0]{\ROT{\icolt{\bot}{\varepsilon}{q_3'} \SLASHF
                \icolt{q_0}{\omega}{\gamma}}}{}
            &
            \PROJECT[0]{\ROT{\icolt{q_0}{\omega}{\gamma} \SLASHF
                \icolt{q_1}{\omega}{\gamma}}}{}}
          &
          \PROJECT[0]{\ROT{\icolt{q_1}{\omega}{\gamma} \SLASHB
              \icolt{q_3}{\omega}{\alpha} \SLASHF
              \icolt{q_1'}{\upsilon}{p_1'}}}{}}
        &
        \INFER{\ROT{\icolt{q_1'}{\upsilon}{p_1'} \SLASHB
            \icolt{q_2}{\upsilon}{\alpha}}}{
          \PROJECT[0]{\ROT{\icolt{p_0}{\chi}{\beta}}}{}
          &
          \PROJECT[0]{\ROT{\icolt{q_1'}{\upsilon}{p_1'} \SLASHB
              \icolt{q_2}{\upsilon}{\alpha} \SLASHB
              \icolt{p_0}{\chi}{\beta}}}{}}}$
    }}
  }
  \vspace{-3ex}
  \caption{Part of a derivation tree of~$\mathcal G_{\AUTOMATON,
      L_1}$ (see Example~\protect{\ref{ex:ccgconstr}})}
  \label{fig:ccgconstr}
\end{figure*}

\begin{example}
  \label{ex:ccgconstr}
  Figure~\ref{fig:ccgconstr} shows part of the derivation tree of
  CCG~$\mathcal G_{\AUTOMATON, L_1}$ that corresponds to the tree of
  Figure~\ref{fig:genTree}, which is generated by the spinal
  grammar~$\mathcal G$ of Example~\ref{ex:spinegrammar}.  We use the
  following abbreviations: $\alpha = (\triangleleft,
  \alpha_{\overline{a}})$, $\beta = (\triangleleft,
  \beta_{\overline{b}})$, and $\gamma = (\triangleleft,
  \gamma_{\overline{c}})$.  The labeling of the depicted section is
  $\delta\, \gamma_2\, \gamma_2\, \beta_2$ for the main spine and
  $\beta\, \eta_2$ for the side spine (see Figure~\ref{fig:genTree}).
  The corresponding runs of~$\AUTOMATON$ are $\bigl(\langle q_0,
  \omega \rangle, \langle q_1, \omega \rangle, \langle q_1', \upsilon
  \omega \rangle, \langle q_2, \upsilon \omega \rangle \bigr)$ and
  $\bigl(\langle p_0, \chi \rangle, \langle p_1', \varepsilon \rangle
  \bigr)$ (see Example~\ref{ex:mpda}, Figure~\ref{fig:mpda}).

  Let us observe how the transitions of~$\AUTOMATON$ are simulated
  by~$\mathcal G_{\AUTOMATON, L_1}$.  The first transition $(q_0,
  \varepsilon, \varepsilon, q_1)$ on the main spine does not modify
  the stack.  It is implemented by replacing the last argument
  $\SLASHF (q_0, \omega, \gamma)$ by $\SLASHF (q_1, \omega, \gamma)$.
  The next transition $(q_1, \varepsilon, \upsilon, q_1')$ pushes the
  symbol~$\upsilon$ to the stack.  The argument $\SLASHF (q_1, \omega,
  \gamma)$ is thus replaced by two arguments $\SLASHB (q_3, \omega, \alpha) \SLASHF
  (q'_1, \upsilon, p'_1)$.  As the stack grows, an additional argument
  with the new state and stack symbol is added.  The previous argument
  stores $\RETURN(\upsilon) = q_3$ to ensure that we enter the correct
  state after popping~$\upsilon$.  It also contains the previous
  unchanged stack symbol~$\omega$.  The pop transition $(p_0,
  \chi, \varepsilon, p_1')$ on the side spine run is realized by
  removing $\SLASHB (p_0, \chi, \beta)$.

  The third components are required to relabel the non-primary
  categories.  At the bottom of the main spine, $c_1 = (\bot,
  \varepsilon, q_3') \SLASHF (q_0, \omega, \gamma)$ is a primary
  category because $q_0$ and $q_3'$ are associated with the same
  spine generator $s$.  Thus, $c_1$ gets relabeled to $\tau'(q_0)$.
  However, for $c_2 = (q_0, \omega, \gamma) \SLASHF (q_1, \omega,
  \gamma)$ the spine generators of $\gamma$ and of the output 
  of $q_1$ are different ($\overline{c}$ and $s$).  Hence it is a
  non-primary category and gets relabeled to~$\gamma$. 

  Concerning the lexicon, $c_1$ is a lexical category due to the fact
  that $(\bot, \varepsilon, q'_3) \in \TOPCATL$ can appear at the top
  of a spine as an initial category with $q'_3 \in F$ in its third
  component, while the appended $(q_0, \omega, \gamma)$ represents an
  initial configuration of~$\AUTOMATON$.  Similarly, $c_2$ is a
  well-formed secondary category of a rule and the third component of
  its target is in~$L_1$.  Therefore, it is an element of~$\TOPCATS$,
  which is a subset of the lexicon.

  Let us illustrate how the attachment of the side spine to the main
  spine is realized.  The lexicon contains $(q'_1, \upsilon, p'_1)
  \SLASHB (q_2, \upsilon, \alpha) \SLASHB (p_0, \chi, \beta)$, of
  which the first two atoms are responsible for performing a
  transition on the main spine.  This part cannot be modified since
  the rule system disallows it.  The target stores the final
  state~$p'_1$ of the side spine run in its third component.  The
  appended argument models the initial configuration of the side spine
  run starting in state $p_0$ with $\chi$ on the stack.
\end{example}

\subsection{Relating CCG Spines and Automaton Runs}
\label{sec:6}

We assume the symbols that were introduced above.
In particular, let $\mathcal A = (Q, \Delta, \Gamma, \delta, \tau, I,
F)$ be the pop-normalized MPDA, let $L_1 = \bigl\{ w \in
\Next(\SPINES(\mathcal G)) \mid \abs w = 1 \bigr\}$ be the short strings
not captured by $\mathcal A$,
and let $\mathcal G_{\AUTOMATON, L_1} = (\Delta_0, A, R, I', {\LEXICON})$ be the
constructed CCG.
We start with discussing the spines before we
move on to the discussion of how those spines attach to each other in
the next subsection.

\begin{lemma}
Every primary input spine of a derivation tree of~$\mathcal D(\mathcal
G_{\AUTOMATON, L_1})$ read from bottom to top is relabeled to a
string~$w \in \Next(\SPINES(\mathcal G))$.\label{lemma:spine_to_run} 

\label{lem:spines_in_next}
\end{lemma}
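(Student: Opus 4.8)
The plan is to read each primary spine $c_0 \cdots c_n$, from the lexical leaf $c_0$ towards the root $c_n$, as an accepting run of $\AUTOMATON$, so that relabeling the spine reproduces exactly the symbol-by-symbol output of that run; this output lies in $L(\AUTOMATON) \subseteq \Next(\SPINES(\mathcal G))$ by Corollary~\ref{cor:spines}. First I would dispose of the degenerate case $n = 0$, where the spine is a single non-primary category $c_0 = ax \in \TOPCATS$ taken from the first branch of the lexicon. By the definition of the relabeling $\rho$, it maps to $\STATEOUTPUTEXTEND(a_3) = a_3$, and since $ax \in \TOPCATS$ forces $a_3 \in \SHORTWORDS = L_1 \subseteq \Next(\SPINES(\mathcal G))$, the claim holds immediately. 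Henceforth assume $n \geq 1$.

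For $n \geq 1$ I would attach to each primary category $c_j = ax \,\SLASH_1\, b_1 \cdots \,\SLASH_m\, b_m$ (with $j \in \{0, \dots, n-1\}$ and $b_i = (q_i, \gamma_i, f_i)$) the configuration $\xi_j = \langle q_m, \gamma_m \cdots \gamma_1\rangle$, reading the second components of the active arguments as the stack with the bottom symbol $\gamma_1$ installed by the lexicon; to the top category $c_n = ax$ I attach $\xi_n = \langle a_3, \varepsilon\rangle$. The goal is to show that $\xi_0, \dots, \xi_n$ is an accepting run of $\AUTOMATON$ whose output coincides with the relabeled spine. The structural facts recorded before the lemma — that $c_0$ carries exactly one active argument, that consecutive active arguments are coupled by $\RETURN(\gamma_i) = q_{i-1}$, and that all active arguments share a spine generator — are the invariants I would maintain by induction on $j$; the substance of the proof is the step-by-step matching of rules to moves.

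The core is a case analysis on the rule applied when $c_j$ combines with its secondary sibling to yield $c_{j+1}$. A rule from $R_1^{\SLASH}$ replaces the last argument under $(b_1, \varepsilon, \varepsilon, c_1) \in \delta$ with $b_2 = c_2$, a stack-ignoring move that changes only the state; a rule from $R_2^{\SLASH}$ appends an argument under $(b_1, \varepsilon, e_2, e_1) \in \delta$ with $b_2 = c_2$ and $c_1 = \RETURN(e_2)$, the push move $\langle q_m, \gamma_m \cdots\rangle \vdash_\AUTOMATON \langle e_1, e_2 \gamma_m \cdots\rangle$ that also installs the coupling $\RETURN(e_2) = c_1$; and a rule from $R_3^{\SLASH}$ deletes the last argument under $(b_1, b_2, \varepsilon, q) \in \delta$, a pop move into $q = \RETURN(b_2)$ by pop-normalization. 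The shared-generator invariant guarantees that the outputs of $R_1^{\SLASH}$ and $R_2^{\SLASH}$ stay primary, so the non-primary top category is reached only by an application of $R_3^{\SLASH}$ that removes the single remaining active argument, which carries the bottom symbol $\gamma_1$. Because the lexical entry $c_0$ required $\RETURN(\gamma_1) = a_3 \in F$, this pop lands in the final state $a_3$ with empty stack, so $\xi_n$ is final; dually $\xi_0 = \langle q_1, \gamma_1\rangle$ is initial, as the lexicon forces $q_1 = b_1 \in I$ and a single stack symbol.

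Finally I would read off the output. For $j < n$ the relabeling gives $\rho(c_j) = \STATEOUTPUTEXTEND(q_m) = \tau(q_m)$, the output of the state in $\xi_j$, while $\rho(c_n) = \STATEOUTPUTEXTEND(a_3) = \tau(a_3)$; hence the relabeled spine equals the concatenation of $\tau$ over the states of $\xi_0, \dots, \xi_n$, which is exactly the string accepted by this run. Therefore it belongs to $L(\AUTOMATON) \subseteq \Next(\SPINES(\mathcal G))$, as desired. I expect the main obstacle to be the global stack bookkeeping that turns local rule–move correspondences into a genuine accepting run: namely, verifying that the coupling $\RETURN(\gamma_i) = q_{i-1}$ introduced by the degree-$2$ rules is preserved by every later application of $R_1^{\SLASH}$ and $R_3^{\SLASH}$, and that the bottom symbol $\gamma_1$ fixed by the lexicon is never exposed until the terminating pop — so that the stack empties precisely at the top of the spine and the induced run is accepting rather than merely locally consistent.
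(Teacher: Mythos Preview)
Your proposal is correct and follows essentially the same approach as the paper's proof: both dispose of the length-$1$ case via $\TOPCATS$ and $L_1$, then translate each category along the spine to an MPDA configuration (state from the first component of the last active argument, stack from the second components), and verify by case analysis on $R_1^{\SLASH}$, $R_2^{\SLASH}$, $R_3^{\SLASH}$ that consecutive configurations are related by moves, with the lexicon supplying the initial configuration and the final pop landing in $a_3 \in F$. Your explicit emphasis on the coupling invariant and on why the bottom symbol is exposed only at the terminating pop mirrors exactly the invariants the paper uses in its $R_3$ subcases.
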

\begin{proof}
  We start with spines of length~$1$.  Their single category is
  obviously taken from the lexicon and thus either an initial
  atomic category~$a$ or a secondary input category~$ax$.  Both of
  those categories are relabeled to~$a_3 \in L_1 \subseteq
  \Next(\SPINES(\mathcal G))$.

  Now consider a primary input spine~$c_0 \cdots c_n$ with $n \geq
  1$.  We have to show that there is an accepting run of~$\AUTOMATON$
  corresponding to this spine.  We have already described the general
  form of these spines.  There exists a category~$bx$ and for each $i
  \in \{0, \dotsc, n\}$ there exist $m \in \nat$ and slashes $\seq
  \SLASH1m \in D$ as well as atoms~$\seq a1m \in A$ such that
  category~$c_i$ has the form~$bx \,\SLASH_1\, a_1 \cdots \,\SLASH_m\,
  a_m$.  In particular, we have $c_0 = bx \,\SLASH\, a$ for some
  $\SLASH \in D$ and $a \in A$ as well as $c_n = bx$.
  For better readability, we address the $j$-th component of atom
  $a_i$ by $a_{i,j}$ as an abbreviation for $(a_i)_j$, where $i \in [m]$
  and $j \in [3]$.
  We translate each category~$bx \,\SLASH_1\, a_1 \cdots \,\SLASH_m\,
  a_m$ to a configuration of~$\AUTOMATON$
  via the mapping $\text{conf} \colon \CATEGORIES{A} \to \textrm{Conf}_\AUTOMATON$
  in the following manner.
  \begin{align*}
    \text{conf}(bx \,\SLASH_1\, a_1 \cdots \,\SLASH_m\, a_m) = 
    \begin{cases}
      \langle b_3,\, \varepsilon \rangle & \text{if } m = 0 \\
      \langle a_{m,1},\, a_{m,2} \cdots a_{1,2} \rangle & \text{otherwise}
    \end{cases}
  \end{align*}
  In other words, the state of the configuration corresponding
  to~$c_n$ is the third component of the target~$b$, whereas all other
  categories~$\seq c0{n-1}$ store the state in the first component of
  the last argument.  The stack content is represented by the second
  components of the suffix~$\SLASH_1\, a_1 \cdots \,\SLASH_m\, a_m$.
  Each category relabels to the input symbol produced by its respective stored
  state.  Thus, if $\bigl(\text{conf}(c_0), \dotsc, \text{conf}(c_n)
  \bigr)$ is an accepting run of~$\AUTOMATON$, then it generates
  the same string that the spine is relabeled to.

  It remains to show that $\bigl(\text{conf}(c_0), \dotsc,
  \text{conf}(c_n) \bigr)$ is actually an accepting run.
  Since~$c_0$~is assigned by the lexicon and has a suffix behind~$bx$
  consisting of a single argument, whereas $c_n$~has an empty suffix, it is
  easy to see that $c_0$~and~$c_n$ correspond to an initial and a
  final configuration, respectively.  Hence we only need to prove that
  configurations corresponding to subsequent categories
  $c_i$~and~$c_{i+1}$ with $i \in \{0, \dotsc, n-1\}$ are connected by valid moves.
  To this end, we distinguish three cases based on the rule that is
  used to derive output category~$c_{i+1}$ from primary 
  category~$c_i$:
\begin{enumerate}
\item Let $r \in R_1^{\SLASH}$ with $\SLASH \in D$.  Moreover, let
  $c_i = bx \,\SLASH_1\, a_1 \cdots \,\SLASH_m\, a_m$ with $\SLASH_m =
  \SLASH$ and $c_{i+1} = bx \,\SLASH_1\, a_1 \cdots 
  \,\SLASH_{m-1} a_{m-1} \,\SLASH'_m\, a'_m$.   These categories
  correspond to configurations $\langle a_{m,1},\, a_{m,2} \cdots
  a_{1,2} \rangle$~and~$\langle a_{m,1}',\, a_{m,2}' a_{m-1,2}\cdots
  a_{1,2} \rangle$,
  respectively.  The definition of~$R_1^{\SLASH}$ implies
  $a_{m,2} = a_{m,2}'$ as well as the existence of the
  transition~$(a_{m,1}, \varepsilon, \varepsilon, a_{m,1}') \in
  \delta$ of~$\AUTOMATON$ that enables a valid move.

\item Let $r \in R_2^{\SLASH}$ with $\SLASH \in D$.  Moreover, let
  $c_i = bx \,\SLASH_1\, a_1 \cdots \,\SLASH_m\, a_m$ with $\SLASH_m =
  \SLASH$ and $c_{i+1} = bx \,\SLASH_1\, a_1 \cdots \,\SLASH_{m-1}
  a_{m-1} \,\SLASH_m'\, a_m' \,\SLASH_{m+1}'\, a_{m+1}'$.
  The corresponding configurations are $\langle a_{m,1},\, a_{m,2} \cdots
  a_{1,2}\rangle$ and $\langle a_{m+1,1}',\, a_{m+1,2}', a_{m,2}' a_{m-1,2}
  \cdots a_{1,2} \rangle$,
  respectively.  The definition of~$R_2^{\SLASH}$ implies $a_{m,2} =
  a_{m,2}'$ as well as the existence of the transition $(a_{m,1},
  \varepsilon, a_{m+1,2}', a_{m+1,1}') \in \delta$, which again
  enables a valid move.

\item Let $r \in R_3^{\SLASH}$ with $\SLASH \in D$.  Moreover, let
  $c_i = bx \,\SLASH_1\, a_1 \cdots \,\SLASH_m\, a_m$ with $\SLASH_m =
  \SLASH$ and $c_{i+1} = bx \,\SLASH_1\, a_1 \cdots \,\SLASH_{m-1}\,
  a_{m-1}$.  We distinguish two subcases.  First, let $m > 1$.
  These categories correspond to configurations $\langle a_{m,1},\, a_{m,2}
  \cdots a_{1,2} \rangle$ and $\langle a_{m-1,1},\, a_{m-1,2} \cdots
  a_{1,2} \rangle$,
  respectively.  Due to the coupling of neighboring arguments, we have
  $\RETURN(a_{m,2}) = a_{m-1,1}$, which ensures that we reach the
  correct target state.  Since $\AUTOMATON$~is pop-normalized, the
  target state of the pop-transition is completely determined by the
  popped symbol~$a_{m,2}$.  The definition of~$R_3^{\SLASH}$ implies
  that the transition $(a_{m,1}, a_{m,2}, \varepsilon,
  a_{m-1,1}) \in \delta$ exists, which again makes the move valid.

  Now let $m = 1$, which yields $c_i = bx \,\SLASH\, a_1$ and $c_{i+1}
  = bx$.  The corresponding configurations are $\langle a_{1,1},\, a_{1,2}
  \rangle$ and $\langle b_3,\, \varepsilon \rangle$,
  where $\RETURN(a_{1,2}) = b_3$ because the
  initial stack symbol~$a_{1,2}$ was assigned by the lexicon and
  cannot be modified without removing the argument.  The definition
  of~$R_3^{\SLASH}$ implies the existence of the transition $(a_{1,1},
  a_{1,2}, \varepsilon, b_3) \in \delta$, which also concludes this
  subcase.
\end{enumerate}

We have seen that each spine~$c_0 \cdots c_n$ with~$n \geq 1$
corresponds to an accepting run~$\text{conf}(c_0) \vdash_{\AUTOMATON}
\cdots \vdash_{\AUTOMATON} \text{conf}(c_n)$ whose output~$w \in
L(\AUTOMATON) \subseteq \Next(\SPINES(\mathcal G))$ coincides with the
relabeling of the spine.
\end{proof}

We now turn our attention to the inverse direction.
More precisely, we will show that, given a string~$w = w_0
\cdots w_n \in \Next(\SPINES(\mathcal G))$, we can find a primary spine~$c_0
\cdots c_n$ of~$\mathcal G_{\AUTOMATON, L_1}$
(i.e., a sequence of primary categories starting at a category
that belongs to~$\LEXICON(\Delta_0)$ and ending in a non"-primary category)
that gets relabeled to it.
Further, for this
spine we have some freedom in the selection of the topmost category,
so that for every valid secondary category or initial
category, in which the third component of the target outputs~$w_n$,
we can choose either this category or another category that differs only in the third component, but still outputs~$w_n$.
Additionally, 
the third component of the last argument in the suffix (so in all
categories except for $c_n$)
can be chosen freely from the set of strings of length~$1$ or
final states with the correct spine generator.  This will be of great
importance when we combine these spines to a complete derivation
tree.

\begin{lemma}
  For each string~$w \in L_1$ and category $c \in \{ ax \in
  {\LEXICON}(\Delta_0) \mid a_3 = w \}$ there is a primary spine
  of~$\mathcal G_{\AUTOMATON, L_1}$ that is labeled by~$c$ and
  relabeled to~$w$.
\label{lem:l1_in_spines}
\end{lemma}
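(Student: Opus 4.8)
The plan is to recognize that this statement is the length-one base case of the spine-realization argument, and that for such a short string the claimed primary spine is simply the single lexicon entry~$c$ itself. So the whole proof reduces to identifying which clause of the lexicon produces~$c$ and then reading off its relabeling.

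First I would determine the origin of~$c$ in the lexicon. Since $c = ax \in \LEXICON(\Delta_0)$ with $a_3 = w$ and $w \in L_1 = \SHORTWORDS$, the third component~$a_3$ lies in~$\SHORTWORDS$. The second clause in the definition of~$\LEXICON(\alpha)$, which yields categories of the form $ax \SLASH b$, requires $ax \in \TOPCATL$ and hence $a_3 \in F$. Because the third components of atoms range over the disjoint union $F \cup \SHORTWORDS$ (final states versus short strings), the condition $a_3 = w \in \SHORTWORDS$ rules out $a_3 \in F$. Consequently $c$ cannot originate from the second clause and must come from the first, so $c = ax \in \TOPCATS$ with $\tau'(a_3) = \alpha$. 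As $\tau'$ is the identity on short strings, $\tau'(a_3) = \tau'(w) = w$, whence $\alpha = w$ and $c \in \LEXICON(w)$.

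Next I would verify that~$c$ constitutes a legal primary spine on its own. By the definition of~$\TOPCATS$, every element is either an initial category (in~$I'$) or the secondary category $\SECONDARY(r)$ of some rule $r \in R$. In either case $ax$ may sit at the top of a primary spine, and since the shortest possible primary spine is a single lexical category serving as a secondary or initial category, the one-element path consisting of~$c$ alone is a valid primary spine that is, trivially, labeled by~$c$.

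Finally I would compute the relabeling. The relabeling~$\rho$ sends each initial or secondary category $ax$ to $\tau'(a_3)$; applied to $c = ax$ this gives $\rho(c) = \tau'(a_3) = \tau'(w) = w$. Hence the single-category primary spine is labeled by~$c$ and relabels to~$w$, as required. I expect no real obstacle here: the only point needing care is confirming that~$c$ falls under the $\TOPCATS$ clause rather than the $\TOPCATL$ clause, which rests entirely on the disjointness of the final states~$F$ from the short strings~$\SHORTWORDS$ in the third component of the atoms.
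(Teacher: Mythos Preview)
Your proposal is correct and follows essentially the same approach as the paper's proof: both recognize that the single lexicon entry~$c$ itself constitutes the desired length-one primary spine, being either an initial or a secondary category, and that it relabels to~$a_3 = w$. Your version is more explicit than the paper's in isolating which lexicon clause produces~$c$ (arguing via the disjointness of~$F$ and~$\SHORTWORDS$ in the third component), whereas the paper simply appeals to ``the construction of~$\LEXICON$'' for the same conclusion.
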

\begin{proof}
  The set~$C  = \{ ax \in {\LEXICON}(\Delta_0) \mid a_3 = w \}$ is
  clearly a subset of~$\LEXICON(\Delta_0)$
  and each~$ax \in C$ is either a secondary category or an initial
  category by the construction of~$\LEXICON$.
  In either case $ax$ is relabeled to~$a_3 = w$ and cannot be modified.
  Consequently, these
  categories themselves constitute complete primary spines of
  length~$1$.
\end{proof}

\begin{lemma}
\label{lem:la_in_spines}
  Assume we are given an accepting run~$\bigl(\langle q_0, \gamma_0
  \rangle, \dotsc, \langle q_n,
  \varepsilon \rangle  \bigr)$ of $\mathcal A$, well"-formed
  category~$\hat{c} \in \{ ax \mid ax \,\SLASH\, b \in
  {\LEXICON}(\Delta_0), a_3 = q_n \}$, and $\seq e0{n-1} \in F \, \cup\, L_1$
  such that~$\GENERATOR(e_i) = \COMB(q_i)$ for all~$i \in \{0, \dotsc,
  n-1 \}$.
  Then there exists a primary spine~$c_0 \dotsm c_{n-1} \hat c$
  with well"-formed categories~$\seq c0{n-1}$ such that it relabels to $\rho(c_0) \dotsm
  \rho(c_{n-1}) \rho(\hat c) = \tau(q_0) \dotsm \tau(q_n)$ 
  and for all~$i \in \{0, \dotsc, n-1\}$ the category~$c_i$ starts
  with category~$\hat{c}$, ends with an argument~$\SLASH \, a$
  with~$a_3 = e_i$, and $\abs{\gamma_i} = \ARITY(c_i) - \ARITY(\hat
  c)$.
\end{lemma}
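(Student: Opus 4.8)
The plan is to invert the correspondence of Lemma~\ref{lem:spines_in_next}: instead of reading a configuration off each primary category through~$\text{conf}$, I will build the categories $\seq c0{n-1}$ explicitly from the given run and the prescribed third components $\seq e0{n-1}$, and then check that consecutive categories are linked by rules of~$R$. Concretely, I set $c_n = \hat c$ and, for each $i < n$ with stack $\gamma_i = g_{m_i} \cdots g_1$ written from top to bottom (so $m_i = \abs{\gamma_i}$), I let $c_i = \hat c \,\SLASH_1\, a_1 \cdots \SLASH_{m_i}\, a_{m_i}$, where the argument at position~$j$ has second component~$g_j$, first component~$q_i$ if $j = m_i$ and $\RETURN(g_{j+1})$ otherwise (the coupling), and slash $\SLASHDIR$ of that first component. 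Its third component is~$e_i$ for the top position $j = m_i$, and $e_{\nu(i,j)}$ for a buried position $j < m_i$, where $\nu(i,j)$ denotes the least step $i' > i$ at which the stack has height exactly~$j$; this is the moment at which position~$j$ is re-exposed, and it is well defined because the stack is empty at the end.

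First I would verify that each $c_i$ is well-formed and built from atoms of~$A$. Well-formedness is immediate: every argument carries $\SLASHDIR$ of its first-component state (which is non-final, as the run continues past it) and a genuine stack symbol in its second component, while $\hat c \in \TOPCATL \subseteq \RESTRCAT$ is already well-formed. Membership in~$A$ requires $\GENERATOR(f) = \COMB(q)$ for each argument $(q, \gamma, f)$. For the top argument this is precisely the hypothesis $\GENERATOR(e_i) = \COMB(q_i)$. For a buried argument at position~$j$, pop-normalization yields $\RETURN(g_{j+1}) = q_{\nu(i,j)}$, because the symbol re-exposing position~$j$ is exactly $g_{j+1}$, so $\GENERATOR(e_{\nu(i,j)}) = \COMB(q_{\nu(i,j)}) = \COMB(\RETURN(g_{j+1}))$ again by the hypothesis. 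I would also note that a buried third component is unchanged while the symbol stays buried, since $\nu(i,j) = \nu(i+1,j)$ whenever position~$j$ is buried at both steps.

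Next I would show, for each~$i$, that the move $\langle q_i, \gamma_i\rangle \vdash_{\AUTOMATON} \langle q_{i+1}, \gamma_{i+1}\rangle$ turns the primary category~$c_i$ into the output~$c_{i+1}$, splitting into the three transition types exactly as in Lemma~\ref{lem:spines_in_next}. An ignore move rewrites only the last argument and matches a rule of~$R_1^{\SLASH}$, the shared prefix~$ax$ being identical because the buried third components agree. A push move matches~$R_2^{\SLASH}$: the newly relegated argument~$c$ receives the third component $e_{\nu(i+1,m_i)}$ validated above, and the coupling $c_1 = \RETURN(e_2)$ holds by construction. A pop move matches~$R_3^{\SLASH}$, and the re-exposed argument already carries third component $e_{\nu(i,m_i-1)} = e_{i+1}$ because $\nu(i, m_i-1) = i+1$, so it agrees with the last argument of~$c_{i+1}$. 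In each case the required transition of~$\AUTOMATON$ is supplied by the given run.

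Finally I would settle the relabeling and the boundary. As all of $\seq q0n$ share one spine generator, each $c_i$ with $i < n$ is primary ($\GENERATOR(a_3) = \GENERATOR(q_n) = \GENERATOR(q_i)$), so $\rho(c_i) = \STATEOUTPUTEXTEND(q_i) = \tau(q_i)$, whereas $\hat c$ is non-primary and hence relabeled to $\STATEOUTPUTEXTEND(a_3) = \tau(q_n)$; together this gives the word $\tau(q_0) \dotsm \tau(q_n)$. For the base I would confirm that $c_0$ is a lexicon entry of the second kind: $\hat c = ax \in \TOPCATL$, the state $q_0 \in I$ with $\tau'(q_0)$ the lexical symbol, and $\RETURN(\gamma_0) = q_n = a_3$ since the initial symbol $\gamma_0$ is the last one popped. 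The three per-category properties follow directly from the construction. The main obstacle is the global bookkeeping of the buried third components: one must see that the matched (LIFO) push/pop structure of the run together with pop-normalization forces the value frozen into a relegated argument during a push to coincide simultaneously with the prescribed $e$-value at the matching pop and with the $\COMB$-constraint for membership in~$A$, which after introducing~$\nu$ reduces to the single identity $q_{\nu(i,j)} = \RETURN(g_{j+1})$.
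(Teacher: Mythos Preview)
Your proposal is correct and amounts to the same construction as the paper's, organized globally rather than inductively: where the paper builds $c_i$ from $c_{i-1}$ and maintains the coupling invariant step by step (introducing the ``future'' index only locally in the push case), you give a closed-form description of every argument via the re-exposure index~$\nu(i,j)$ and then verify the three rule types, which is exactly the paper's case analysis. The identity $q_{\nu(i,j)} = \RETURN(g_{j+1})$ that you isolate is precisely what the paper uses implicitly when it sets the relegated argument's first component to~$q_j$ with $j$ minimal such that $\gamma_j = \gamma_{i-1}$; your packaging makes the LIFO bookkeeping more uniform but does not change the argument.
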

\begin{proof}
We will describe how to construct the primary spine~$\word
c0{n-1}\hat c$ by induction on~$i \in \{0, \dotsc, n\}$ that
additionally obeys the following invariants:
\begin{enumerate}[label=(\roman*)]
\item
The categories are well"-formed.
\item
Subsequent arguments 
$\SLASH_{j-1}\, a_{j-1} \,\SLASH_j\, a_j$ in the suffix
(i.e.\ the argument sequence after $\hat{c}$ in each category
$\hat{c} \,\SLASH_1\, a_1 \dotsm \,\SLASH_m\, a_m$)
are coupled in such a way that~$\RETURN((a_j)_2) = (a_{j-1})_1$ for
all~$j \in [m] \setminus \{1\}$.
\end{enumerate}

We already noted that the suffix of a category stores the stack in the
second components such that the second component of the last argument
contains the topmost stack symbol. Also note that all states $q_0, \dots,
q_n$ have the same spine generator.

In the induction base, we let $c_0 = \hat c \,\SLASH\, (q_0, \gamma_0,
e_0)$ with $\mathord{\SLASH} = \SLASHDIR(q_0)$.  We note
that~$\RETURN(\gamma_0) = \TARGET(\hat c)_3 = q_n$.  Obviously $c_0
\in \LEXICON(\Delta_0)$ and $\rho(c_0) = \STATEOUTPUT(q_0)$.  Clearly,
$c_0$ obeys the invariants, and has the right arity.

In the induction step, we assume that $c_{i-1}$ already fulfills the
conditions, and we let $(q_{i-1}, \gamma, \gamma',  q_i) \in \delta$ be a
transition that permits the move $\langle q_{i-1}, \gamma_{i-1}\rangle
\vdash_{\mathcal A} \langle q_i, \gamma_i \rangle$.
We again distinguish three cases
for the construction of a suitable category~$c_i$:

\begin{enumerate}
\item \emph{Ignore stack:} Suppose that $\gamma = \gamma' =
  \varepsilon$.  Let $c_{i-1} = \hat{c} \,\SLASH_1\, a_1 \dotsm
  \,\SLASH_m\, a_m$.  We apply a rule of~$R_1^{\SLASH_m}$ and obtain
  $c_i = \hat{c} \,\SLASH_1\, a_1 \dotsm \,\SLASH_{m-1}\, a_{m-1}
  \,\SLASH\, b$, where $\SLASH = \SLASHDIR(q_i)$ and $b = \bigl(q_i,
  (a_m)_2, e_i \bigr)$.
  Since $b_1 = q_i$, the category~$c_i$ gets relabeled
  to~$\STATEOUTPUT(q_i)$.  The stack symbol is not changed, so the
  category is well-formed and subsequent arguments are still coupled.
  Additionally, neither the stack size nor the arity of the category have changed.

\item \emph{Push symbol:} Suppose that $\gamma' \neq \varepsilon$.    
  Moreover, let $j \in \{i+1, \dotsc, n\}$ be minimal such that
  $\gamma_j = \gamma_{i-1}$ (i.e., $j$ is the index of the configuration in
  which the stack symbol~$\gamma'$ was removed again).  Finally, let
  $c_{i-1} = \hat{c} \,\SLASH_1\, a_1 \dotsm \,\SLASH_m\, a_m$.
  Clearly, we apply a rule of~$R_2^{\SLASH_m}$ to obtain
  $c_i = \hat{c} \,\SLASH_1\, a_1 \dotsm \,\SLASH_{m-1}\, a_{m-1}
  \,\SLASH\, b \, \SLASH'\, b'$ with $\mathord{\SLASH} =
  \SLASHDIR(q_j)$, $\mathord{\SLASH'} = \SLASHDIR(q_i)$, $b' = (q_i,
  \gamma', e_i)$, and $b = \bigl(q_j, (a_m)_2, e_j\bigr)$.  Note
  that~$q_j = \RETURN(\gamma')$.  Hence the category~$c_i$ gets
  relabeled to~$\STATEOUTPUT(q_i)$.  The mentioned conditions ensure
  that $c_i$~is well-formed and obeys the second invariant.
  The increase in
  stack size is properly accounted for by an increased arity of~$c_i$.

\item \emph{Pop symbol:}  Suppose that $\gamma \neq \varepsilon$.  We
  further distinguish between the cases $i < n$~and~$i = n$.  We start with
  $i < n$.  Suppose that $c_{i-1} = \hat{c} \,\SLASH_1\, a_1 \dots
  \,\SLASH_m\, a_m$.  Note that $m \geq 2$.  Since $c_{i-1}$ obeys the
  invariants, subsequent arguments in the suffix are coupled, so
  we have~$(a_{m-1})_1 = \RETURN((a_m)_2) = \RETURN(\gamma)
  = q_i$.  Additionally, $(a_{m-1})_3 = e_i$ as prepared in the
  corresponding push transition.  We apply a rule of~$R_3^{\SLASH_m}$
  to obtain $c_i = \hat{c} \,\SLASH_1\, a_1 \dots
  \,\SLASH_{m-1}\, a_{m-1}$, which is trivially well-formed and still
  obeys the second invariant.
  It relabels to~$\tau(q_i)$ as required due
  to~$(a_{m-1})_1 = q_i$.  The stack size and arity both decrease
  by~$1$.

  For~$i = n$ we have $c_{n-1} = \hat c \,\SLASH_1\, a_1$ since the stack
  size is necessarily~$1$.  We also apply a rule of~$R_3^{\SLASH_1}$
  and obtain the category~$\hat c$.  This category is trivially
  well-formed and also trivially fulfills the second invariant.
  Additionally, it relabels
  to~$\tau(q_n)$ since it is a secondary category
  and~$\TARGET(\hat c)_3 = q_n$.
\end{enumerate}
\end{proof}

We observe that the restrictions on arguments in categories~$c_i$
for~$i \in \{0, \dots, n\}$ also hold for the arguments (and for the
target) of the secondary categories that are needed to perform
the category transformations corresponding to the automaton run.
All of these secondary categories themselves can be chosen as the
category at the top of an appropriate primary spine (unless the third
component of the target constitutes an unreachable state of~$\AUTOMATON$).
This will be relevant in the next step, in which we combine the spines to obtain a complete derivation tree.

\subsection{Combining Spines}
We continue to use the introduced symbols.  Moreover, let
$
\mathcal{D'}(\mathcal G_{\AUTOMATON, L_1}) =
\{ t \in \mathcal{D}(\mathcal G_{\AUTOMATON, L_1}) \mid t(\varepsilon) \in 
\TOPCATS \cup \TOPCATL \}
$.
In other words, these are exactly the derivation trees whose root
nodes are labeled by top"-of"-spine categories.
We will show that $\rho \bigl(\mathcal{D'}(\mathcal G_{\AUTOMATON,
  L_1}) \bigr) = {\FORESTS} \bigl(\Next(\SPINES(\mathcal G)) \bigr)$. 

\begin{lemma}
  \label{lem:dg_in_next}
  $\rho \bigl(\mathcal{D'}(\mathcal G_{\AUTOMATON, L_1}) \bigr)
  \subseteq {\FORESTS} \bigl(\Next(\SPINES(\mathcal G)) \bigr)$ 
\end{lemma}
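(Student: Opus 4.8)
Write $T = \FORESTS(\Next(\SPINES(\mathcal G)))$ for brevity. The plan is to argue by induction on the number of nodes of a tree $t \in \mathcal{D'}(\mathcal G_{\AUTOMATON, L_1})$, decomposing $t$ into its primary spine together with the side trees hanging off it. Read the primary spine as $c_0 \cdots c_n$ from the lexical leaf $c_0$ up to the root $c_n = t(\varepsilon) \in \TOPCATS \cup \TOPCATL$; it is obtained by repeatedly descending into the primary child. By Lemma~\ref{lem:spines_in_next} this spine relabels to a string $w = \rho(c_0) \cdots \rho(c_n) \in \Next(\SPINES(\mathcal G))$. If $n = 0$ there are no side trees, $w = a_3 \in L_1$ is a single nullary symbol, and $\rho(t) = w \in \attach_T(w) = \{w\} \subseteq T$ by Definitions~\ref{df:CombineSpines} and~\ref{df:CombineSpines2}; this settles the base case.

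For $n \geq 1$, each output $c_{i+1}$ (with $i \in \{0, \dots, n-1\}$) is produced from the primary $c_i$ and a secondary category $s_i$ by a rule of $R$; let $u_i$ be the subtree of $t$ rooted at $s_i$. The target atom of any secondary category has third component in $F \cup L_1$ and $s_i$ is well-formed, so $s_i \in \TOPCATS \cup \TOPCATL$ and hence $u_i \in \mathcal{D'}(\mathcal G_{\AUTOMATON, L_1})$; being a proper subtree of $t$, the induction hypothesis yields $\rho(u_i) \in T$. It remains to exhibit $\rho(t)$ as the result of attaching the trees $\rho(u_i)$ to the spine $w$, that is, as an element of $\attach_T(w)$, which by Definition~\ref{df:CombineSpines2} lies in $T$.

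The crux, and the step I expect to be the main obstacle, is to check that this reassembly on the CCG side meets exactly the side conditions imposed by $\attach$ in Definition~\ref{df:CombineSpines}: the correct spine direction and the matching of generators at each attachment. In the derivation tree the side tree $u_i$ is the secondary child of the spine node $c_{i+1}$, hence is attached directly below $\rho(c_{i+1})$. Let $b = (q, \gamma, f)$ be the atom shared by the consumed last argument of $c_i$ and the target of $s_i$, so that $\rho(c_i) = \tau(q)$ and the root label of $\rho(u_i)$ has generator $\GENERATOR(f)$. The defining constraint $\GENERATOR(f) = \COMB(q)$ of the atom set $A$, together with the fact that $\COMB(q)$ is read off the look-ahead component of $\tau(q)$---which is precisely the spine symbol realized at the parent node $c_{i+1}$---forces $\GENERATOR(f)$ to coincide with the non-spinal child annotation recorded at that parent node; this is exactly the requirement $t_{3-d(\sigma)} \in T_{n_{3-d(\sigma)}}$ of $\attach$. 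The direction is handled analogously: $\SLASHDIR(q)$, determined from the same look-ahead binary symbol $\sigma$, is forward iff $d(\sigma) = 1$, placing $\rho(u_i)$ on the non-spinal side $3 - d(\sigma)$. Assembling these identifications along the whole spine and using the inductive memberships $\rho(u_i) \in T$ exhibits $\rho(t) \in \attach_T(w)$, closing the induction.
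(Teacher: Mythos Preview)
Your proposal is correct and follows essentially the same approach as the paper: induction on the size of~$t$, decomposition along the primary spine, application of Lemma~\ref{lem:spines_in_next} for the spine string, the induction hypothesis on the secondary subtrees (after noting they lie in~$\mathcal{D'}$), and then verification of the two side conditions of $\attach$ via the identity $\GENERATOR(b_3)=\COMB(b_1)$ built into the atom set~$A$ and the well-formedness constraint $\SLASH=\SLASHDIR(b_1)$. Your explicit treatment of the base case $n=0$ and your justification that every secondary category lies in $\TOPCATS\cup\TOPCATL$ are a bit more detailed than the paper's version, but the argument is the same.
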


\begin{proof}
  We prove the statement by induction on the size of~$t \in
  \mathcal{D'}(\mathcal G_{\AUTOMATON, L_1})$.  Let $\word c0n$ be the
  primary spine of~$t$ that starts at a lexicon entry~$c_0 \in
  \LEXICON(\Delta_0)$ and ends at the root (i.e.~$c_n =
  t(\varepsilon)$).  By Lemma~\ref{lem:spines_in_next}, this spine
  gets relabeled to a string~$w = \word w0n \in \Next
  \bigl(\SPINES(\mathcal G)\bigr)$.  Except for the root
  category~$c_n$, each of the spinal categories~$\seq c0{n-1}$ gets
  combined with a secondary category that is itself the root of
  a subtree~$t' \in \mathcal{D'}(\mathcal G_{\AUTOMATON, L_1})$.
  Since $t'$~is a proper subtree of~$t$, we can utilize the induction
  hypothesis to conclude that $\rho(t') \in {\FORESTS}
  \bigl(\Next(\SPINES(\mathcal G))\bigr)$.  It remains to show that
  each such tree fulfills the requirements necessary to attach it to
  the spine.  Suppose that the primary category is~$c_i = ax
  \,\SLASH\, b$, so it can only be combined with a secondary 
  category of the form~$by$, which gets relabeled
  to~$\rho(by) = \STATEOUTPUT(b_3)$.  Suppose further that $\rho(c_i) =
  \bigl(\langle \sigma', n'_1, n'_2 \rangle, \langle \sigma, n_1, n_2 \rangle
  \bigr)$.  Clearly, $\GENERATOR(b_3) = \COMB(b_1)$, where
  $\GENERATOR(b_3)$~is the spine generator at the root
  of~$\rho(t')$, and $\COMB(b_1) = n'_{3-d(\sigma')}$ is the generator
  of the non-spinal child of the succeeding parent
  symbol~$\rho(c_{i+1})$.  Since they coincide, the attachment of~$\rho(t')$
  is possible and the directionality of the attachment is~$3 -
  d(\sigma')$, which is guaranteed by the requirement $\SLASH =
  \SLASHDIR(b)$ for argument $\SLASH\, b$.  We conclude that all
  attachments of subtrees are consistent with the definition
  of~${\FORESTS} \bigl(\Next(\SPINES(\mathcal G)) \bigr)$.  Thus,
  $\rho(t) \in {\FORESTS} \bigl(\Next(\SPINES(\mathcal G)) \bigr)$.
\end{proof}

\begin{lemma}
  \label{lem:fnext_in_dg}
  ${\FORESTS} \bigl(\Next(\SPINES(\mathcal G)) \bigr) \subseteq \rho
  \bigl(\mathcal{D'}(\mathcal G_{\AUTOMATON, L_1}) \bigr)$
\end{lemma}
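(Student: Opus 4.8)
The plan is to establish the reverse inclusion by induction on the size of a tree $t \in \FORESTS(\Next(\SPINES(\mathcal G)))$, producing for each such $t$ a derivation tree $s \in \mathcal{D'}(\mathcal G_{\AUTOMATON, L_1})$ with $\rho(s) = t$. By Definition~\ref{df:CombineSpines2}, every $t$ lies in $\attach_{\FORESTS}(w)$ for some $w = \word w0n \in \Next(\SPINES(\mathcal G))$, and unfolding Definition~\ref{df:CombineSpines} decomposes $t$ into a main spine labeled from bottom to top by~$w$, together with, for each $i \in \{0, \dots, n-1\}$, a non-spinal subtree $u_i \in \FORESTS(\Next(\SPINES(\mathcal G)))$ attached at the binary node carrying~$w_{i+1}$, whose root generator equals the non-spinal annotation of that node. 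Each $u_i$ is a proper subtree of~$t$, so the induction hypothesis furnishes a derivation tree $s_i \in \mathcal{D'}$ with $\rho(s_i) = u_i$; its root is a top-of-spine category from $\TOPCATS \cup \TOPCATL$ whose target stores in its third component a final state or short string, which I denote~$e_i$.

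For the base case $n = 0$ the string $w \in L_1$ consists of a single symbol and $t$ is a single leaf; Lemma~\ref{lem:l1_in_spines} then supplies a primary spine of length~$1$ that is labeled by a lexicon entry, lies in $\mathcal{D'}$, and relabels to~$w = t$.

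For the inductive step, since $\abs w \geq 2$, Corollary~\ref{cor:spines} places $w$ in $L(\mathcal A)$, so $w$ is the output $\tau(q_0) \dotsm \tau(q_n)$ of an accepting run $\langle q_0, \gamma_0 \rangle, \dotsc, \langle q_n, \varepsilon \rangle$ of~$\mathcal A$. I then apply Lemma~\ref{lem:la_in_spines} to this run, choosing its free parameters to dovetail with the inductive subtrees: the third components $\seq e0{n-1}$ of the last arguments are taken to be exactly the third components of the roots of $\seq s0{n-1}$, and the top category~$\hat c$ is chosen as a top-of-spine category so that the resulting tree lies in~$\mathcal{D'}$. The side condition $\GENERATOR(e_i) = \COMB(q_i)$ demanded by that lemma is automatic: the attachment condition of $\FORESTS$ forces the root generator of~$u_i$ (hence of~$e_i$) to equal the non-spinal annotation of the parent symbol, which is $\COMB(q_i)$ by definition. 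Thus the secondary category that each rule requires to combine with the primary category~$c_i$ is precisely the root of~$s_i$, and attaching~$s_i$ as the secondary child at the node producing~$c_{i+1}$ assembles a derivation tree $s \in \mathcal{D'}$ whose main spine relabels to~$w$ and whose attached subtrees relabel to the~$u_i$; hence $\rho(s) = t$.

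The step I expect to be the main obstacle is the bookkeeping at the interface between Lemma~\ref{lem:la_in_spines} and the inductively obtained subtrees: I must check that the secondary category dictated by each combination rule---its target, its two stack-related components, its slash direction $\SLASHDIR$, and its third component---coincides with the actual root of~$s_i$, and that such an~$s_i$ genuinely exists in~$\mathcal{D'}$ rather than being a merely well-formed category. The remark following Lemma~\ref{lem:la_in_spines}, namely that every such secondary category can be realized as the top of an appropriate primary spine unless the third component of its target is an unreachable state of~$\AUTOMATON$, is exactly what rules out the degenerate case, while the induction hypothesis provides the realizing tree. I would also confirm that the requirement $\SLASH = \SLASHDIR(b)$ built into the rules forces the subtree to attach on side $3 - d(\sigma)$, matching the direction prescribed by~$\attach$.
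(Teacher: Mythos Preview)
Your overall architecture matches the paper's, but there is a genuine gap in the induction hypothesis. You let the induction hypothesis hand you \emph{some} derivation tree $s_i \in \mathcal{D'}$ with $\rho(s_i) = u_i$, read off $e_i$ from its root, and then build the main spine via Lemma~\ref{lem:la_in_spines} with those $e_i$. The problem is that the secondary category required at step~$i$ has a completely specified target~$b$: its first component must be the current state~$q_i$ of the run and its second component must be the current top-of-stack symbol, both dictated by the main-spine construction and not by~$u_i$. Your induction hypothesis gives no control over these two components of the root target of~$s_i$; it only pins down the third component (via the relabeling, since $\rho(ax) = \tau'(a_3)$ for non-primary categories). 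So the claim ``the secondary category that each rule requires to combine with the primary category~$c_i$ is precisely the root of~$s_i$'' is unjustified.

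The paper closes this gap by strengthening the statement proved by induction: for \emph{every} admissible top-of-spine category~$c$ whose target's third component equals the appropriate final state (or short string) for~$t$, there is a derivation tree in~$\mathcal{D'}$ with root exactly~$c$ that relabels to~$t$. With this universally quantified hypothesis available for the subtrees~$u_i$, you can \emph{choose} the root of~$s_i$ to be the specific secondary category~$by$ demanded by the rule at step~$i$, rather than hoping an arbitrarily obtained~$s_i$ happens to fit. The informal remark after Lemma~\ref{lem:la_in_spines} that you invoke is really a pointer to this strengthened hypothesis, not an independent fact you can cite; it does not by itself furnish the realizing tree. Once you strengthen the hypothesis (and verify it in the base case and at the root, which is straightforward), the rest of your argument goes through essentially as written.
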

\begin{proof}
  Indeed we prove the following statement for all~$t \in {\FORESTS}
  \bigl(\Next(\SPINES(\mathcal G))\bigr)$.  If $\abs t = 1$, then for
  each~$c \in \{ax \in \TOPCATS \mid a_3 = t\}$ there is a tree~$t'
  \in \mathcal{D'}(\mathcal G_{\AUTOMATON, L_1})$ such that
  $t'(\varepsilon) = c$~and~$\rho(t') = t$.  If $\abs t > 1$, then for
  each~$c \in \{ax \in \TOPCATL \mid a_3 = q_n\}$, where $q_n$ is the
  final state of an accepting run of~$\AUTOMATON$ corresponding to the
  main spine of~$t$, there is a tree~$t' \in \mathcal{D'}(\mathcal
  G_{\AUTOMATON, L_1})$ such that $t'(\varepsilon) = c$~and~$\rho(t') = t$.
  We perform an induction on the size of~$t$.

  In the induction base, $t$~consists of a single node~$t \in L_1$.  By
  Lemma~\ref{lem:l1_in_spines}, all categories~$c \in \{ ax \in
  {\LEXICON}(\Delta_0) \mid a_3 = t \} = \{ ax \in \TOPCATS \mid a_3 =
  t \}$ are complete primary spines that get relabeled to~$t$.  By the
  definition of the lexion, this set is nonempty for all~$t \in L_1$.
  Thus, $t \in \rho(\mathcal{D'}(\mathcal G_{\AUTOMATON, L_1}))$.

  In the induction step, let $t \in {\FORESTS}
  \bigl(\Next(\SPINES(\mathcal G))\bigr)$ be a tree with~$\abs t > 1$.
  We identify the main spine labeled by $w = w_0 \dotsm w_n \in \Next
  \bigl(\SPINES(\mathcal G) \bigr)$ that was used to create~$t$.  This
  string~$w$ is generated by an accepting run~$(\langle q_0,
  \alpha_0\rangle, \dots, \langle q_n, \varepsilon\rangle)$
  of $\AUTOMATON$.  Likewise, there exists a primary 
  spine~$\word c0n$ of~$\mathcal{G}_{\AUTOMATON, L_1}$ that gets
  relabeled to~$w$ and we can choose the category~$c_n$ at the top of
  the spine freely from the set
  \[ \{ ax \mid ax \,\SLASH\, b \in {\LEXICON}(\Delta_0), a_3 = q_n \}
    = \{ ax \in \TOPCATL \mid a_3 = q_n\} \]
  according to Lemma~\ref{lem:la_in_spines}.  Similarly, we can also
  freely choose the third component of the last argument of each
  category~$\seq c0{n-1}$.  Consider an arbitrary~$0 \leq i \leq n-1$,
  and let $w_i = (\langle \sigma', n_1', n_2' \rangle, \langle \sigma,
  n_1, n_2 \rangle)$; the case of~$w_i = (\langle \sigma', n_1', n_2'
  \rangle, \langle \sigma, n \rangle)$ is analogous.  Below the
  parent node that is labeled by~$w_{i+1}$,
  in the direction~$3-d(\sigma')$, a tree~$t' \in
  {\FORESTS} \bigl(\Next(\SPINES(\mathcal G))\bigr)$ with
  spine generator~$n'_{3-d(\sigma')}$ stored in its root
  label is attached.  Let~$q$ be the final state of an accepting run
  of~$\mathcal A$ for the main spine of~$t'$ when $\abs{t'} > 1$ or
  let $q = t'$ when $\abs{t'} = 1$.  Moreover, suppose that~$c_i = ax
  \,\SLASH\, b$ with~$b_1 = q_i$, so the required secondary 
  category has the shape~$by$.  By the induction hypothesis, there
  exists $t'' \in \mathcal{D'}(\mathcal G_{\AUTOMATON, L_1})$ such that
  $t''(\varepsilon) = by \in \TOPCATS \cup \TOPCATL$, $\rho(t'') = t'$, and $b_3 = q$.
  This choice of~$b_3$ is possible for~$c_i$ provided
  that~$\GENERATOR(b_3) = \COMB(b_1)$, which is the case since~$\COMB(b_1) = \COMB(q_i) =
  n'_{3-d(\sigma')} =  \GENERATOR(q) = \GENERATOR(b_3) $.
  The directionality~$3-d(\sigma')$ of the attachment of~$t''$
  is guaranteed by the relationship $\SLASH = \SLASHDIR(b)$, which
  holds since $c_i$ is well-formed.  In conclusion, for each attached
  subtree~$t' \in {\FORESTS} \bigl(\Next(\SPINES(\mathcal G))\bigr)$
  of~$t$ we can find a suitable~$t'' \in \mathcal{D'}(\mathcal
  G_{\AUTOMATON, L_1})$ whose root category can be combined with the
  neighboring primary category of the spine~$\word c0n$.
  Putting the primary spine and the derivation trees for
  subtrees together again yields a tree in~$\mathcal{D'}(\mathcal
  G_{\AUTOMATON, L_1})$.  Its root~$c_n$ can be chosen freely from the
  desired set~$\{ ax \mid ax \,\SLASH\, b \in {\LEXICON}(\Delta_0),\,
  a_3 = q_n\}$.
\end{proof}

\begin{lemma}
  ${\FORESTS} \bigl(\Next(\SPINES(\mathcal G)) \bigr)_S =
  \TREELANG_\rho(\mathcal G_{\AUTOMATON, L_1})$
\end{lemma}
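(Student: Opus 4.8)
The plan is to derive the equality directly from the two preceding lemmas. Lemma~\ref{lem:dg_in_next} and Lemma~\ref{lem:fnext_in_dg} together already establish that $\rho \bigl(\mathcal{D'}(\mathcal G_{\AUTOMATON, L_1}) \bigr) = {\FORESTS} \bigl(\Next(\SPINES(\mathcal G)) \bigr)$, so all that remains is to see that the restriction to the start nonterminal~$S$ on the left matches the restriction to the initial root categories~$I'$ on the right.

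First I would determine how the spine generator of the root of a relabeled tree is read off its underlying derivation tree. Given $t \in \mathcal{D'}(\mathcal G_{\AUTOMATON, L_1})$, its root $t(\varepsilon) = ax$ is a top-of-spine category, hence either an initial category or a secondary category of a rule; by the immediately preceding lemma secondary categories are never of primary form, and initial categories are atomic, so the root is always relabeled through the non-primary branch, giving $\rho(t(\varepsilon)) = \tau'(a_3)$. Inspecting the two cases $a_3 \in F$ and $a_3 \in L_1$, I would check that the spine generator of $\tau'(a_3)$ is exactly $\GENERATOR(a_3)$, using the definition of $\GENERATOR$ on states (via~$\tau$) in the first case and on~$L_1$ in the second. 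Hence $\GENERATOR \bigl(\rho(t)(\varepsilon) \bigr) = \GENERATOR(a_3)$, which lets me rewrite ${\FORESTS} \bigl(\Next(\SPINES(\mathcal G)) \bigr)_S = \bigl\{\rho(t) \mid t \in \mathcal{D'}(\mathcal G_{\AUTOMATON, L_1}),\ \GENERATOR(a_3) = S \bigr\}$, where $a$ denotes the target atom of $t(\varepsilon)$.

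With this reformulation the two inclusions are short. For $\TREELANG_\rho(\mathcal G_{\AUTOMATON, L_1}) \subseteq {\FORESTS} \bigl(\Next(\SPINES(\mathcal G)) \bigr)_S$, any $t$ with $t(\varepsilon) \in I'$ already lies in $\mathcal{D'}(\mathcal G_{\AUTOMATON, L_1})$ since $I' \subseteq \TOPCATS \cup \TOPCATL$, and its root $(\bot, \varepsilon, f)$ satisfies $\GENERATOR(f) = S$ by the definition of~$I'$; thus $\rho(t)$ lies in the right-hand set. For the converse, I would start from an element $s = \rho(t)$ with $t \in \mathcal{D'}(\mathcal G_{\AUTOMATON, L_1})$ and $\GENERATOR(a_3) = S$, and then appeal to the freedom in the choice of the topmost category that was exhibited in the proof of Lemma~\ref{lem:fnext_in_dg}. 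Since $a_3 \in F \cup L_1$ and $\GENERATOR(a_3) = S$, the atomic category $(\bot, \varepsilon, a_3)$ is a member of~$I'$ (and therefore of $\TOPCATS \cup \TOPCATL$), so that lemma supplies a derivation tree~$t'$ with the same relabeling $\rho(t') = s$ whose root is this initial category; hence $s \in \TREELANG_\rho(\mathcal G_{\AUTOMATON, L_1})$.

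I expect the main obstacle to be the converse inclusion, and specifically making precise that the freedom granted by Lemma~\ref{lem:fnext_in_dg} permits selecting the particular initial category $(\bot, \varepsilon, a_3) \in I'$ uniformly in both regimes: the $\abs{s} = 1$ case, where the choice ranges over $\{ ax \in \TOPCATS \mid a_3 = s\}$, and the $\abs{s} > 1$ case, where it ranges over $\{ ax \in \TOPCATL \mid a_3 = q_n\}$ for the final state~$q_n$ of the run along the main spine. The root-label computation of the second paragraph, though routine, also needs the two meanings of $\GENERATOR$ (on~$Q$ through~$\tau$ and on~$L_1$) to be reconciled explicitly so that $\GENERATOR \bigl(\rho(t)(\varepsilon)\bigr) = \GENERATOR(a_3)$ holds without ambiguity.
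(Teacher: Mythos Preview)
Your proposal is correct and follows essentially the same approach as the paper: both directions rest on Lemmas~\ref{lem:dg_in_next} and~\ref{lem:fnext_in_dg}, together with the observation that the root of a top-of-spine category is relabeled via its third target component, so that $\GENERATOR(a_3) = S$ matches exactly the defining condition for~$I'$.

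One minor simplification the paper makes in the converse inclusion: rather than first writing $s = \rho(t)$ for some $t \in \mathcal{D'}(\mathcal G_{\AUTOMATON,L_1})$ and extracting $a_3$ from that auxiliary~$t$, the paper applies Lemma~\ref{lem:fnext_in_dg} directly to $s \in {\FORESTS}\bigl(\Next(\SPINES(\mathcal G))\bigr)_S$, obtaining a final state~$f$ (or the single symbol, if $\abs{s}=1$) from the lemma itself and then choosing $t'(\varepsilon) = (\bot,\varepsilon,f)$. This sidesteps exactly the concern you flag in your last paragraph, namely whether your $a_3$ coincides with the lemma's~$q_n$; by letting the lemma supply the final state, one only needs $\GENERATOR(f) = S$, which follows immediately from the spine generator of the main spine of~$s$. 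Your route works too, but the paper's is slightly cleaner here.
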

\begin{proof}
  Recall that~$\TREELANG_{\rho}(\mathcal G_{\AUTOMATON, L_1}) = \{
  \rho(t) \in \TREES{\Delta,\emptyset}(\Delta) \mid t \in
  \DEVTREES(\mathcal G_{\AUTOMATON, L_1}),\, t(\varepsilon) \in I'\}$,
  which yields ${\TREELANG}_\rho(\mathcal G_{\AUTOMATON, 
    L_1}) \subseteq \rho \bigl(\mathcal{D'}(\mathcal G_{\AUTOMATON,
    L_1}) \bigr)$.  Also recall that the initial atomic
  categories of~$\mathcal G_{\AUTOMATON,  L_1}$ are~$I' = \{ (\bot,
  \varepsilon, f) \in A \mid \GENERATOR(f) = S\}$.
  Let $t \in
  {\FORESTS} \bigl(\Next(\SPINES(\mathcal G)) \bigr)_S$.  By
  Lemma~\ref{lem:fnext_in_dg}, there is a tree~$t' \in
  \DEVTREES(\mathcal G_{\AUTOMATON,  L_1})$ with~$\rho(t') = t$, whose
  root category can be any category from $\{ ax \in \TOPCATL \mid a_3
  = f\}$, where $f$~is the final state of an accepting run for the
  main spine of~$t$, or from $\{ax \in \TOPCATS \mid a_3 = t\}$, if
  $t$~consists of a single node.  Hence we can select the
  category~$t'(\varepsilon) = (\bot, \varepsilon, f)$ in
  the former and~$t'(\varepsilon) = (\bot, \varepsilon,
  t)$ in the latter case.
  Since both of these categories are
  initial, we obtain $\rho(t') \in {\TREELANG}_\rho(\mathcal G_{\AUTOMATON,
    L_1})$.

Now let~$t \in {\TREELANG}_\rho(\mathcal G_{\AUTOMATON, L_1})$.
Hence there is a tree $t'\in \DEVTREES(\mathcal G_{\AUTOMATON, L_1})$
with $\rho(t') = t$ and $t'(\varepsilon) \in
I' = \{ (\bot, \gamma, f) \in A \mid \GENERATOR(f) = S\}$.
By Lemma~\ref{lem:dg_in_next}, we also have $t
\in {\FORESTS} \bigl(\Next(\SPINES(\mathcal G)) \bigr)$.  Because
$\rho \bigl(t'(\varepsilon) \bigr) = t(\varepsilon) = f$ with~$\GENERATOR(f) = S$,
we obtain $t \in {\FORESTS} \bigl(\Next(\SPINES(\mathcal G)) \bigr)_S$. 
\end{proof}

Together with Corollary~\ref{cor:spines},
this concludes the proof of the following main theorem.

\begin{theorem}
  \label{theorem:spine_grammar_to_ccg}
  Given a spine grammar~$\mathcal G$, we can construct a CCG~$\mathcal
  G'$ that can generate $\TREELANG(\mathcal G)$.
\end{theorem}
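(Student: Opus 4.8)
The plan is to assemble the main theorem directly from the chain of results already established, since all the substantive content resides in the preceding lemmas; the statement itself is essentially a bookkeeping composition of three facts. First I would discharge the hidden normalization assumption: by Theorem~\ref{thm:normalform} every spine grammar has a strongly equivalent normalized spine grammar, so without loss of generality the given~$\mathcal{G}$ may be taken to be normalized, and $\TREELANG(\mathcal{G})$ is unchanged by this replacement. With $\mathcal{G}$ normalized, Corollary~\ref{cor:spines} supplies a pop-normalized MPDA~$\AUTOMATON$ together with the set of length-one spines~$L_1$ such that $L(\AUTOMATON) \cup L_1 = \Next(\SPINES(\mathcal{G}))$, and it further provides a relabeling~$\pi'$ with $\pi'\bigl(\FORESTS(L(\AUTOMATON) \cup L_1)_S\bigr) = \TREELANG(\mathcal{G})$. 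The grammar I would propose is then $\mathcal{G}' = \mathcal{G}_{\AUTOMATON, L_1}$ from Definition~\ref{df:simccg}, which, as one reads off from that construction, is a $2$-CCG with first-order categories and no $\varepsilon$-entries.

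Next I would invoke the immediately preceding lemma, which states the identity $\FORESTS\bigl(\Next(\SPINES(\mathcal{G}))\bigr)_S = \TREELANG_\rho(\mathcal{G}_{\AUTOMATON, L_1})$ for the category relabeling~$\rho$ fixed in the construction. Since $\Next(\SPINES(\mathcal{G})) = L(\AUTOMATON) \cup L_1$, substituting this lemma into the relabeling guarantee of Corollary~\ref{cor:spines} yields
\[
  \pi'\bigl(\TREELANG_\rho(\mathcal{G}_{\AUTOMATON, L_1})\bigr)
  = \pi'\bigl(\FORESTS(L(\AUTOMATON) \cup L_1)_S\bigr)
  = \TREELANG(\mathcal{G}).
\]
Because applying the tree relabeling~$\pi'$ to the relabeled derivation trees is the same as relabeling through the composite, we have $\pi'\bigl(\TREELANG_\rho(\mathcal{G}')\bigr) = \TREELANG_{\pi' \circ \rho}(\mathcal{G}')$, whence $\TREELANG_{\pi' \circ \rho}(\mathcal{G}') = \TREELANG(\mathcal{G})$. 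This exhibits $\TREELANG(\mathcal{G})$ as generatable by~$\mathcal{G}'$, which is exactly the claim.

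The one point requiring genuine care — and the step I would treat as the main obstacle — is verifying that $\pi' \circ \rho$ qualifies as a \emph{category relabeling} in the sense of the definition, i.e.\ that it satisfies $\rho'(c) = \rho'(c')$ whenever $\TARGET(c) = \TARGET(c')$ and the last arguments of~$c$ and~$c'$ coincide. This holds because $\rho$ already respects that constraint and post-composing with the ordinary tree relabeling~$\pi'$ can only preserve equalities among labels; the delicate part is confirming that the ranked alphabets line up, namely that the codomain $\Delta = \Sigma' \times \Sigma''$ of~$\rho$, split into its binary part $A'_2 \times (\Sigma_2 \times N^2)$ and nullary part $A'_2 \times (\Sigma_0 \times N)$, is precisely the tree alphabet on which~$\pi'$ acts before projecting down to $\Sigma_0 \cup \Sigma_2$. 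Once this alignment is checked, the left-hand side $\pi'\bigl(\TREELANG_\rho(\mathcal{G}')\bigr)$ is well defined irrespective of the relabeling status, and the identification with $\TREELANG_{\pi' \circ \rho}(\mathcal{G}')$ is immediate, so the theorem follows.
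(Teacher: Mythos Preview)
Your proposal is correct and follows essentially the same approach as the paper, which proves the theorem in a single sentence by citing Corollary~\ref{cor:spines} together with the immediately preceding lemma. You are simply more explicit: you spell out the normalization step (which the paper has tacitly assumed since the start of Section~\ref{sec:constructingccg}) and the verification that the composite $\pi' \circ \rho$ is a category relabeling, points the paper leaves implicit.
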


\section{Strong Equivalence}

In this section we will show that CCG and TAG are strongly equivalent
modulo relabeling. We will also cover the implications regarding the role
of $\varepsilon$-entries, rule degree, and the use of first"-order categories.

For the converse inclusion of Theorem~\ref{theorem:spine_grammar_to_ccg}
we utilize a result by Kuhlmann, Maletti, and
Schiffer~\cite[Theorem~29]{kuhmalsch2021}.
It states that for every
CCG~$\mathcal G$ there exists an sCFTG that generates the rule trees
of~$\mathcal G$.  While derivation trees are labeled by categories,
\emph{rule trees} are labeled by lexicon entries at leaves and by
applied rules (instead of the output category) at inner nodes.   
Rule trees are a natural encoding of derivation trees using only a
finite set of labels.
As each rule indicates the target and last
argument of its output category, rule trees can be relabeled in the
same manner as derivation trees.
For completeness' sake we restate
the following definition~\cite[Definition 22]{kuhmalsch2021}.

\begin{definition}
  Let $\mathcal G = (\Sigma, A, R, I, {\LEXICON})$ be a \CCG\@ and
  $\mathrm T = \TREES{R,\emptyset}(\LEXICON(\Sigma))$.  A tree $t \in 
  \mathrm T$ is a \emph{rule tree} if $\type(t) \in I$, where
  the partial map $\mathord{\type} \colon \mathrm T \to \CATEGORIES A$
  is inductively defined by 
  \begin{enumerate}[label=(\roman*)]
  \item
  $\type(a) = a$ for all lexicon entries
  $a \in {\LEXICON}(\Sigma)$,
  \item
  $\type \bigg(\cfrac{axy}{ax
    \SLASHF b \quad by}(t_1, t_2) \bigg) = a z y$ for all
  trees $t_1, t_2 \in \mathrm T$ with $\type(t_1) = a z
  \SLASHF b$ and $\type(t_2) = by$, and
  \item
  $\type \bigg(
  \cfrac{axy}{by \quad ax \SLASHB b}(t_1, t_2) \bigg) =
  a z y$ for all trees $t_1, t_2 \in \mathrm T$ with
  $\type(t_1) = by$ and $\type(t_2) = a z \SLASHB b$.
  \end{enumerate}
  The set of all rule trees of~$\mathcal G$ is denoted by~$\mathcal
  R(\mathcal G)$.
\end{definition}

This leads us to the second main theorem.

\begin{theorem}
  CCGs and sCFTGs are strongly equivalent up to relabeling.
\end{theorem}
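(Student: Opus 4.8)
The plan is to prove the theorem by assembling the results already established, since both inclusions have effectively been settled in the technical sections. Strong equivalence up to relabeling asserts that the class of tree languages generatable by CCGs and the class generated by sCFTGs agree once we identify tree languages that are relabelings of one another, so I would treat the two inclusions separately.

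For the inclusion of sCFTG tree languages in the CCG"~generatable ones, I would start from an arbitrary sCFTG $\mathcal{G}$ with tree language $T(\mathcal{G})$. Recall that each sCFTG is equivalent modulo relabeling to a spine grammar, so there is a spine grammar $\mathcal{G}'$ and a relabeling $\eta$ with $T(\mathcal{G}) = \{\eta(t) \mid t \in T(\mathcal{G}')\}$. Theorem~\ref{theorem:spine_grammar_to_ccg} then supplies a CCG $\mathcal{G}''$ that can generate $T(\mathcal{G}')$, i.e., a category relabeling $\rho'$ with $\TREELANG_{\rho'}(\mathcal{G}'') = T(\mathcal{G}')$. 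Since the defining constraint on category relabelings (dependence only on target and last argument) is preserved under post-composition with an arbitrary relabeling, the composite $\eta \circ \rho'$ is again a category relabeling, and $\TREELANG_{\eta \circ \rho'}(\mathcal{G}'') = T(\mathcal{G})$. Hence $T(\mathcal{G})$ is generatable by $\mathcal{G}''$.

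For the converse inclusion I would invoke the result of Kuhlmann, Maletti, and Schiffer~\cite[Theorem~29]{kuhmalsch2021}, which for any CCG $\mathcal{G}$ produces an sCFTG generating exactly the rule trees $\mathcal{R}(\mathcal{G})$. Given a category relabeling $\rho$ for $\mathcal{G}$, I would define a matching relabeling $\bar\rho$ on the rule"~tree alphabet $R \cup \LEXICON(\Sigma)$: because each rule determines the target and last argument of its output category, and $\rho$ depends only on these data, $\bar\rho$ is well"~defined and satisfies $\bar\rho(\mathcal{R}(\mathcal{G})) = \TREELANG_\rho(\mathcal{G})$. Thus the CCG tree language $\TREELANG_\rho(\mathcal{G})$ is a relabeling of the sCFTG tree language, as required; and since sCFTG tree languages are themselves closed under relabeling, one may even obtain an sCFTG generating $\TREELANG_\rho(\mathcal{G})$ on the nose.

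The main obstacle is not a new construction — the two substantive steps, namely spine grammar to CCG (Theorem~\ref{theorem:spine_grammar_to_ccg}) and CCG to rule"~tree sCFTG, are already available — but rather the bookkeeping of relabelings. Concretely, I expect the care to lie in verifying that $\eta \circ \rho'$ remains a legitimate category relabeling in the first direction, and in checking for the second direction that a rule node really records enough information (the target and last argument of its output category) for $\bar\rho$ to reproduce $\rho$ on the associated derivation tree. Once these compatibility checks are in place, the theorem follows immediately by combining the cited results.
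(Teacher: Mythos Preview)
Your proposal is correct and follows essentially the same approach as the paper: both directions rely on Theorem~\ref{theorem:spine_grammar_to_ccg} for sCFTG~$\to$~CCG (via conversion to a spine grammar) and on \cite[Theorem~29]{kuhmalsch2021} for CCG~$\to$~sCFTG (via rule trees). You are somewhat more explicit than the paper about the relabeling bookkeeping---composing $\eta \circ \rho'$ in one direction and defining $\bar\rho$ on the rule-tree alphabet in the other---but this is exactly the care the paper's terser proof implicitly requires, and your identification of these compatibility checks as the only remaining work is accurate.
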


\begin{proof}
  Let~$\mathcal G$ be a CCG. Then its rule tree
  language~$\RULETREES(\mathcal G)$ can be generated by an
  sCFTG~$\mathcal G'$~\cite[Theorem~29]{kuhmalsch2021}.
  The tree language~$\TREELANG_\rho(\mathcal G)$ accepted by~$\mathcal G$
  is the set of derivation trees~$\DEVTREES(\mathcal G)$ relabeled
  by~$\rho$.  The relabeling~$\rho$ can be transferred to the rule tree
  language~$\RULETREES(\mathcal G)$ since it only depends on the
  target and the last argument of each category, which can both be
  figured out by looking at the output category of the rule label of
  the respective rule tree node.  Conversely, given an sCFTG~$\mathcal
  G$ we can first convert it into an equivalent spine grammar and then
  construct a CCG that is equivalent (up to relabeling) to~$\mathcal
  G$ by Theorem~\ref{theorem:spine_grammar_to_ccg}.
\end{proof}

Kepser and Rogers~\cite{kepser2011equivalence} proved that TAGs and sCFTGs are strongly
equivalent, which shows that they are also strongly equivalent (up to
relabeling) to CCGs. 

\begin{corollary}
  CCGs and TAGs are strongly equivalent up to relabeling.
\end{corollary}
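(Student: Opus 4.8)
The plan is to obtain the corollary by transitively composing the two strong-equivalence results established immediately above. The preceding theorem shows that CCGs and sCFTGs are strongly equivalent up to relabeling, i.e., the family of tree languages generatable by CCGs and the family generatable by sCFTGs agree modulo relabelings. Kepser and Rogers~\cite{kepser2011equivalence} show that TAGs and sCFTGs generate exactly the same tree languages. I would therefore argue that strong equivalence up to relabeling is preserved under composition with an exact strong equivalence, and conclude CCG--TAG equivalence modulo relabeling.

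First I would fix the two directions. For the passage from CCG to TAG, take an arbitrary CCG~$\mathcal G$. By the preceding theorem (via~\cite[Theorem~29]{kuhmalsch2021}) there is an sCFTG~$\mathcal G'$ with $T(\mathcal G') = \RULETREES(\mathcal G)$, and the relabeling~$\rho$ transferred to rule trees satisfies $\rho\bigl(T(\mathcal G')\bigr) = \rho\bigl(\RULETREES(\mathcal G)\bigr) = \TREELANG_\rho(\mathcal G)$. By Kepser and Rogers there is a TAG~$\mathcal H$ with $T(\mathcal H) = T(\mathcal G')$ \emph{exactly}, so the same relabeling gives $\rho\bigl(T(\mathcal H)\bigr) = \TREELANG_\rho(\mathcal G)$, witnessing that $\mathcal H$ generates the CCG tree language up to relabeling. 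For the converse, take a TAG~$\mathcal H$; Kepser and Rogers yield a strongly equivalent sCFTG~$\mathcal G'$, which I convert to a spine grammar (equivalent modulo relabeling, as noted after the sCFTG definition) and then, by Theorem~\ref{theorem:spine_grammar_to_ccg}, to a CCG generating $T(\mathcal G') = T(\mathcal H)$ up to relabeling.

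The essential work has already been carried out in Theorem~\ref{theorem:spine_grammar_to_ccg} and its supporting lemmas, so no genuinely hard mathematical step remains; the proof is pure transitivity. The only point requiring care is the bookkeeping of relabelings: I must check that composing a relabeling arising from the CCG--sCFTG correspondence with the identity relabeling coming from the exact Kepser--Rogers equivalence again yields a relabeling, that is, a rank-preserving tuple~$(\rho_0, \rho_1, \rho_2)$, and that the sCFTG-to-spine-grammar step---itself only up to relabeling---can be absorbed into a single overall relabeling without disturbing the matching of tree languages. Since relabelings are closed under composition and the binary-symbol copying used to pass between sCFTGs and spine grammars is exactly such a relabeling, this bookkeeping goes through and the corollary follows.
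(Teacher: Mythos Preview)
Your proposal is correct and follows essentially the same approach as the paper: the corollary is obtained by transitivity, combining the preceding CCG--sCFTG strong equivalence (up to relabeling) with the exact TAG--sCFTG strong equivalence of Kepser and Rogers~\cite{kepser2011equivalence}. The paper states this in a single sentence without spelling out the relabeling bookkeeping, but your more detailed treatment of that point is sound and does not diverge from the intended argument.
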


Clearly, from strong equivalence we can conclude weak equivalence as
well (without the relabeling since the lexicon provides the
relabeling).  Weak equivalence was famously proven by
Vijay"-Shanker and Weir~\cite{vijayshanker1994equivalence}, but Theorem~3~of
Kuhlmann, Koller, and Satta~\cite{kuhlmann2015lexicalization} shows that
the original construction is incorrect.  However,
Weir~\cite{weir1988characterizing} provides an alternative construction
and proof.  Our contribution provides a stronger form (and proof) of
this old equivalence result.  It avoids the $\varepsilon$"~entries
that the original construction heavily relies on.  An
$\varepsilon$"~entry is a category assigned to the empty string; these
interspersed categories form the main building block in the original
constructions.  The necessity of these
$\varepsilon$"~entries~\cite{vijayshanker1994equivalence} is an
interesting and important question that naturally arises and has been
asked by Kuhlmann, Koller, and Satta~\cite{kuhlmann2015lexicalization}.
We settle this question and demonstrate that they can be avoided. 

\begin{corollary}
  CCGs and TAGs are weakly equivalent. Moreover, CCGs with
  $\varepsilon$"~entries and CCGs generate the same
  ($\varepsilon$"~free) languages.
\end{corollary}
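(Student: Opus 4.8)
The plan is to derive both claims from the strong equivalence proved above, together with the classical weak equivalence of TAG and CCG \emph{with} $\varepsilon$"~entries. For the first claim I would take yields on both sides of the strong equivalence and observe that on the string level no separate tree relabeling is needed, because the lexicon already performs it. Concretely, to place every TAG string language among the CCG string languages, start from a TAG, pass to a strongly equivalent sCFTG~\cite{kepser2011equivalence}, then to a spine grammar~$\mathcal G$, and apply Theorem~\ref{theorem:spine_grammar_to_ccg} to obtain the CCG~$\mathcal G' = \mathcal G_{\AUTOMATON, L_1}$, whose relabeled derivation trees $\TREELANG_{\rho}(\mathcal G')$ relabel to $\TREELANG(\mathcal G)$. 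The decisive observation is that on a lexical (leaf) category the relabeling~$\rho$ agrees with the input symbol the lexicon attaches to it: for $ax \in \TOPCATS$ one has $ax \in \LEXICON(\alpha)$ exactly when $\rho(ax) = \tau'(a_3) = \alpha$, and for a primary lexical category $ax \,\SLASH\, b$ one has $ax \,\SLASH\, b \in \LEXICON(\alpha)$ exactly when $\rho(ax \,\SLASH\, b) = \tau'(b_1) = \alpha$. Hence $\yield(\TREELANG_{\rho}(\mathcal G')) = \STRINGLANG(\mathcal G')$, and folding the (nullary) projection onto the terminals of~$\mathcal G$ into the lexicon yields a CCG whose string language is exactly $\yield(\TREELANG(\mathcal G))$.

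For the converse inclusion I would use that the rule trees of an arbitrary CCG~$\mathcal G$ are generated by an sCFTG~\cite[Theorem~29]{kuhmalsch2021}. Reading the input string off a rule tree amounts to replacing each leaf (a lexicon category) by an input symbol of its lexicon preimage; as this is a finite substitution on the nullary terminals, it keeps the grammar an sCFTG whose yield language is precisely $\STRINGLANG(\mathcal G)$. By the strong equivalence of sCFTG and TAG~\cite{kepser2011equivalence} this is a TAG string language, which together with the first inclusion gives weak equivalence of CCG and TAG.

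For the second claim I would combine this with the classical weak equivalence of TAG and CCG \emph{with} $\varepsilon$"~entries, first claimed by Vijay"-Shanker and Weir~\cite{vijayshanker1994equivalence} and correctly established by Weir~\cite{weir1988characterizing}. Since the CCG produced by Theorem~\ref{theorem:spine_grammar_to_ccg} uses no $\varepsilon$"~entries, every TAG string language --- and hence, by the classical result, every $\varepsilon$"~free string language of a CCG with $\varepsilon$"~entries --- is already generated by an $\varepsilon$"~free CCG. The reverse inclusion is immediate, as every $\varepsilon$"~free CCG is a special CCG with $\varepsilon$"~entries. Thus the two classes coincide on $\varepsilon$"~free languages.

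The bulk of the work is bookkeeping rather than a single hard step. The main point to verify carefully is that the lexicon"-induced symbol assignment on leaves coincides with~$\rho$, so that the CCG string language equals the yield of its relabeled derivation trees without any additional relabeling; and, in the converse direction, that the inverse lexicon --- which is in general a finite substitution rather than a map --- can be absorbed into an sCFTG relabeling of the nullary terminals. Once these alphabet"-level identifications are pinned down, both statements follow immediately from the strong equivalence proved above and the classical $\varepsilon$"~CCG/TAG equivalence.
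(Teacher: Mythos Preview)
Your argument is correct, and for the first claim it coincides with the paper's: weak equivalence is read off from the strong equivalence established in the preceding corollary, with the lexicon absorbing the leaf-level relabeling. Your careful check that~$\rho$ agrees on lexical categories with the lexicon's input symbol is more detail than the paper gives, but the idea is identical.

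For the second claim you take a genuinely different route. You invoke the classical weak equivalence of TAG and CCG \emph{with} $\varepsilon$"~entries (Weir~\cite{weir1988characterizing}) as a black box: a CCG with $\varepsilon$"~entries yields a TAG string language, which by Theorem~\ref{theorem:spine_grammar_to_ccg} is already a CCG language without $\varepsilon$"~entries. The paper instead stays self-contained: it applies the rule-tree construction of~\cite[Theorem~29]{kuhmalsch2021} to the CCG with $\varepsilon$"~entries to obtain a spine grammar, performs a standard $\varepsilon$"~removal on that spine grammar, and then converts back via Theorem~\ref{theorem:spine_grammar_to_ccg}. Your approach is shorter but imports an external theorem whose original proof the paper itself flags as having been problematic; the paper's approach avoids that dependency entirely and keeps the whole chain of constructions within the sCFTG/spine-grammar framework developed here. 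Either argument is valid, but the paper's is the more robust one given the history of the classical result.
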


\begin{proof}
  The weak equivalence of CCG and TAG is clear from the previous
  corollary.  Similarly, each $\varepsilon$"~free language generated
  by a CCG can trivially also be generated by a CCG with
  $\varepsilon$"-entries.  For the converse direction, let $\mathcal
  G$ be a CCG with $\varepsilon$"-entries.  We convert it into a spine
  grammar~$\mathcal G'$ in normal form accepting the rule tree
  language of~$\mathcal G$.  A standard $\varepsilon$"-removal construction
  yields a weakly equivalent spine grammar~$\mathcal G''$, which can
  be converted into a strongly equivalent CCG (up to relabeling).
  This constructed CCG accepts the same $\varepsilon$-free string
  language as the original CCG~$\mathcal G$ that utilized
  $\varepsilon$"-entries.
\end{proof}

The tree expressive power of CCGs with restricted rule degrees has
already been investigated by Kuhlmann, Maletti, and
Schiffer~\cite{kuhlmann2019treegenerative,kuhmalsch2021}.  It
has been shown that 0"~CCGs accept a proper subset of the regular tree
languages~\cite{gecste97}, whereas 1"~CCGs accept exactly the regular
tree languages.  It remained open whether there is a~$k$ such that
$k$"~CCGs and $(k+1)$"~CCGs have the same expressive power.  Our
construction establishes that $2$"~CCGs are as expressive as $k$"~CCGs
for arbitrary $k \geq 2$.  Another consequence of our construction is
that first-order categories are sufficient.

\begin{corollary}
  \label{cor:2ccg}
  \label{cor:firstorder}
  $2$"~\CCG{s} with first-order categories have the same expressive
  power as $k$"~\CCG{s} with $k > 2$.
\end{corollary}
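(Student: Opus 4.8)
The plan is to establish the two inclusions between the class of tree languages generatable by $2$"~CCGs with first-order categories and the class generatable by $k$"~CCGs with $k > 2$. One inclusion is immediate: every $2$"~CCG with first-order categories is in particular a $k$"~CCG for every $k \geq 2$, since all of its rules have degree at most $2 \leq k$ and nothing in the definition of a $k$"~CCG forbids the use of first-order categories. Hence every tree language generatable by a $2$"~CCG with first-order categories is already generatable by a $k$"~CCG, and this direction requires no construction.

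For the converse inclusion I would start from an arbitrary $k$"~CCG $\mathcal G$ and chase it through the equivalences established above. By \cite[Theorem~29]{kuhmalsch2021} the rule tree language $\RULETREES(\mathcal G)$ is generated by an sCFTG, and as in the proof of strong equivalence of CCGs and sCFTGs the relabeling witnessing $\TREELANG_\rho(\mathcal G)$ transfers to rule trees. Converting this sCFTG into a spine grammar (modulo relabeling) and feeding it to the construction of Theorem~\ref{theorem:spine_grammar_to_ccg} yields a CCG $\mathcal G'$ that generates $\TREELANG_\rho(\mathcal G)$ up to relabeling. The crucial point is then to observe that the CCG built in Definition~\ref{df:simccg} is \emph{already} a $2$"~CCG with first-order categories, so that no additional reduction of rule degree or of category order is ever needed.

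Verifying this observation is the only real work, and it is routine inspection rather than a genuine difficulty. For the degree bound, the rule set $R = \bigcup_{\SLASH \in D}(R_1^{\SLASH} \cup R_2^{\SLASH} \cup R_3^{\SLASH})$ contains only rules whose secondary categories are $b$, $b \SLASH c$, or $b \SLASH c \SLASH' e$, i.e.\ categories with zero, one, or two arguments; hence every rule has degree at most $2$. For the first-order property, recall the general observation following Definition~\ref{df:simccg} that all categories occurring in derivation trees of $\mathcal G'$ are well-formed, and that by the definition of $\RESTRCAT$ every argument $\SLASH\, b$ of a well-formed category satisfies $b \in A$, where $A$ consists of atoms $(q, \gamma, f)$. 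Thus no argument carries arguments of its own, and all categories are first-order. Together with the trivial inclusion, this shows that $2$"~CCGs with first-order categories and $k$"~CCGs generate exactly the same tree languages, which is the claim; the only mild obstacle is the bookkeeping of confirming these two syntactic invariants uniformly across the three rule families and all categories reachable in derivations, which the general observations following Definition~\ref{df:simccg} already largely supply.
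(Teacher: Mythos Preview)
Your proposal is correct and follows essentially the same route as the paper's own proof: pass from the given $k$"~CCG to an sCFTG generating its rule trees, then back through the spine-grammar-to-CCG construction of Theorem~\ref{theorem:spine_grammar_to_ccg}, and observe that the resulting CCG already has rule degree at most~$2$ and first-order categories. You supply more explicit justification for the latter observation than the paper does (which simply says ``by construction''), but the overall argument is the same.
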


\begin{proof}
  We only argue the nontrivial inclusion.  Let $\mathcal G$~be a CCG
  whose categories have arbitrary order and whose rules have degree at
  most~$k$.  We construct the sCFTG~$\mathcal G'$ generating the rule
  tree language~$\RULETREES(\mathcal G)$.  Next, we construct the
  CCG~$\mathcal G''$ that generates the same tree language
  as~$\mathcal G'$ modulo relabeling.  By construction, $\mathcal
  G''$~uses only first"-order categories and rules with rule-degree at
  most~$2$.  As already argued, the rule trees can be relabeled to the
  relabeled tree language generated by~$\mathcal G$.
\end{proof}

\section{Conclusion}
We presented a translation from spine grammar to CCG.  Due to
  the strong equivalence of spine grammar and
  TAG~\cite{kepser2011equivalence}, we can also construct a strongly
  equivalent CCG for each TAG.  Together with the translation from CCG
  to sCFTG~\cite{kuhlmann2019treegenerative,kuhmalsch2021}, this proves the strong 
equivalence of TAG and CCG, which means that both formalisms generate
the same derivation trees modulo relabeling.
Our construction uses CCG rules of degree at most~$2$, only
first-order categories, lexicon entries of arity at most~$3$, and no
$\varepsilon$-entries in the lexicon.  Such CCGs thus have full
expressive power.  Avoiding $\varepsilon$-entries is particularly
interesting because they violate the Principle of
Adjacency~\cite[p.~54]{steedman2000syntactic}, which is a fundamental
linguistic principle underlying CCG and requires that all combining
categories correspond to phonologically realized counterparts in the
input and are string-adjacent.  Their elimination is performed by
trimming them from the sCFTG obtained from a CCG with
$\varepsilon$-entries and translating the trimmed sCFTG back to a CCG
using our construction.

Translating CCG to sCFTG~\cite{kuhlmann2019treegenerative,kuhmalsch2021}
yields sCFTGs whose size is exponential in a CCG-specific constant,
which depends on the maximal arity of secondary categories and of lexicon entries.
Note that the maximal arity of secondary categories can be higher than the
rule degree because it is also affected by the
maximal arity of categories in lexical arguments.
Our construction increases the grammar size only polynomially, which can be verified
for each step.  Overall, a $k$-CCG can be converted to an equivalent
$2$-CCG without $\varepsilon$-entries in time and space exponential
in the grammar"-specific constant and polynomial in the size of the grammar.
However, we expect that the construction can be improved to be
exponential only in the maximum rule degree~$k$, as this runtime can be achieved for
CCG parsing~\cite{schkuhsat22}, which is a task that is closely related.

\bibliographystyle{plain}
\bibliography{references}

\begin{thebibliography}{10}

\bibitem{autberboa97}
Jean-Michel Autebert, Jean Berstel, and Luc Boasson.
\newblock Context-free languages and pushdown automata.
\newblock In Grzegorz Rozenberg and Arto Salomaa, editors, {\em Handbook of
  Formal Languages}, volume~1, chapter~3, pages 111--174. Springer, 1997.

\bibitem{bal02}
Jason Baldridge.
\newblock {\em Lexically Specified Derivational Control in Combinatory
  Categorial Grammar}.
\newblock PhD thesis, University of Edinburgh, 2002.

\bibitem{bargaisha60}
Yehoshua Bar-Hillel, Haim Gaifman, and Eli Shamir.
\newblock On categorial and phrase-structure grammars.
\newblock {\em Bulletin of the Research Council of Israel}, 9F(1):1--16, 1960.

\bibitem{ber79}
Jean Berstel.
\newblock {\em Transductions and Context-Free Languages}, volume~38 of {\em
  Leitf\"aden der angewandten Mathematik und Mechanik}.
\newblock B. G. Teubner, 1979.

\bibitem{curry1958combinatory}
Haskell~B. Curry, Robert Feys, and William Craig.
\newblock {\em Combinatory Logic}, volume~22 of {\em Studies in Logic and the
  Foundations of Mathematics}.
\newblock North-Holland, 1958.

\bibitem{decleutho13}
Normann Decker, Martin Leucker, and Daniel Thoma.
\newblock Impartiality and anticipation for monitoring of visibly context-free
  properties.
\newblock In Axel Legay and Saddek Bensalem, editors, {\em Runtime
  Verification:\@ 4th Int.\@ Conf.\@ Proc.}, volume 8174 of {\em {LNCS}}, pages
  183--200. Springer, 2013.

\bibitem{drodzikui19}
Manfred {Droste}, Sven {Dziadek}, and Werner {Kuich}.
\newblock Weighted simple reset pushdown automata.
\newblock {\em Theoret. Comput. Sci.}, 777:252--259, 2019.

\bibitem{fleischner1977equivalence}
Herbert Fleischner.
\newblock On the equivalence of {Mealy}-type and {Moore}-type automata and a
  relation between reducibility and {Moore}-reducibility.
\newblock {\em J. Comput.\@ System Sci.}, 14(1):1--16, 1977.

\bibitem{fujiyoshi2000spinal}
Akio Fujiyoshi and Takumi Kasai.
\newblock Spinal-formed context-free tree grammars.
\newblock {\em Theory Comput.\@ Syst.}, 33(1):59--83, 2000.

\bibitem{gecste97}
Ferenc G{\'e}cseg and Magnus Steinby.
\newblock Tree languages.
\newblock In Grzegorz Rozenberg and Arto Salomaa, editors, {\em Handbook of
  Formal Languages}, volume~3, chapter~1, pages 1--68. Springer, 1997.

\bibitem{gorn1965explicit}
Saul Gorn.
\newblock Explicit definitions and linguistic dominoes.
\newblock In J.~Hart and S.~Takasu, editors, {\em Proc.\@ Systems and Computer
  Science Conf.}, pages 77--115. University of Toronto Press, 1965.

\bibitem{hockenmaier2008nonlocal}
Julia Hockenmaier and Peter Young.
\newblock Non-local scrambling: The equivalence of {TAG} and {CCG} revisited.
\newblock In Claire Gardent and Anoop Sarkar, editors, {\em Proc.\@ 9th Int.\@
  Workshop on Tree Adjoining Grammar and Related Frameworks ({TAG}+)}, pages
  41--48. ACL, 2008.

\bibitem{hopull79}
John~E. Hopcroft and Jeffrey~D. Ullman.
\newblock {\em Introduction to Automata Theory, Languages and Computation}.
\newblock Addison Wesley, 1979.

\bibitem{jos85}
Aravind~K. Joshi.
\newblock Tree adjoining grammars: How much context-sensitivity is required to
  provide reasonable structural descriptions?
\newblock In David~R. Dowty, Lauri Karttunen, and Arnold~M. Zwicky, editors,
  {\em Natual Language Parsing}, chapter~6, pages 206--250. Cambridge
  University Press, 1985.

\bibitem{kepser2011equivalence}
Stephan Kepser and Jim Rogers.
\newblock The equivalence of tree adjoining grammars and monadic linear
  context-free tree grammars.
\newblock {\em J. Log.\@ Lang.\@ Inf.}, 20(3):361--384, 2011.

\bibitem{koller2009dependency}
Alexander Koller and Marco Kuhlmann.
\newblock Dependency trees and the strong generative capacity of {CCG}.
\newblock In Alex Lascarides, Claire Gardent, and Joakim Nivre, editors, {\em
  Proc. 12th Conf. of the European Chapter of the Association for Computational
  Linguistics (EACL)}, pages 460--468. ACL, 2009.

\bibitem{kuhlmann2010importance}
Marco Kuhlmann, Alexander Koller, and Giorgio Satta.
\newblock The importance of rule restrictions in {CCG}.
\newblock In J.~Haji\v{c}, S.~Carberry, S.~Clark, and J.~Nivre, editors, {\em
  Proc.\@ 48th Annual Meeting of the Association for Computational Linguistics
  (ACL)}, pages 534--543. ACL, 2010.

\bibitem{kuhlmann2015lexicalization}
Marco Kuhlmann, Alexander Koller, and Giorgio Satta.
\newblock Lexicalization and generative power in {CCG}.
\newblock {\em Comput.\@ Linguist.}, 41(2):187--219, 2015.

\bibitem{kuhlmann2019treegenerative}
Marco Kuhlmann, Andreas Maletti, and Lena~K. Schiffer.
\newblock The tree-generative capacity of combinatory categorial grammars.
\newblock In A.~Chattopadhyay and P.~Gastin, editors, {\em Proc.\@ 39th Annual
  Conf. Foundations of Software Technology and Theoretical Computer Science},
  volume 150 of {\em LIPIcs}, pages 44:1--44:14. Schloss Dagstuhl ---
  Leibniz-Zentrum f\"ur Informatik, 2019.

\bibitem{kuhmalsch2021}
Marco Kuhlmann, Andreas Maletti, and Lena~K. Schiffer.
\newblock The tree-generative capacity of combinatory categorial grammars.
\newblock {\em J. Comput.\@ System Sci.}, 124:214--233, 2022.

\bibitem{kuhsat12}
Marco Kuhlmann and Giorgio Satta.
\newblock {Tree-Adjoining Grammars Are Not Closed Under Strong Lexicalization}.
\newblock {\em Comput.\@ Linguist.}, 38(3):617--629, 09 2012.

\bibitem{kuhsatjon18}
Marco Kuhlmann, Giorgio Satta, and Peter Jonsson.
\newblock On the complexity of {CCG} parsing.
\newblock {\em Comput.\@ Linguist.}, 44(3):447--482, 2018.

\bibitem{maletti2012strong}
Andreas Maletti and Joost Engelfriet.
\newblock Strong lexicalization of tree adjoining grammars.
\newblock In Haizhou Li, Chin-Yew Lin, Miles Osborne, Gary~Geunbae Lee, and
  Jong~C. Park, editors, {\em Proc. 50th Annual Meeting of the Association for
  Computational Linguistics (ACL)}, pages 506--515. ACL, 2012.

\bibitem{rou69}
William~C. Rounds.
\newblock Context-free grammars on trees.
\newblock In Patrick~C. Fischer, Seymour Ginsburg, and Michael~A. Harrison,
  editors, {\em Proc.\@ 1st Annual {ACM} Symposium on Theory of Computing},
  pages 143--148. ACM, 1969.

\bibitem{schkuhsat22}
Lena~K. Schiffer, Marco Kuhlmann, and Giorgio Satta.
\newblock Tractable parsing for {CCGs} of bounded degree.
\newblock {\em Comput.\@ Linguist.}, 2022.
\newblock In Press.

\bibitem{stanojevic2021formal}
Miloš Stanojević and Mark Steedman.
\newblock {Formal Basis of a Language Universal}.
\newblock {\em Comput.\@ Linguist.}, 47(1):9--42, 04 2021.

\bibitem{steedman2000syntactic}
Mark Steedman.
\newblock {\em The Syntactic Process}.
\newblock MIT Press, 2000.

\bibitem{steedman2011combinatory}
Mark Steedman and Jason Baldridge.
\newblock Combinatory categorial grammar.
\newblock In Robert~D. Borsley and Kersti B{\"o}rjars, editors, {\em
  Non-Transformational Syntax: Formal and Explicit Models of Grammar},
  chapter~5, pages 181--224. Blackwell, 2011.

\bibitem{vijayashanker1988study}
Krishnamurti Vijay-Shanker.
\newblock {\em A study of tree adjoining grammars}.
\newblock PhD thesis, University of Pennsylvania, 1988.

\bibitem{vijayshanker1994equivalence}
Krishnamurti Vijay-Shanker and David~J. Weir.
\newblock The equivalence of four extensions of context-free grammars.
\newblock {\em Math.\@ Syst.\@ Theory}, 27(6):511--546, 1994.

\bibitem{weir1988characterizing}
David~J. Weir.
\newblock {\em Characterizing mildly context-sensitive grammar formalisms}.
\newblock PhD thesis, University of Pennsylvania, 1988.

\end{thebibliography}

\end{document}